\tikzstyle{boundary vertex}=[inner sep=0mm, minimum size=1mm, shape=circle, draw=black, fill=black]
\tikzstyle{grey_dot}=[fill={rgb,255: red,191; green,191; blue,191}, draw={rgb,255: red,191; green,191; blue,191}, shape=circle, minimum size=1mm, inner sep=0mm]
\tikzstyle{blue_dot}=[fill={rgb,255: red,202; green,251; blue,255}, draw=black, shape=circle, minimum size=1mm, inner sep=0mm]
\tikzstyle{white_dot}=[fill=white, draw=black, shape=circle, minimum size=1.5mm, inner sep=0mm]
\tikzstyle{arrow}=[->]
\tikzstyle{red_arrow}=[->, draw=red]
\tikzstyle{cyan_arrow}=[->, draw=cyan]
\tikzstyle{red_dash}=[-, dashed, draw=red]
\tikzstyle{grey dash}=[-, fill=none, draw={rgb,255: red,191; green,191; blue,191}, dashed]
\tikzstyle{dashed arrow}=[->, dashed]
\tikzstyle{blue_edge}=[-, draw={rgb,255: red,46; green,126; blue,255}]
\tikzstyle{red_edge}=[-, draw=red]
\tikzstyle{gate}=[shape=rectangle, text height=1.5ex, text depth=0.25ex, yshift=0.5mm, fill=white, draw=black, minimum height=5mm, yshift=-0.5mm, minimum width=5mm, font={\small}, tikzit category=circuit]
\tikzstyle{big gate}=[shape=rectangle, text height=1.5ex, text depth=0.25ex, yshift=0.5mm, fill=white, draw=black, minimum height=18mm, yshift=-0.5mm, minimum width=5mm, font={\small}, tikzit category=circuit]
\tikzstyle{Z dot}=[inner sep=0mm, minimum size=2mm, shape=circle, draw=black, fill={rgb,255: red,221; green,255; blue,221}, tikzit category=zx]
\tikzstyle{Z phase dot}=[minimum size=5mm, font={\footnotesize\boldmath}, shape=rectangle, rounded corners=2mm, inner sep=0.2mm, outer sep=-2mm, scale=0.8, tikzit shape=circle, draw=black, fill={rgb,255: red,221; green,255; blue,221}, tikzit draw=blue, tikzit category=zx]
\tikzstyle{X dot}=[Z dot, shape=circle, draw=black, fill={rgb,255: red,255; green,136; blue,136}, tikzit category=zx]
\tikzstyle{X phase dot}=[Z phase dot, tikzit shape=circle, tikzit draw=blue, fill={rgb,255: red,255; green,136; blue,136}, font={\footnotesize\boldmath}, tikzit category=zx]
\tikzstyle{hadamard}=[fill=yellow, draw=black, shape=rectangle, inner sep=0.6mm, minimum height=1.5mm, minimum width=1.5mm, tikzit category=zx]
\tikzstyle{paulibox}=[fill={rgb,255: red,221; green,221; blue,255}, draw=black, shape=rectangle, inner sep=0.6mm, minimum height=5mm, minimum width=5mm, font={\footnotesize}, text height=1.5ex, text depth=0.25ex, tikzit category=zx]
\tikzstyle{vertex}=[inner sep=0mm, minimum size=1mm, shape=circle, draw=black, fill=black, tikzit category=misc]
\tikzstyle{vertex set}=[inner sep=0mm, minimum size=1mm, shape=circle, draw=black, fill=white, font={\footnotesize\boldmath}, tikzit category=misc]
\tikzstyle{small black dot}=[fill=black, draw=black, shape=circle, inner sep=0pt, minimum width=1.2mm, tikzit category=circuit]
\tikzstyle{cnot ctrl}=[fill=black, draw=black, shape=circle, inner sep=0pt, minimum width=1.2mm, tikzit category=circuit]
\tikzstyle{cnot targ}=[fill=white, draw=white, shape=circle, tikzit category=circuit, label={center:$\oplus$}, inner sep=0pt, minimum width=2.1mm, tikzit fill={rgb,255: red,102; green,204; blue,255}, tikzit draw=black]
\tikzstyle{ket}=[fill=white, draw=black, shape=regular polygon, regular polygon sides=3, regular polygon rotate=-30, scale=0.7, inner sep=1pt, tikzit category=circuit, tikzit shape=rectangle, tikzit fill=green]
\tikzstyle{bra}=[fill=white, draw=black, shape=regular polygon, regular polygon sides=3, regular polygon rotate=30, scale=0.7, inner sep=1pt, tikzit category=circuit, tikzit shape=rectangle, tikzit fill=red]
\tikzstyle{scalar}=[shape=rectangle, text height=1.5ex, text depth=0.25ex, yshift=0.5mm, fill=white, draw=black, minimum height=5mm, yshift=-0.5mm, minimum width=5mm, font={\small}]
\tikzstyle{clabel}=[fill=white, draw=none, shape=rectangle, tikzit fill={rgb,255: red,56; green,255; blue,242}, font={\footnotesize}, inner sep=1pt, tikzit category=labels]
\tikzstyle{empty diagram}=[draw={gray!40!white}, dashed, shape=rectangle, minimum width=1cm, minimum height=1cm, tikzit category=misc]
\tikzstyle{label dot}=[minimum size=5mm, font={\footnotesize\boldmath}, shape=rectangle, rounded corners=2mm, inner sep=0.2mm, outer sep=-2mm, scale=0.8, tikzit shape=circle, draw=black, tikzit draw=blue, tikzit category=zx]
\tikzstyle{hadamard edge}=[-, dashed, dash pattern=on 2pt off 0.5pt, thick, draw={rgb,255: red,68; green,136; blue,255}]
\tikzstyle{box edge}=[-, dashed, dash pattern=on 2pt off 0.5pt, thick, draw={rgb,255: red,203; green,192; blue,225}]
\tikzstyle{brace edge}=[-, tikzit draw=blue, decorate, decoration={brace,amplitude=1mm,raise=-1mm}]
\tikzstyle{diredge}=[->]
\tikzstyle{double edge}=[-, double, shorten <=-1mm, shorten >=-1mm, double distance=2pt]
\tikzstyle{gray edge}=[-, {gray!60!white}]
\tikzstyle{pointer edge}=[->, very thick, gray]
\tikzstyle{boldedge}=[-, line width=1.6pt, shorten <=-0.17mm, shorten >=-0.17mm]
\tikzstyle{rmat}=[draw, signal, fill=gray!30, signal to=east, signal from=west, inner sep=1.5pt, minimum height=9pt]
\tikzstyle{lmat}=[draw, signal, fill=gray!30, signal to=west, signal from=east, inner sep=1.5pt, minimum height=9pt]
\tikzstyle{ggreen}=[fill=green, draw=black, shape=circle, tikzit category=SZX, tikzit fill=green, tikzit draw=black, line width=1pt, inner sep=2pt]
\tikzstyle{gred}=[fill=red, draw=black, shape=circle, rounded corners=2mm,  tikzit category=SZX, inner sep=2pt, tikzit fill=red, line width=1pt]
\tikzstyle{divide}=[regular polygon, regular polygon sides=3, shape border rotate=90, draw=black,fill=gray!30, inner sep=1.6pt, tikzit category=scal, rounded corners=0.8mm]
\tikzstyle{gather}=[fill=gray!30, draw=black, tikzit category=scal, rounded corners=0.8mm, regular polygon, regular polygon sides=3, shape border rotate=-90, inner sep=1.6pt]
\tikzstyle{A}=[fill=white, shape=circle, tikzit category=scal, inner sep=1pt]
\tikzstyle{very thick}=[-, line width=1pt, tikzit draw=red]
\tikzstyle{control}=[thick, fill=black, circle, scale=1, inner sep=.05cm]
\tikzstyle{target}=[inner sep=2.5pt, draw, circle, path picture={ \draw[black](path picture bounding box.east) -- (path picture bounding box.west) (path picture bounding box.south) -- (path picture bounding box.north);}]
\tikzstyle{gate}=[fill=white, draw=black, shape=rectangle]
\tikzstyle{bgate}=[-, fill=white]
\DeclareMathAlphabet{\mathcal}{OMS}{cmsy}{m}{n}
\newtheorem{lemma}{Lemma}[section] 
\newtheorem{proposition}[lemma]{Proposition}
\newtheorem{corollary}[lemma]{Corollary}
\newtheorem{example}[lemma]{Example}
\newtheorem{theorem}[lemma]{Theorem}
\newtheorem{definition}[lemma]{Definition}
\newtheorem{remark}[lemma]{Remark}
\newcommand{\C}{\mathbb{C}}
\newcommand{\F}{\mathbb{F}}
\newcommand{\Z}{\mathbb{Z}}
\newcommand{\N}{\mathbb{N}}
\newcommand{\Vect}{\mathtt{Vect}}
\newcommand{\del}{\partial}
\newcommand{\bul}{\bullet}
\newcommand{\tens}{\mathop{{\otimes}}}
\DeclareMathOperator{\im}{im}
\DeclareMathOperator{\supp}{supp}
\DeclareMathOperator{\Span}{span}
\newcommand{\MatF}{\mathtt{Mat}_{\F_2}}
\newcommand{\Chains}{\mathtt{Ch}(\MatF)}
\newcommand{\Cochains}{\mathtt{coCh}(\MatF)}
\newcommand{\FHilb}{\mathtt{FHilb}}
\definecolor{zx_red}{RGB}{232, 165, 165}
\definecolor{zx_green}{RGB}{216, 248, 216}
\newsavebox{\pullback}
\sbox\pullback{%
\begin{tikzpicture}%
\draw (0,0) -- (2ex,0ex);%
\draw (2ex,0ex) -- (2ex,2ex);%
\end{tikzpicture}}
\newsavebox{\pushout}
\sbox\pushout{%
\begin{tikzpicture}%
\draw (0,0) -- (0,2ex);%
\draw (0,2ex) -- (2ex,2ex);%
\end{tikzpicture}}
\title{Engineering CSS surgery: compiling any CNOT in any code}    
\author[1,2,3]{Clément Poirson}
\author[3]{Joschka Roffe}
\author[3,4,5]{Robert I. Booth}
\affil[1]{
  Alice \& Bob, 49 Bd du Général Martial Valin, 75015 Paris, France
}
\affil[2]{
  Inria Paris, France
}
\affil[3]{
  University of Edinburgh, United Kingdom
}
\affil[4]{
  University of Oxford, United Kingdom
}
\affil[5]{
  University of Bristol, United Kingdom
}
\date{}
\begin{document}

\maketitle

\vspace{-1cm}
\begin{abstract}
We introduce a framework for implementing logic in CSS quantum error correction codes, building on the surgery methods of Cowtan and Burton \cite{Cowtan_2024}. Our approach offers a systematic methodology for designing and analysing surgery protocols. At the physical level, we introduce the concept of subcodes, which encapsulate all the necessary data for performing surgery. At the logical level, leveraging homological algebra, subcodes enable us to track the logical operations induced by any surgery protocol, regardless of the choice of logical operator basis. In particular, we make no assumptions on the structure of the logical operators of the code. As a proof of concept, we develop a surgery protocol inspired by lattice surgery that implements a logical CNOT gate between any two logical qubits of any CSS code, with fault-tolerance guarantees. 
\end{abstract}

\vspace{5mm}

Although quantum computing holds great promise for a wide range of applications, current hardware is severely impacted by noise. This noise leads to the accumulation of errors throughout a computation, producing random outcomes from what should be deterministic processes. To mitigate this, quantum error correction (QEC) plays a key role. By encoding information in logical qubits, QEC aims at striking a balance between the number of extra qubits used and the system's resistance to noise. 
Among the various families of quantum error-correcting codes, quantum low-density parity-check (qLDPC) codes have emerged as particularly promising. Theoretically, some of these codes can asymptotically optimise the trade-off between noise resistance and qubit overhead \cite{panteleev2022asymptoticallygoodquantumlocally, leverrier2022quantumtannercodes}. These are known as ``good qLDPC codes''. Practically, recent simulations of finite-size qLDPC codes have shown that they can offer the same level of protection as surface codes while using around $10\times$ fewer qubits \cite{xu2023constantoverheadfaulttolerantquantumcomputation, Bravyi_2024,scruby2024highthresholdlowoverheadsingleshotdecodable,malcolm2025computingefficientlyqldpccodes}.
The recent advances in qLDPC codes have largely been enabled by the study of Calderbank-Steane-Shor (CSS) codes, a broader class of stabiliser codes that includes qLDPC codes. This research has been enriched by viewing these codes through the lens of homology. In particular, some of the most effective constructions of quantum codes based on classical ones---known as product constructions---are derived from homological algebra \cite{Tillich_2014, Breuckmann_2021, Panteleev_2022, audoux2018tensorproductscsscodes}.

While classical error correction is concerned with storing information, quantum error correction must address the additional challenge of enabling computations on encoded data. To achieve this goal, we must find a set of gates, known as `universal gate set', which can implement all possible computations on the logical data. These gates should be performed efficiently, without compromising the system's resistance to errors. Identifying quantum codes that support fault-tolerant computation has proven to be a significant and ongoing challenge \cite{Campbell_2017, Webster_2022}.

Now that good qLDPC codes have been discovered \cite{panteleev2022asymptoticallygoodquantumlocally, leverrier2022quantumtannercodes}, the core challenge is to go from good quantum memories to fault-tolerant quantum computers. The gold standard for implementing fault-tolerant gates in qLDPC codes is through transversal gates, which are inherently fault-tolerant \cite{Quintavalle_2023, Breuckmann_2024,sayginel2024faulttolerantlogicalcliffordgates, malcolm2025computingefficientlyqldpccodes}. However, the Eastin-Knill theorem prohibits the existence of a fully universal transversal gate set \cite{Eastin_2009}, necessitating the development of alternative methods for fault-tolerant gate construction. Quantum code surgery is one such method and the focus of this work. 

To date, quantum error correction experiments have focused predominantly on implementing surface code logical qubits \cite{acharya2024quantumerrorcorrectionsurface}. The surface architecture is favoured due to straightforward implementation on a 2D-local array of qubits as well as its well-established, fault-tolerant universal gate set composed of the Clifford gate set $\{CNOT, H, S\}$ plus the $T$ gate \cite{Litinski_2019}. Fault-tolerant CNOT gates are realised via \textit{lattice surgery} operations between surface code patches, as first proposed in \cite{Horsman_2012}.
Initially developed to implement a CNOT between two patches of surface codes, lattice surgery has since been shown to be more versatile, capable of performing any CSS operation\footnote{\label{footnote:css_operation}CSS operations correspond to phase-free ZX-diagram i.e., composition of gates within $\{\bra{0}, \bra{+}, CNOT, \ket{0}, \ket{+}\}$.} \cite{de_Beaudrap_2020}. The lattice surgery-based protocol for implementing a CNOT consists of six steps, illustrated in \cref{fig:lattice_surgery}.  It relies on the decomposition of the CNOT \cref{eq:cnot_ls} into two joint logical operator measurements mediated by an auxiliary qubit. 

The main goal of this work is to generalise lattice surgery of \cite{Horsman_2012, de_Beaudrap_2020} to any CSS code. Such a generalisation has been an active topic of research, and a number of approaches have been proposed \cite{cohen2022lowoverheadfaulttolerantquantumcomputing, Cowtan_2024, cross2024improvedqldpcsurgerylogical,zhang2024timeefficientlogicaloperationsquantum, ide2024faulttolerantlogicalmeasurementshomological, williamson2024lowoverheadfaulttolerantquantumcomputation,swaroop2024universaladaptersquantumldpc, he2025extractorsqldpcarchitecturesefficient, cowtan2025parallellogicalmeasurementsquantum}.

\subsection*{Prior work} 
The standard model of surgery, initiated in \cite{Horsman_2012} and adopted in works such as \cite{cohen2022lowoverheadfaulttolerantquantumcomputing}, implements logical measurements by introducing auxiliary physical qubits and performing joint Pauli measurements. This paradigm faces two central challenges:
\begin{enumerate}
    \item \textbf{Resource efficiency:} The auxiliary qubits and measured operators must be kept minimal and low-weight to ensure fault tolerance and reduce hardware overhead.
    \item \textbf{Measurement precision:} We aim to measure exactly one logical operator, which becomes difficult when the desired operator is not irreducible, i.e., when the physical Pauli that describes the logical operator contains another logical operator in its support.
\end{enumerate}

In \cite{cohen2022lowoverheadfaulttolerantquantumcomputing}, a protocol was proposed capable of measuring arbitrary logical operators (including reducible ones), at the cost of a substantial $\mathcal{O}(d^2)$ overhead in auxiliary qubits. This method was refined in \cite{cross2024improvedqldpcsurgerylogical}, which reduced the overhead for codes with good expansion at the cost of requiring an irreducible logical operator basis. Further improvements to generalised surgery have been outlined in \cite{ide2024faulttolerantlogicalmeasurementshomological} and \cite{williamson2024lowoverheadfaulttolerantquantumcomputation}, both of which relax the irreducibility condition and do not require the code to have expansion properties. The scheme proposed in \cite{williamson2024lowoverheadfaulttolerantquantumcomputation} is particularly flexible, applying to any stabiliser code and relying only on low-weight measurements with an auxiliary qubit overhead of $\mathcal{O}(d \log(d)^3)$.

Subsequent innovations in generalised surgery have focused on 1) methods for measuring multiple logical operators simultaneously, and 2) the modular integration of different surgery schemes in order to makes it possible to seamlessly combine them when compiling logical measurements fault-tolerantly. In particular, it is preferable to avoid the exponential blow-up in the number of schemes otherwise required to implement every possible Pauli measurement. Addressing the first problem, \cite{zhang2024timeefficientlogicaloperationsquantum} proposed a scheme for time-efficient simultaneous measurements, which was later optimised in terms of qubit overhead in \cite{cowtan2025parallellogicalmeasurementsquantum}. For the second, \cite{swaroop2024universaladaptersquantumldpc} and \cite{he2025extractorsqldpcarchitecturesefficient} developed universal adapter and extractor protocols that extend to arbitrary stabiliser codes. A more detailed summary of prior works following this approach to surgery can be found in section 3.2 of \cite{he2025extractorsqldpcarchitecturesefficient}.

In parallel to these Pauli-measurement-based approaches, an alternative line of work---initiated by \cite{Cowtan_2024} and developed further in \cite{cowtan2024ssipautomatedsurgeryquantum}---proposed a fundamentally different formulation. This approach formulates surgery without the use of joint operator measurements, by representing CSS codes as chain complexes \emph{instead of Tanner graphs}, and transformations of CSS codes as chain maps\footnote{The problem with representing surgery of CSS codes in terms of gluing and cutting Tanner graphs, is that Tanner graphs forget the linear structure of the stabiliser group since they prioritise tracking the parity checks of the code.}. 
Notably, this approach allows for richer physical-level operations, including CNOT gates between physical qubits in addition to Pauli measurements. However, earlier versions of this formalism still required logical operators to be irreducible in order to make the resulting logical action explicit.

\subsection*{Our contributions}

In this work, we build on the CSS surgery formulation of Cowtan and Burton \cite{Cowtan_2024}. We address a key limitation of \cite{Cowtan_2024}, namely the difficulty of describing the induced logical operation in full generality. This enables us to perform physical operations going beyond Pauli measurements, and, at the logical level, we are no longer restricted to implementing logical gates via joint logical operator measurements of Pauli operators. The chain complex picture can describe operations between CSS codes that correspond to arbitrary phase-free ZX-diagrams \cite{kissinger2022phasefree} (or CSS operations, see footnote~\ref{footnote:css_operation}), at both the logical and physical levels. In this paper, our main technical contributions are two-fold:
\begin{enumerate}
    \item \textbf{Subcodes}: We reformulate the surgery procedure, originally given via a categorical construction known as a \emph{pushout} in \cite{Cowtan_2024}, as a \emph{quotient of a CSS code by a subcode}. This is a very natural approach from the perspective of homological algebra and allows us to easily determine what merges of two codes are possible. Many constructions that appear in the literature of surgery can be pictured as subcodes, such as the logical operator subcomplexes of \cite{Cowtan_2024}, and also the classical codes derived from the support of a logical operator of \cite{cohen2022lowoverheadfaulttolerantquantumcomputing,cross2024improvedqldpcsurgerylogical} (that inspired \cite{Cowtan_2024}).
    \item \textbf{Logical operations and irreducibility}: By combining our subcode reformulation of merges with standard tools of homological algebra, we can compute the logical operators of the merged code. We show that this approach is in principle completely agnostic to the problem of irreducibility, providing a means to compile logic without first requiring a special basis. More precisely, we obtain a \emph{long exact sequence} that simultaneously characterises the logical operators of the merged code and the logical operation induced by the merge. 
\end{enumerate}

The main application we focus on in this work is the compilation of CNOT gates in arbitrary CSS codes. By carefully picking the sequence of merges and splits, we show that our framework can be used to compile CNOTs between any two logical qubits whilst ensuring fault-tolerance guarantees at each step.
We emphasise that our formalism also allows for more sophisticated logical operations tailored for specific uses. For instance, the CNOT protocol can be adapted to perform an array of CNOTs. In addition to designing logical CNOTs, we can add and measure logical qubits in given states. 
Our framework can also be used to design code-switching operations between CSS codes as instances of CSS surgery. For example, in appendix~\ref{appendix:code_switching}, we replicate the code-switching protocol presented in \cite{heußen2024efficientfaulttolerantcodeswitching} between the 3D colour code and the Steane code \cite{Anderson_2014}.

However, it cannot describe logical or physical operations that live outside the set of CSS operations, such as Hadamard, \(S\), or \(T\) gates. Furthermore, while our formalism can yield fault-tolerant operations, it does not inherently ensure that the physical implementation of a surgery protocol will be fault-tolerant. Rather, it provides a flexible toolkit for realising any desired logical phase-free ZX operation. Identifying concrete constraints on the construction that guarantee fault-tolerance remains an open question.

\Cref{sec:lattice_surgery} contains a refresher on lattice surgery as originally formulated by Horsman et al. \cite{Horsman_2012}. In \cref{sec:math_prelem}, we review the symplectic representation of the Pauli group, before describing the representation of CSS codes as (co)chain complexes and CSS operations as (co)chain maps. In \Cref{sec:css_surgery}, we give the definition of subcodes, merges and splits---a reformulation of the pushout construction of \cite{Cowtan_2024}---and we provide the long exact sequence that enables computation of the logical operation induced by any surgery procedure. Finally, \cref{sec:ft_cnot} describes our protocol for designing a CNOT gate on arbitrary CSS codes, and presents its fault-tolerance guarantees.

\tableofcontents

\section{A refresher on lattice surgery}
\label{sec:lattice_surgery}

\begin{figure}[p]
    \centering
    \includegraphics[width=0.85\linewidth]{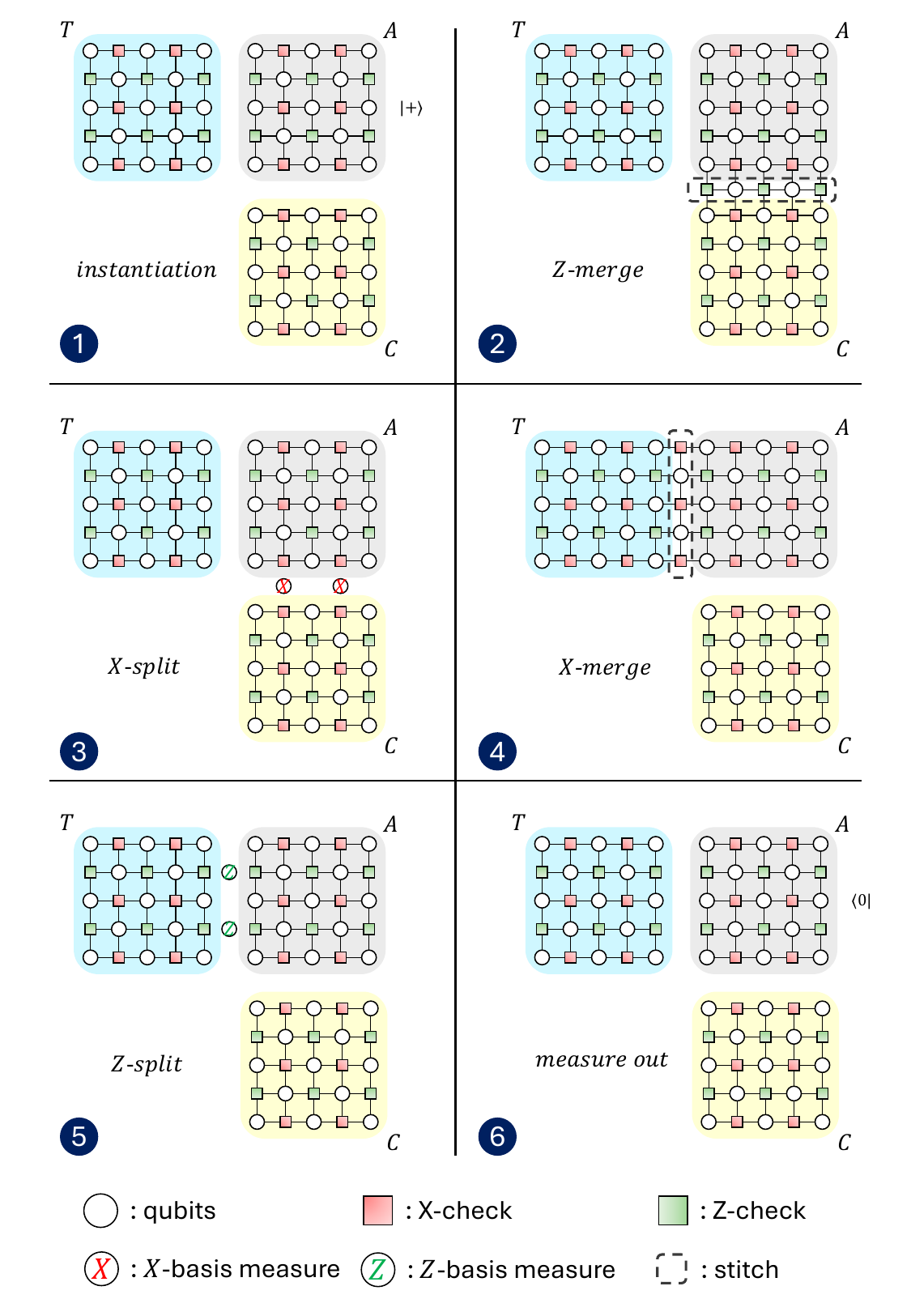}
    \caption{\textbf{Implementation of the $CNOT$ gate using lattice surgery.} 
    The process involves merging and splitting logical qubits $C$ (control) and $T$ (target) with a auxiliary surface $A$. For simplicity, measurement outcomes will be assumed to be positive. In step 1, an auxiliary logical qubit encoded in $A$ is instantiated in the $\ket{+}$ state. In step 2, two auxiliary physical qubits are introduced in the $\ket{+}$, while three Z-Pauli measurements introduce three new Z-stabilisers. The two auxiliary qubits are then measured out in the X-basis in step 3. Step 2 and 3 together implement an $M_{ZZ}$ measurement between the logical qubits $C$ and $A$. Similarly, step 4 and 5 follow the same scheme but swapping the roles of X and Z. It implements an $M_{XX}$ measurement. Finally, the logical qubit in $A$ is measured out in the Z-basis. These measurements outcomes are non-deterministic, but corrections can be applied to ensure that the logical effect corresponds to a $CNOT$ gate.
    }
    \label{fig:lattice_surgery}
\end{figure}

In order to construct a fault-tolerant quantum computer, designing codes with
good memory properties is only half the battle. In order to actually
run a quantum computation, we also need to be able to implement logical
gates fault-tolerantly. This is a non-trivial task. An \textbf{encoder}
for an \([[n,k,d]]\) code is an isometry \(E : \C^{2^k} \to \C^{2^n}\),
and a \textbf{logical operation} is a map \(M : \C^{2^k} \to \C^{2^\ell}\)
that acts on the \(k\) logical qubits encoded by \(E\).  A \textbf{physical
implementation} of \(M\) for the encoder \(E\) is an operation \(\overline{M}
: \C^{2^m} \to \C^{2^n}\) which acts on the physical qubits and such that
\begin{equation}
  \label{eq:encoder_equation}
  \tikzfig{encoder_equation}
\end{equation}
In many cases, \(E = E'\), but this does not have to be the case. As we are
about to describe, it is \emph{impossible} for lattice surgery. 

In the case of the surface code, lattice surgery provides techniques for the
fault-tolerant implementation of CNOT gates \cite{Horsman_2012}. The basic idea of
lattice surgery is to decompose the CNOT gate, controlled on a qubit \(C\)
and targetting a qubit \(T\), as the following sequence of operations:
\begin{equation} \label{eq:cnot_ls}
  \tikzfig{cnot}
\end{equation}
where we make use of an additional auxiliary qubit \(A\) which is initialised
in the \(\ket{+}\) state, and where we have furthermore used the ZX-calculus
notation for the linear maps given by
\begin{equation}
  \label{eq:logical_surgery}
  \begin{aligned}
    &\tikzfig{logical_merge_green} \coloneqq \ket{0}\bra{00} + \ket{1}\bra{11},
    \quad &\tikzfig{logical_merge_red} \coloneqq \ket{+}\bra{++} + \ket{-}\bra{--},
    \\
    &\tikzfig{logical_split_green} \coloneqq \ket{00}\bra{0} + \ket{11}\bra{1},\text{ and}
    \quad &\tikzfig{logical_split_red} \coloneqq \ket{++}\bra{+} + \ket{--}\bra{-}.
  \end{aligned}
\end{equation}
We emphasise two important points. First, there are many different
decompositions of a CNOT gate using this family of linear maps, and
these can be used to perform different kinds of lattice surgery on the
surface code. Second, the attentive reader will have noticed that the
operations described in equation~\eqref{eq:logical_surgery} are \emph{not
isometries}. This means that they cannot in principle be implemented
deterministically on a quantum computer. Nevertheless, they can be implemented
up to a \emph{probablistic} Pauli error, and this Pauli error is relatively
easy to account for in simple circuits like a CNOT gate \cite{de_Beaudrap_2020}.

Specialising \cref{eq:encoder_equation} to the case of lattice
surgery, we now need to find physical implementations of each of the operations of \cref{eq:logical_surgery} for surface code encoders. These
physical implementations are called \emph{merges} and \emph{splits},
and are depicted in terms of the Tanner graphs of the surface codes in \cref{fig:lattice_surgery}:
\begin{align}
  \label{eq:physical_surgery}
  \tikzfig{physical_Z-merge} \quad &\text{is implemented by step 2;} \\
  \tikzfig{physical_X-split} \quad &\text{is implemented by step 3;} \\
  \tikzfig{physical_X-merge} \quad &\text{is implemented by step 4; and,} \label{eq:physical_surgery_X} \\
  \tikzfig{physical_Z-split} \quad &\text{is implemented by step 5.}
\end{align}
In each step, \(E\) and \(E'\) are encoders for surface codes of varying
shapes and sizes. Finally, step 6 implements the measurement at the end of
the circuit. An overview of the proof of the logical operation realised by merges and splits can be found in \cref{app:lattice_surgery}. Otherwise, the complete proof is in \cite{de_Beaudrap_2020}.

The goal of this article is to show how these operations can be generalised to the case where the encoders \(E,E'\) are encoders for arbitrary CSS codes.
\begin{remark}
    We make the distinction between ``lattice surgery'',
    thought of as the calculus of merges and splits on surface codes, and
    the CNOT protocol implemented using lattice surgery. In fact, lattice
    surgery doesn't reduce to only performing CNOT, even though this is the
    main application.
\end{remark}

\section{Mathematical preliminaries} \label{sec:math_prelem}

The Pauli group \(\mathcal{P}_n\) acting on \(n\) qubits is the
group of unitary operators on \(\C^{2^n}\) generated by the Pauli \(X\), \(Y\)
and \(Z\) operators acting on each individual qubit. Using the two relations \(Y =
iXZ\) and \(XZ = -ZX\), any Pauli operator takes the form \(i^k \prod_{j = 1}^n
X_j^{x_j} Z_j^{z_j}\) for some \(k \in \Z/4\Z\) and \(x_j,z_j \in \F_2\), where
the indices \(j\) indicate on which qubit the Pauli \(X\) and \(Z\) gates act.

A \textbf{stabiliser group} on \(n\) qubits is an abelian subgroup \(\mathcal{S}\)
of \(\mathcal{P}_n\) such that \(-\mathbb{I} \notin \mathcal{S}\). It is often convenient to present a stabiliser group \(\mathcal{S}\) by giving its generators: a collection \(G \subseteq \mathcal{P}_n\) of
Pauli operators such that the group generated by \(G\) is \(\langle G \rangle =
\mathcal{S}\). A \textbf{stabiliser code} is a sub-Hilbert space of \(\C^{2^n}\) defined by a stabiliser group \(\mathcal{S}\) as
\begin{equation}
  \left\{\ket{\psi} \in \C^{2^n} : \forall g \in \mathcal{S}, \ g\ket{\psi}=\ket{\psi} \right\}.
\end{equation}
We can alternatively define it as the image of the projector $P_\mathcal{S} \coloneqq \prod_{S \in G} \frac{\mathbb{I} + S}{2}$. This subspace is also called the ``codespace''.
A \textbf{Calderbank-Shor-Steane (CSS) code} is a stabiliser code for which we can find generators of the stabiliser group that are either a product of only Pauli \(X\) gates or of
only Pauli \(Z\) gates.

The Hilbert-space formulation of the stabiliser theory is convenient since it is operationally close to its implementation in terms of operators on
actual quantum hardware. On the other hand, it offers relatively few tools
for analysing computation in stabiliser codes. The remainder
of this section is dedicated to presenting mathematical reformulations of
the stabiliser theory which yield such analytical tools: the symplectic
perspective on the Pauli group \cite{gross_finite_2005,Rengaswamy_2018}, the chain complex formulation of
CSS codes, and the homological framework that derives from it \cite{Breuckmann_2021,audoux2018tensorproductscsscodes}.

\subsection{The symplectic representation of the Pauli group}

As discussed above, any Pauli operator \(i^k \prod_{j = 1}^n X_j^{x_j}
Z_j^{z_j}\) can be parametrised by the elements \(k \in \Z/4\Z\) and
\(x_j,z_j \in \F_2\). Within this parametrisation lies a group isomorphism
\cite{Rengaswamy_2018, gross_finite_2005}:
\begin{equation}
  \label{gamma}
  \begin{aligned}
      \gamma: \F_2^n \oplus \F_2^n &\longrightarrow \mathcal{P}_n / \langle i\mathbb{I} \rangle^{\otimes n} \\
       \left(x|z\right) &\longmapsto \prod_{j = 1}^n X_j^{x_j} Z_j^{z_j}
   \end{aligned}
\end{equation}
This provides a simple way to tell if two Pauli operators commute
or anticommute: given \((x_1|z_1), (x_2|z_2) \in \F_2^n \oplus \F_2^n\),
the Paulis \(\gamma(x_1|z_1)\) and \(\gamma(x_2|z_2)\) commute if and only if
\begin{equation}
  \omega((x_1|z_1), (x_2|z_2)) = (x_1|z_1) \Omega (x_2|z_2)^T = x_1z_2^T + z_1x_2^T = 0
  \quad \text{where} \quad
  \Omega = \begin{pmatrix}
      0 & \mathbb{I}_n \\
      \mathbb{I}_n & 0
    \end{pmatrix}.
\end{equation}
$\omega(\cdot,\cdot)$ satisfies the definition of a symplectic product
\cite{gross_finite_2005}, so $\left( \F_2^n \oplus \F_2^n, \omega \right)$
is called a \textbf{symplectic space}.

Using the symplectic representation, we can represent a stabiliser group using
a \textbf{parity check matrix} $H = (H_X | H_Z)$. Given a generating set, each
row of \(H\) corresponds to the symplectic representation of a generator.
The parity check matrix needs to represent an abelian group of Paulis to be
valid. This is equivalent to verifying that $H \Omega H^T = 0$.

Let us clarify why we quotient by \(\langle i\mathbb{I} \rangle^{\otimes n}\)
in \cref{gamma}. In the context of quantum error correction, our main
concern is the manipulation of stabiliser codes and logical operators. Since
any non trivial code cannot contain \(-\mathbb{I}_n\) as a stabiliser, it
should also exclude any element of \(\sqrt{\langle -\mathbb{I}_n \rangle}
= \langle i\mathbb{I}_n \rangle\). Hence, no stabiliser can have a \(i\)
or \(-i\) phase for the code to be non-trivial. Choosing this representation
of stabiliser groups in terms of parity-check matrices also requires us to
disregard the sign of any stabiliser: a stabiliser group containing a Pauli
\(s\) is effectively equivalent to one where \(s\) is replaced by \(-s\). From
an error correction standpoint, this is reasonable because \(\mathcal{S}\)
cannot contain both \(s\) and \(-s\) (otherwise \(s \cdot -s = -\mathbb{I}
\in \mathcal{S}\)), and the \(+1\) eigenspace of \(s\) is isomorphic to the
\(+1\) eigenspace of \(-s\). Since \(s\) and \(-s\) have the same non-identity
entries, the code distance remains unchanged whether \(s\) or \(-s\) is part
of the group. We conclude that the two codes have the same properties. Any non
trivial stabiliser code—up to stabiliser sign—is therefore in one-to-one
correspondence with certain subgroups of \(\mathcal{P}_n / \langle i\mathbb{I}
\rangle^{\otimes n} \cong \F_2^n \oplus \F_2^n\).

\subsection{CSS codes are chain complexes}

\subsubsection{Stabilisers and boundary maps}

By definition of CSS codes, we can find X-type stabilisers and Z-type
stabilisers that, together, generate the whole stabiliser group. This
means there exists a generating set of the stabiliser group such that the
parity-check matrix takes the form
\begin{equation}
    \label{eq:CSS_parity_check}
    H= \left(
    \begin{array}{c|c}
    P_X & 0 \\
    0 & P_Z
    \end{array}
    \right).
\end{equation}
The commutativity condition of the corresponding stabiliser group, \(H \Omega
H^T = 0\), then boils down to verifying that $P_Z^TP_X=0$. We can recast this commutation
relation as a sequence of maps
\begin{equation} \label{chain_complex}
    \begin{tikzcd}
        \F_2^{m_Z} \arrow[d, phantom, "\rotatebox{90}{$\cong$}"] & \F_2^n \arrow[d, phantom, "\rotatebox{90}{$\cong$}"] & \F_2^{m_X} \arrow[d, phantom, "\rotatebox{90}{$\cong$}"] \\
        C_2 \arrow[r, "P_Z^T"] \arrow[d, phantom, "\rotatebox{90}{$\leadsto$}"] & C_1 \arrow[r, "P_X"] \arrow[d, phantom, "\rotatebox{90}{$\leadsto$}"] & C_0 \arrow[d, phantom, "\rotatebox{90}{$\leadsto$}"]\\
        \text{Z-stabilisers} & \text{Z-operators} & \text{X-syndromes}
    \end{tikzcd}
\end{equation}
which composes to \(0\), and furthermore contains all of the data of the
CSS code.

Such a sequence is called a \textbf{chain complex}, and every parity check
matrix of a CSS code is therefore in one-to-one correspondence with a chain
complex $C_\bul$ of length 2, i.e., pairs of matrices $(\del_1,\del_2)$—called
\textbf{boundary maps}—which compose to zero, $\del_1\del_2=0$. The commutation
relation condition $P_Z^TP_X=0$ guarantees that the chain complex in
\cref{chain_complex} is valid. $m_Z$ is the number of generators of the group
of Z-stabilisers—which is not necessarily of minimum size—and similarly
for $m_X$ and X-stabilisers.

The three \textbf{complexes} \(C_2,C_1,C_0\) involved in the chain complex \eqref{chain_complex} are vector spaces equipped with a basis. To give an interpretation to the complexes \(C_2,C_1,C_0\), we must consider how the vectors in each complex behaves when composed with the boundary maps \(P_Z^T\) and \(P_X\). As we consider the rows of $P_Z$ to be a generating set of the Z-stabilisers, every Z-stabiliser can be written $(0|P_Z^Ts)$ for some $s \in \F_2^{m_Z}$. A vector $z \in \F_2^n$ represents a Z-operator, i.e. a Pauli with symplectic representation $(0|z)$. In fact, $P_Xz \in \F_2^{m_X}$ gives the corresponding X-syndrome. In particular, if this Z-operator is a Z-stabiliser, meaning that we can find $s \in \F_2^{m_Z}$ such that $z=P_Z^Ts$, we naturally get $P_Xz=0$. But some Z-operators may be in the kernel of $P_X$ without being stabilisers. These are the Z-logical operators.

\begin{remark}
    \label{rem:chain_complex_operational}
    There is a distinction to be made between a stabiliser code and the parity check matrix of this code. While a stabiliser code is uniquely defined with a stabiliser group, it is operationally more relevant to define it according to a generating set of the stabiliser group, i.e., a parity check matrix\footnote{The generating set is not necessarily of minimum size. Thus two parity check matrices with different dimensions might generate the same code.}.
    From a practical standpoint, the rows of the parity check matrix gives the measurements to extract the syndrome. Hence, even if two parity check matrices define the same stabiliser group, they might not be ``equally convenient'' for the syndrome extraction. For instance, one might give higher weight measurements to extract the syndrome. This is why the chain complexes picture focuses on the parity check matrix of CSS codes and not on the stabiliser group.
\end{remark}

\subsubsection{Z-X duality}

There is a tight relationship between the \(X\) and \(Z\) operators which define a CSS code. Exchanging \(X\) and \(Z\) operators at the level of stabilisers, physical operators \emph{and} syndromes yields an equivalent description of the \emph{same} code. A \textbf{cochain complex} of length 2 is a pair of matrices $(\del^1,\del^2)$ composing ``backwards'' to zero, $\del^2\del^1=0$. Hence, the same CSS code can be viewed as a cochain complex
$C^\bul$:
\begin{equation}
    \begin{tikzcd}
        \F_2^{m_Z} \arrow[d, phantom, "\rotatebox{90}{$\cong$}"] & \F_2^n \arrow[d, phantom, "\rotatebox{90}{$\cong$}"] & \F_2^{m_X} \arrow[d, phantom, "\rotatebox{90}{$\cong$}"] \\
        C^2 \arrow[d, phantom, "\rotatebox{90}{$\leadsto$}"]  & \arrow[l, "P_Z"'] C^1 \arrow[d, phantom, "\rotatebox{90}{$\leadsto$}"] & \arrow[l, "P_X^T"'] C^0 \arrow[d, phantom, "\rotatebox{90}{$\leadsto$}"] \\
        \text{Z-syndromes} & \text{X-operators} & \text{X-stabilisers}
    \end{tikzcd}
\end{equation}
The transpose of the commutation relation condition ${(P_Z^TP_X)}^T=P_X^TP_Z=0$ guarantees the validity of the cochain complex.
The interpretation is analogous to the case of the chain complex. Every X-stabiliser can be written $\left( P_X^Ts|0 \right)$ for some $s \in \F_2^{m_X}$. A vector $x \in \F_2^n$ is an X-operator, i.e., a Pauli with symplectic representation $(x|0)$. Finally, $P_Zx \in \F_2^{m_Z}$ corresponds to the Z-syndrome of the X-operator $x$.

Although $C^\bul$ is technically redundant, it is useful to track both the chain and cochain complex to get the full picture of the code. In particular, even though \(C_1\) and \(C^1\) are both isomorphic to $\F_2^n$, they carry distinct meanings. Vectors in $C_1$ correspond to physical \(Z\) operators whereas \(C^1\) corresponds to physical \(X\) operators. We will therefore think of a CSS code as a pair of chain and cochain complexes $(C_\bul, C^\bul)$.

\subsubsection{Logical operators from homology and cohomology spaces} \label{seq:logop}

For any stabiliser code, we can obtain the space of \textbf{logical operators} as the quotient space
\begin{equation}
    \label{eq:stabiliser_logical_operators}
    L := \ker(H \Omega) / \im(H^T)
\end{equation} 
Here, \(\ker(H \Omega)=\{ y \in \F_2^{2n} : H \Omega y = 0 \}\) is the subspace corresponding
exactly to the physical operators that commute with all stabilisers of the
code. The image \(\im(H^T)\) describes all of the linear combinations of the generators, i.e., the space of all stabilisers. The quotient
in \cref{eq:stabiliser_logical_operators} gives the equivalence
classes of physical Pauli operators that cannot be detected by the code, up
to stabilisers. The equivalence classes represent logical Pauli operators,
and the elements of these equivalence classes are the physical implementations
of the corresponding logical Pauli operator.

In the particular case of CSS codes, 
\begin{equation}
  H \Omega = \left(
  \begin{array}{c|c}
    0 & P_X \\
    P_Z & 0
  \end{array}
  \right)
  \quad \text{so that} \quad
  L = \ker(H \Omega)/\im(H^T) \cong \ker(P_X)/\im(P_Z^T) \oplus \ker(P_Z)/\im(P_X^T),
\end{equation} 
where we split the Z-logical operators \(\ker(P_X)/\im(P_Z^T)\) from the
X-logical operators \(\ker(P_Z)/\im(P_X^T)\).  The quotients \(H_1(C_\bul)
= \ker(P_X)/\im(P_Z^T)\) and \(H^1(C_\bul) = \ker(P_Z)/\im(P_X^T)\) are
respectively called the (degree 1) \textbf{homology} and \textbf{cohomology}
spaces of the chain complex $C_\bul$ representing the code. Hence,
\begin{equation}
    L \cong H_1(C_\bul) \oplus H^1(C_\bul).
\end{equation}
So far, we have described how to obtain the \emph{space} of all logical 
operators of a CSS code via its (co)homology.  We have \emph{not} described how
these logical operators act on the codespace, and
this is not an inherent property of the CSS code under consideration. There
can be different encoders for the \emph{same} code, and then the same state
encoded by different encoders will yield two code states that differ by a
unitary. Therefore, in order to be able to reason about logical operators on
encoded states, we need to track some additional data on top of the (co)chain
complexes that eliminates this ambiguity. As the following proposition shows, this additional data takes the form of a choice of basis for the logical operators.

\begin{proposition}\label{prop:encoder-basis}
    Picking an encoder for a CSS code amounts to choosing an individual \(X\) and \(Z\) logical operator for each encoded qubit, i.e. an isomorphism \(\varepsilon : \F_2^k \oplus \F_2^k \xrightarrow{\varepsilon} H^1(C_\bul) \oplus H_1(C_\bul)\) that has to respect the commutation relation of Pauli operators, that is, it has to preserve the symplectic structure.
\end{proposition}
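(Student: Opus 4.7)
The plan is to unpack the statement in both directions, working in the homological language just set up.

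First I would show the forward direction. Given an encoder $E : \C^{2^k} \to \C^{2^n}$, for each logical Pauli $P \in \{X_1,Z_1,\ldots,X_k,Z_k\}$ on $\C^{2^k}$ I would pick any physical Pauli $\bar P$ satisfying $\bar P \circ E = E \circ P$; such $\bar P$ exists because $E$ maps into the codespace and $P$ preserves $\C^{2^k}$ unitarily. Any such $\bar P$ commutes with every stabiliser (it preserves the codespace set-wise), and two valid choices differ by a stabiliser acting trivially on the codespace. Its symplectic representation therefore lies in $\ker(H\Omega)$, and the induced equivalence class $[\bar P]$ lives in $L \cong H^1(C_\bul) \oplus H_1(C_\bul)$ as identified in \cref{seq:logop}. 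Linear extension over $\F_2$ yields $\varepsilon$. The relation $X_i Z_j = (-1)^{\delta_{ij}} Z_j X_i$ on $\C^{2^k}$ transports through $E$ to the same relation on the $\bar P$'s, which is exactly $\omega([\bar X_i],[\bar Z_j]) = \delta_{ij}$, so $\varepsilon$ is a symplectic linear map.

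Next I would establish the converse. Given a symplectic isomorphism $\varepsilon$, I would lift the images of the standard symplectic basis of $\F_2^k \oplus \F_2^k$ to physical Pauli representatives $\bar X_i, \bar Z_i$. Symplecticity ensures they satisfy the standard Pauli commutation relations with each other, and well-definedness of the homology class ensures they commute with every stabiliser. Together with a generating set of the stabiliser group, they produce a maximal abelian-up-to-signs subgroup of $\mathcal{P}_n$ of the right rank, and the standard stabiliser construction (via the Knill–Gottesman picture) produces an isometry $E : \C^{2^k} \to \C^{2^n}$ onto the codespace implementing $\varepsilon$. Any residual freedom consists of overall signs for each $\bar P$, which alter $E$ only by a unitary acting trivially on encoded information.

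The conceptually nontrivial step, and the main obstacle, is justifying that $\varepsilon$ is automatically bijective as soon as it is symplectic, and that the target really has dimension $2k$. This reduces to a nondegeneracy statement: the symplectic form on $\F_2^{2n}$ descends to a nondegenerate symplectic pairing on $L = H^1(C_\bul) \oplus H_1(C_\bul)$. Concretely, the radicals $\im(P_Z^T)$ and $\im(P_X^T)$ are killed by the commutation condition $P_Z^T P_X = 0$, and the induced pairing $H^1(C_\bul) \times H_1(C_\bul) \to \F_2$ is perfect. This is a Poincaré-duality-type statement for length-$2$ chain complexes over $\F_2$, following from $P_Z^T P_X = 0$ together with rank–nullity applied along the complex; it forces $\dim H^1(C_\bul) = \dim H_1(C_\bul) = k$, and a symplectic linear map between symplectic $\F_2$-spaces of equal dimension is then automatically an isomorphism.
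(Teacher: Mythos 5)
Your proof is correct and, in the forward direction, matches the paper's argument almost exactly: both define $\varepsilon$ by pushing logical Paulis through the encoder via $E P = \bar P E$ and then checking that the symplectic form is preserved. The difference is in the converse direction: where the paper invokes the Choi--Jamio\l{}kowski isomorphism to produce the isometry $E$ from $\varepsilon$, you instead lift the symplectic basis to physical Pauli representatives and appeal to the standard Knill--Gottesman stabiliser construction of the codespace and encoder. These are equivalent in content but different in emphasis---your route makes the sign ambiguity in the choice of representatives explicit, which the paper glosses over (its claim of uniqueness should be read as uniqueness once representatives have been fixed, up to a phase). You also add a paragraph establishing nondegeneracy of the induced symplectic pairing on $L \cong H^1(C_\bul) \oplus H_1(C_\bul)$ and inferring $\dim H^1 = \dim H_1 = k$; the paper does not spell this out, although it is implicitly needed to justify that the map produced in the forward direction is actually an isomorphism. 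This nondegeneracy observation is essentially what the paper later records as prop.~\ref{canonical_bases}, so your argument is consistent with, and slightly more self-contained than, the paper's.
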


\begin{proof}
    Given an encoder $E:{(\C^{2})}^{\tens k}\to {(\C^{2})}^{\tens n}$, the isomorphism $\varepsilon$ is specified by the commutation relations between $E$ and the Pauli group $\mathcal{P}_k$, i.e. for each $x_l,z_l \in \F_2^k$, we get $EX^{x_l}Z^{z_l}=X^xZ^zE$ for some $x,z \in \F_2^n$. Then, $\varepsilon$ is defined as $\varepsilon(x_l|z_l)=([x],[z])$ and it is straightforward to verify that if two Pauli operators $(x_1|z_1)$ and $(x_2|z_2)$ commute, so will each representatives of their associated logical operators $\varepsilon(x_1|z_1)$ and $\varepsilon(x_2|z_2)$. Vice versa, given a isomorphism $\varepsilon$, using the Choi-Jamiołkowski isomorphism, there is a unique isometry $E:{(\C^{2})}^{\tens k}\to {(\C^{2})}^{\tens n}$ such that for each $x_l,z_l \in \F_2^k$ and for each representative $\varepsilon(x_l|z_l)=([x],[z]) \in H_1(C_\bul) \times H^1(C_\bul)$, we have $X^xZ^zE=EX^{x_l}Z^{z_l}$. Indeed, consider the state
    \begin{equation}
        \ket{E}=\tikzfig{ketE}
    \end{equation}
    over $n+k$ qubits. By definition of an encoder, for each generator $g$ of the stabiliser group, $I_{2^k}\tens g$ stabilises $\ket{E}$. As the CSS code is of dimension $k$, there are $n-k$ such generators. Moreover, we can get $2k$ more stabilisers by exploiting the commutation relation the encoder must satisfy with $\varepsilon$. Indeed, $\left\{X^{\delta_i}\tens X^{x_i}Z^{z_i}, Z^{\delta_i}\tens X^{x'_i}Z^{z'_i} : i\in \llbracket 1, k\rrbracket\right\}$---where $\varepsilon(\delta_i|0_k)=([x_i],[z_i])$ and $\varepsilon(0_k|\delta_i)=([x'_i],[z'_i])$---is a set of $2k$ stabilisers. Hence, $\ket{E}$ is a quantum state over $n+k$ qubits stabilised by $n+k$ Pauli operators. This implies that $\ket{E}$ is unique.
\end{proof}

\begin{remark} \label{rem:type_preserving}
    (Type-preserving encoders) For any CSS code, there exists an encoder \( E \) that preserves the type of Pauli logical operators. Specifically, for any logical operator of the form \( X^x \) or \( Z^z \), there exist vectors \( x_l, z_l \in \mathbb{F}_2^k \) such that \( X^x E = E X^{x_l} \) and \( Z^z E = E Z^{z_l} \). This property follows from the fact that CSS codes admit an encoder that can be represented by a phase-free ZX-diagram~\cite{kissinger2022phasefree}.
\end{remark}

Two bases of X- and Z-logical operators respecting the commutation relation of Pauli operators will be referred to as \textbf{dual bases}. Importantly, given a basis of X-logical operators, we can always find a unique dual basis of Z-logical operators and vice versa.

\begin{proposition}\label{prop:dual_bases}
    Consider a CSS code $(C_\bul, C^\bul)$, and a basis $\mathcal{B}=\{[z_1],
    \ldots, [z_k]\}$ of $H_1(C_\bul)$. The dual basis $\mathcal{C}=\{[x_1],
    \ldots, [x_k]\}$ of $H^1(C_\bul)$ associated to $\mathcal{B}$ is the
    unique basis such that for all $i,j \in \llbracket 1, k \rrbracket$,
    \begin{equation} \label{canonical_condition}
        [x_i] \cdot [z_j] := x_i \cdot z_j = \delta_{i,j}
    \end{equation}
    Similarly, given a basis of $H^1(C_\bul)$, there is a unique basis of
    $H_1(C_\bul)$ satisfying \cref{canonical_condition}.
\end{proposition}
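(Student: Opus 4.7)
The plan is to recognise that the pairing $\langle\cdot,\cdot\rangle \colon H^1(C_\bul) \times H_1(C_\bul) \to \F_2$ sending $([x],[z]) \mapsto x \cdot z$ is a non-degenerate bilinear form, so that existence and uniqueness of the dual basis reduce to standard linear-algebra duality. Unpacked, the proof breaks into three bite-sized steps: well-definedness of the pairing, non-degeneracy, and the construction of the dual basis.

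First I would verify well-definedness. For representatives $x \in \ker(P_Z)$ and $z \in \ker(P_X)$, and for any $s \in \F_2^{m_Z}$, $t \in \F_2^{m_X}$, a direct expansion gives
\[
  (x + P_X^T t) \cdot (z + P_Z^T s) = x \cdot z + (P_Z x)^T s + t^T (P_X z) + t^T (P_X P_Z^T) s = x \cdot z,
\]
using $P_Z x = 0$, $P_X z = 0$, and the chain complex condition $P_X P_Z^T = 0$ from \cref{chain_complex}. Hence the pairing descends to $H^1(C_\bul) \times H_1(C_\bul)$.

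The main step is non-degeneracy. Suppose $[z] \in H_1(C_\bul)$ satisfies $x \cdot z = 0$ for every $x \in \ker(P_Z)$. Then $z$ lies in the orthogonal complement $\ker(P_Z)^\perp$ with respect to the standard $\F_2$-inner product. The elementary identity $\ker(P_Z)^\perp = \im(P_Z^T)$ (valid for any matrix over any field) then forces $z \in \im(P_Z^T)$, i.e.\ $[z] = 0$ in $H_1(C_\bul)$. The symmetric argument applied to $[x] \in H^1(C_\bul)$ uses $\ker(P_X)^\perp = \im(P_X^T)$ and yields non-degeneracy on the other side. This is the crux of the proof; everything else is bookkeeping.

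Non-degeneracy means the induced maps $H^1(C_\bul) \to H_1(C_\bul)^*$ and $H_1(C_\bul) \to H^1(C_\bul)^*$ are injective, hence by dimension counting both are isomorphisms. Given a basis $\{[z_1],\ldots,[z_k]\}$ of $H_1(C_\bul)$, its algebraic dual basis in $H_1(C_\bul)^*$ pulls back through this isomorphism to a unique $\{[x_1],\ldots,[x_k]\} \subseteq H^1(C_\bul)$ with $[x_i] \cdot [z_j] = \delta_{i,j}$, and the reverse direction is identical by symmetry. The only real subtlety is the orthogonal-complement identity in the non-degeneracy step; once it is in hand, the proposition is a one-line consequence of finite-dimensional linear duality.
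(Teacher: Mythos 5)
Your proof is correct, and it differs from the paper's in an instructive way. You establish well-definedness of the pairing exactly as the paper does (using $P_Z x = 0$, $P_X z = 0$, and the chain condition $P_X P_Z^T = 0$), but from there you diverge: you prove non-degeneracy of the descended pairing via the orthogonal-complement identity $\ker(P_Z)^\perp = \im(P_Z^T)$, and then invoke standard finite-dimensional duality to obtain existence and uniqueness of the dual basis. This is a clean, basis-free argument that also has the welcome side effect of proving $\dim H_1(C_\bul) = \dim H^1(C_\bul)$ (which the paper establishes separately in the lemma preceding this one). The paper's proof of prop.~\ref{canonical_bases}, by contrast, is constructive: it assembles the invertible block matrix $\left(\begin{array}{c|c|c} L_Z & \Tilde{\del}_{n+1} & \Tilde{\del}^n \end{array}\right)$ (columns being representatives of the $[z_j]$, generators of $\im\del_{n+1}$, and generators of $\ker(\del_n)^\perp$) and reads the dual-basis representatives off the first $k$ rows of its left inverse. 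The trade-off is the usual one: your abstract argument is shorter and isolates non-degeneracy as the key property, whereas the paper's construction yields an explicit algorithm for computing the dual basis, which is useful downstream when one actually needs to instantiate encoders or track bases through merges. One small point worth flagging: since you work over $\F_2$, the orthogonal complement $\ker(P_Z)^\perp$ need not be a vector-space complement of $\ker(P_Z)$, but the identity $\ker(A)^\perp = \im(A^T)$ and the dimension formula $\dim W^\perp = n - \dim W$ both survive in characteristic $2$ because the standard bilinear form remains non-degenerate; you implicitly rely on this, and it would be worth a sentence to say so.
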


\begin{proof}
   We refer to prop.~\ref{canonical_bases}.
\end{proof}

Given dual bases $\{[z_1], \ldots, [z_k]\}$ and $\{[x_1], \ldots, [x_k]\}$,
we will define \textbf{logical qubits} with respect to a pair of logical operators
$([x_i], [z_i])$ such that $[x_i] \cdot [z_i] = 1$.

\subsection{Transformations of CSS codes are chain maps}

\subsubsection{Preserving code maps}

In order to reason about the types of operations we can implement using
surgery in CSS codes, it is first important to understand the transformations
of codes that we can implement on the physical level that preserve the CSS structure of the code.  We will call such operations \textbf{CSS code maps}. It
is hopefully clear that all CSS code maps must be stabiliser operations\footnote{Stabiliser operations are generated by Clifford unitaries, initialisations of \emph{stabiliser} states \(\ket{S}\) and their duals \(\bra{S}\). This is exactly the set of linear maps that take stabiliser groups to stabiliser groups.}, since CSS codes are stabiliser codes, but in order for the resulting
code after the transformation to be CSS, a CSS code map must also preserve
the separation of stabilisers into a \(Z\) block \(P_Z\) and an \(X\) block
\(P_X\). This latter property is reflected at the level of chain complexes,
and we can identify CSS code maps with certain families of linear maps between
the chain and cochain complex representing a CSS code. In fact, we can identify
two building blocks out of which any CSS code map can be constructed: the Z-
and X-preserving code maps \cite{Cowtan_2024}.

\paragraph{Z-preserving code maps} 
Consider two CSS codes represented by chain complexes \(C_\bul\) and \(D_\bul\)
as in \cref{chain_complex}.  A \textbf{chain map} $f_\bul : C_\bul
\rightarrow D_\bul$ (of length two) is a collection of three matrices $\{f_2,f_1,f_0\}$
\begin{equation}\label{eq:chain_map}
    \begin{tikzcd}
        C_2 \arrow[r, "\del_2^C"]\arrow[d, "f_2"] & C_1 \arrow[r, "\del_1^C"]\arrow[d, "f_1"] & C_0 \arrow[d, "f_0"]\\
        D_2 \arrow[r, "\del_2^D"] & D_1 \arrow[r, "\del_1^D"] & D_0
    \end{tikzcd}
    \quad \text{such that} \quad
    \begin{cases}
        f_1\del_2^C = \del_2^Df_2 \\
        f_0\del_1^C = \del_1^Df_1
    \end{cases},
\end{equation}
i.e., every square in the diagram commutes. In terms of quantum error
correction, a chain map can be interpreted as a physical transformation \(f_1\)
from a code \(C_\bul\) to a code \(D_\bul\), with two further conditions imposed
by existence of \(f_2\) and \(f_0\):
\begin{itemize}
    \item The existence of \(f_2\) imposes that Z-operators that are
    stabilisers remain stabilisers. Concretely, considering that each
    Z-stabiliser in $C_1$ can be written $\del_2^Cx \in \im(\del_2^C)$, $f_1$
    maps it to a stabiliser in $D_1$ since $f_1 \del_2^Cx = \del_2^D f_2 x \in
    \im(\del_1^D)$.
    \item The second contingency is on syndromes. Similarly, an X-syndrome of
    a Z-operator $\del_1^Cy \in \im(\del_1^C)$ is mapped to an X-syndrome
    in $D_0$ via $f_0$ as $f_0\del_1^C y = \del_1^D f_1 y \in
    \im(\del_1^D)$.
\end{itemize}

\begin{definition}
    Let $(C_\bul, C^\bul)$ and $(D_\bul, D^\bul)$ be two CSS codes. A
    \textbf{Z-preserving code map} from $(C_\bul, C^\bul)$ to $(D_\bul,
    D^\bul)$ is a chain map $f_\bul : C_\bul \rightarrow D_\bul$ (of length two).
\end{definition}

\paragraph{X-preserving code maps}
The same concept applies when we choose to represent $(C_\bul, C^\bul)$
and $(D_\bul, D^\bul)$ with their cochain complexes.  A \textbf{cochain
map} $g^\bul : C^\bul \rightarrow D^\bul$ (of length two) is a collection of three matrices
$\{g^2,g^1,g^0\}$
\begin{equation}
    \begin{tikzcd}
        C^2 \arrow[d, "g^2"] & \arrow[l, "\del^2_C"'] C^1 \arrow[d, "g^1"] & \arrow[l, "\del^1_C"'] C^0 \arrow[d, "g^0"]\\
        D^2 & \arrow[l, "\del^2_D"'] D^1 & \arrow[l, "\del^1_D"'] D^0
    \end{tikzcd}
\end{equation}
such that every square in the diagram commutes.  The interpretation is
the same up to exchanging the roles of X and Z Paulis. We call these maps \textit{X-preserving code maps}.

\begin{definition}
    An \textbf{X-preserving code map} from $(C_\bul, C^\bul)$ to $(D_\bul,
    D^\bul)$ is a cochain map $g^\bul : C^\bul \rightarrow D^\bul$ (of length two).
\end{definition}

We postpone the question of how to implement preserving code maps physically to focus on another meaningful particularity of preserving code maps: they lift to maps between the logical spaces. A natural challenge encountered when we aim at computing with encoded data is to determine the effect an operation performed on the physical qubits has on the logical qubits. In the case where the physical operation is a preserving code map, the resulting logical operation is easy to deduce. This comes as a consequence of homological algebra. In fact, every (co)chain map lifts to a map between the (co)homology spaces. In other words, every Z- (or X-) preserving code map gives a map between the Z- (or X-) logical operators. Knowing how the logical operators are mapped makes it possible to determine the logical operation.

\paragraph{Maps between Z-logical operators} Consider a Z-preserving code map $f_\bul : C_\bul \rightarrow D_\bul$. It lifts to a map between the homology spaces $H_1(C_\bul)$ and $H_1(D_\bul)$ defined for all $[z] \in H_1(C_\bul)$ as
\begin{equation}
    f_{1*}([z]):=[f_1(z)].
\end{equation}
In fact, $H_1$ is a functor. We refer the interested reader to \cref{hom&cat}.

\paragraph{Maps between X-logical operators} Similarly, for a X-preserving code map $g^\bul : C^\bul \rightarrow D^\bul$, the corresponding map between the cohomology spaces $H^1(C_\bul)$ and $H^1(D_\bul)$ is defined for all $[x] \in H^1(C_\bul)$ as
\begin{equation}
    g^1_*([x]):=[g^1(x)].
\end{equation}

Armed with the transformation of the physical and logical operators described by the preserving code maps (more precisely the middle map) and the lift on the (co)homology spaces, we can now decipher the operation implemented at the physical and logical level, namely on the physical and logical Hilbert space.

\subsubsection{Interpretation of preserving code maps on the physical and logical Hilbert space}

This section introduces the physical implementations of preserving code maps as well as the operation they turn out to implement on the logical Hilbert space. These operations are most naturally represented using the ZX-calculus, as we have chosen to do here. Being able to describe the operational interpretation of the surgery procedure at the level of Hilbert spaces (i.e. quantum systems) is obviously essential, but an in-depth understanding of this section is not required to understand the surgery operations of \cref{sec:css_surgery}.

So far, preserving code maps have been presented through the lens of chain complexes. However, they have a concrete interpretation at the physical level.
Returning to the physical Hilbert space, suppose we have physical qubits encoded
in a CSS code $(C_\bul, C^\bul)$. We wish to describe what kind of quantum
channel the preserving code maps represent. They take the form of a parity
map \(\ket{x} \mapsto \ket{Ax}\) for some binary matrix A, and these parity
maps have a particularly simple interpretation using the \emph{SZX-calculus}
\cite{https://doi.org/10.4230/lipics.mfcs.2019.55}. For a reader not familiar
with neither the ZX- nor the SZX-calculus, we refer to sections 3 and 4 of
\cite{vandewetering2020zxcalculusworkingquantumcomputer}. Meanwhile, for
readers already familiar with ZX-calculus, we informally introduce the two
kinds of generators added to make ZX-calculus ``scalable''. Informally,
these are the gatherers and dividers
\begin{equation}\tikzfig{gatherer} \ \ \text{and} \ \ \tikzfig{gdiv0}\end{equation}
where thick wires indexed by an integer $n$ represent $n$ single wires parallel to each others, and, up to global phases, the matrix arrows
\begin{equation}\label{eq:right_and_left_arrows}
    \tikzfig{rep0} = \tikzfig{rep1} \ \ \text{and} \ \ \tikzfig{rep0b} = \tikzfig{rep1b}
\end{equation}
where $A$ and $B$ are seen as adjacency matrices of bipartite graphs. 

We can also define these generators by their action on the Hilbert space. Suppose $A \in \F_2^{n\times m}$ and $B \in \F_2^{p\times q}$. For a qubit state in the computational basis $\ket{x}$ with $x \in \F_2^m$, the right arrow in \cref{eq:right_and_left_arrows} corresponds to the linear map $R_A : \ket{x} \mapsto \ket{Ax}$. Meanwhile, the left arrow corresponds to the linear map $H^{\tens p} R_{B^T}H^{\tens q}=H^{\tens p}\left(\ket{x} \mapsto \ket{B^Tx}\right)H^{\tens q}$. We are now ready to give the linear map implementing preserving code maps.

\begin{remark}
   In the following, to streamline the notations, we will use the convention that for any map indexed by an integer $\epsilon_n$, we define $\epsilon^n = \epsilon_n^T$.
\end{remark}

\paragraph{Interpretation of preserving code maps on the physical Hilbert space} The next two propositions translate preserving code maps in terms of SZX-diagrams applied on the physical qubits of the code. Both interpretations essentially consist in a left (resp. right) matrix arrow followed by a series of Z- (resp. X-) projections which add the missing stabilisers of the new code~\footnote{We are grateful to Alexander Cowtan for giving us the correct interpretations through personal communications.}.

\begin{proposition} \label{hilb_representation_chain} 
    Every Z-preserving code map $f_\bul : C_\bul \rightarrow D_\bul$ is interpreted on the physical Hilbert space as an SZX diagram
    \begin{equation}\label{eq:hilb_representation_chain}
        \tikzfig{SZX_f_0}
    \end{equation}
    where $\del_\Omega:=\del_2^D\restriction_\Omega$ with $\Omega$ being any supplementary space of $\im(f_2)$, i.e. $D_2=\im(f_2)\oplus \Omega$.
\end{proposition}

\begin{proof}
    We refer to appendix~\ref{ap:interpretation}.
\end{proof}

A simpler, though less precise, physical interpretation is obtained by replacing $\del_\Omega$ with $\del_2^D$. Since any input state lies in the codespace of $(C_\bul,C^\bul)$, some of the resulting projections act as the identity---this occurs when certain Z-stabilisers are preserved and carried over through the SZX left arrow $f_1^T$. As we point it out in the proof (appendix~\ref{ap:interpretation}), these correspond precisely to elements of $\im f_2$.

We highlight two important points regarding the projection in \cref{eq:hilb_representation_chain}:
\begin{itemize}
    \item $\Omega$ corresponds to the Z-stabiliser subset of $(D_\bul,D^\bul)$ that cannot be obtained as Z-stabilisers of $(C_\bul,C^\bul)$ mapped via $f_2$. In \cite{Cowtan_2024}, the projection onto these Z-stabilisers of the code $(D_\bul,D^\bul)$ was omitted. Consequently, to ensure that the codespace of $(C_\bul,C^\bul)$ is correctly mapped to that of $(D_\bul,D^\bul)$, it was necessary to require the surjectivity of $f_2$ in the Z-preserving code map $f_\bul$. Without this additional condition, we demonstrate in appendix~\ref{ap:counterexample} that, starting from the codespace of $(C_\bul,C^\bul)$ and applying the physical interpretation \cref{eq:hilb_representation_chain}, certain expected Z-stabilisers given by the interpretation of $D_\bul$ may fail to stabilise the resulting space.

    \item In the particular case where $f_2$ is surjective, we get $\Omega=0$ and $\del_\Omega=0$. So, the interpretation simplifies to
    \begin{equation}
        \tikzfig{SZX_f_0_surjective}.
    \end{equation}
    We recover the original interpretation presented in \cite{Cowtan_2024} and in earlier versions of this work.
\end{itemize}

The same reasoning applies dually for cochain maps with the same considerations regarding the projection onto some of the X-stabilisers of $(D_\bul,D^\bul)$.

\begin{proposition} \label{hilb_representation_cochain}
    Every X-preserving code map $g^\bul : C^\bul \rightarrow D^\bul$ is interpreted on the physical Hilbert space as an SZX diagram.
    \begin{equation}\label{eq:hilb_representation_cochain}
        \tikzfig{SZX_g0}
    \end{equation}
    where $\del^\Omega:=\del^1_D\restriction_\Omega$ with $\Omega$ being any supplementary space of $\im(g^0)$, i.e. $D^0=\im(g^0)\oplus \Omega$.
\end{proposition}

The same remarks made for prop.~\ref{hilb_representation_chain} apply here as well dually.

\paragraph{Interpretation of preserving code maps on the logical Hilbert space}
The same way we have used SZX-calculus to define the physical implementation
of preserving code maps, we can describe the quantum channel on the logical
Hilbert space with left and right SZX arrows. There is however one additional
challenge: for a Z-preserving code map $f_\bul : C_\bul \rightarrow D_\bul$,
the lift $f_{1*}: H_1(C_\bul) \rightarrow H_1(D_\bul)$ to the logical space
is not a matrix, but simply a linear map between two $\F_2$-vector spaces
with no preferential choice of basis. This echoes \cref{seq:logop} where we
highlighted the importance of choosing a basis of logical operators to be
able to compute with encoded data. Thus, $f_{1*}$ can be represented as a
matrix if bases of $H_1(C_\bul)$ and $H_1(D_\bul)$ are already chosen.

\begin{definition} \label{def:log_hilb_rep}
    Let $f_\bul : C_\bul \rightarrow D_\bul$ be a Z-preserving code map. Consider two bases $\mathcal{B}$ and $\mathcal{C}$ of $H_1(C_\bul)$ and $H_1(D_\bul)$, respectively. The interpretation of $f_{1*}$ on the logical Hilbert space is given by the following SZX diagram
    \begin{equation}
        \tikzfig{F_0star}
    \end{equation}
    where $F_{1*}$ is the matrix representation $F_{1*} = Mat_{(\mathcal{B},\mathcal{C})}(f_{1*})$ of $f_{1*}$ in the given bases.
\end{definition}

The interpretation of the lift is chosen such that it corresponds to the logical operation induced by the interpretation of the chain map.

\begin{theorem} \label{theorem:soundness}
    Let $f_\bul : C_\bul \rightarrow D_\bul$ be a Z-preserving code map. Consider two type-preserving encoders $E_C$ and $E_D$ for $(C_\bul,C^\bul)$ and $(D_\bul,D^\bul)$, respectively (cf. remark~\ref{rem:type_preserving}). According to prop.~\ref{prop:encoder-basis}, they define a unique choice of basis $\varepsilon_C$ and $\varepsilon_D$. Denoting $\varepsilon_C(0|\delta_i)=\left([0], \left[z^{(C)}_i\right]\right)$ for all $i\in \llbracket1,k_C\rrbracket$ and $\varepsilon_D(0|\delta_j)=\left([0], \left[z^{(D)}_j\right]\right)$ for all $j \in \llbracket1,k_D\rrbracket$, we get 
    \begin{equation}
        \tikzfig{interpretation_proof}=\tikzfig{F_0star}
    \end{equation}
    where $F_{1*} = Mat_{\left(\left\{\left[z^{(C)}_i\right]\right\}_i,\left\{\left[z^{(D)}_j\right]\right\}_j\right)}(f_{1*})$ and $\del_\Omega$ is defined as in prop~\ref{hilb_representation_chain}.
\end{theorem}

\begin{proof}
    We refer to appendix~\ref{ap:interpretation}.
\end{proof}

\begin{definition} \label{def:log_hilb_rep_cochain}
    Let $g^\bul : C^\bul \rightarrow D^\bul$ be an X-preserving code map. Consider two bases $\mathcal{B}$ and $\mathcal{C}$ of $H^1(C_\bul)$ and $H^1(D_\bul)$, respectively. The interpretation of $g^1_*$ on the logical Hilbert space is given by the following SZX diagram
    \begin{equation}
        \tikzfig{G0star}
    \end{equation}
    where $G^1_*$ is the matrix representation $G^1_* = Mat_{(\mathcal{B},\mathcal{C})}(g^1_*)$ of $g^1_*$ in the given bases.
\end{definition}

Finally, we can apply the same reasoning dually for lifts of X-preserving code maps, and get the dual of theorem~\ref{theorem:soundness}.

Theorem~\ref{theorem:soundness} sets the limits of the physical operations captured by our theoretical framework. Preserving code maps can only describe CSS operations on the physical qubits.
Not surprisingly, this limitation yields another one at the logical scale: only CSS operations can be described on the logical qubits. The logical operation realised by a preserving code map does not ``mix'' the types of the logical operators: Z-logicals remain Z-logicals and X-logicals remain X-logicals.
Thus, even though (co)chain complexes describe all implementable CSS codes, (co)chain maps only capture CSS operations. Among the generators $\{CNOT, H, S\}$ of the Clifford group, only the $CNOT$ respects this separation of X- and Z-type operators. So, among the Clifford group only $CNOT$s can be performed at the logical level using preserving code maps. 

While we maintain that the category $\Chains$ does not fully capture CSS codes along with all relevant physical operations, we will continue to work within this category because it provides a comprehensive understanding of how to perform logical CSS operations. The category of chain complexes is well-known and allows us to leverage the properties of the functor $H_1$, which remains an excellent tool for analysing the logical data.

\section{CSS surgery: merges and splits} \label{sec:css_surgery}

In \cite{Cowtan_2024} it was shown that the chain complex formulation of CSS codes yields an elegant construction
for the merge of two CSS codes as a \emph{pushout}\footnote{A pushout is a general categorical construction that captures the idea of gluing two compatible pieces of data along the parts where they agree \cite{leinster2016basiccategorytheory}. Since we do not use this formulation, we will not explicit it any further.} of the corresponding
chain complexes. In this work, we show that \cite{Cowtan_2024}'s construction can be equivalently reformulated as a \emph{quotient} of the chain complexes. The specific advantage of our approach is that it allows us to leverage results from homological algebra to compute the homology of merges and splits. Our framework imposes weaker conditions on the codes than \cite{Cowtan_2024}, meaning we can determine the logical effect of much more general surgery procedures.

We first give a series of motivating examples which illustrate the
key differences between the standard lattice surgery for surface codes
(\cref{sec:lattice_surgery}) and the chain-complex formulation of
surgery, as well as the difficulties one encounters when trying to generalise
lattice surgery naively. Two main questions will guide us throughout
this section: \emph{1) How do we choose physical qubits to be ``merged'' and
``split''?}; and, \emph{2) What logical operations do the resulting merges and
splits implement?} Answering the first question leads to the concept of
\emph{subcodes}, while the second is answered using a homological analysis of the resulting transformations.

\subsection{Motivating examples} \label{seq:motivating_examples}

Before jumping into the general framework of CSS surgery, we introduce relevant examples of surgery protocols each highlighting a different subtlety inherent to surgery. The theoretical framework we introduce in the next subsection natively captures all these subtleties.

\subsubsection{Surface code welding}

Although the formalism of preserving code maps can describe lattice surgery, the construction of merges and splits that we present is conceptually much closer to \textbf{surface code welding} \cite{michnicki20123dquantumstabilizercodes}. Specifically, our approach relies on taking quotients of chain complexes, as introduced in def.~\ref{def:quotientZmerge}. While this quotient structure aligns naturally with the merge operations in surface code welding, it does not directly correspond to those used in lattice surgery.

Consider two patches of surface codes $C$ and $D$, depicted below. Surface code welding consists in merging these two patches along the edges in the orange shaded area. Once
merged, the two surface patches form one big rectangular surface patch $M$. 

\begin{equation}
  \label{eq:merge_split_als}
  \vcenter{\hbox{\includegraphics[width=0.6\linewidth]{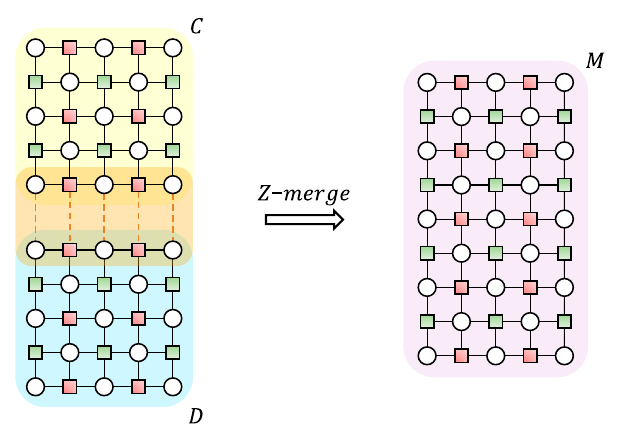}}}
\end{equation}
The Z-merge operation depicts the ``gluing'' of each pair
of qubits and each pair of X-checks at the boundaries of the two surface
patches. In practice, it is carried out by the physical CSS operation
\begin{equation}
  \tikzfig{figures/logical_merge_green} = \ket{0}\bra{00} + \ket{1}\bra{11}
\end{equation}
applied to each pair of qubits to be merged. Notably, this physical operation can be written as the physical interpretation of a Z-preserving code map.

\subsubsection{The subtleties of surgery for arbitrary CSS codes}
\label{sssec:surgery_subtleties}

In light of surface code welding, and in particular the merge of \cref{eq:merge_split_als}, one might expect that merges should
be formalised as simple ``gluings'' of Tanner graph. While our framework based on preserving code maps does incorporate gluing, it ultimately involves more refined operations. Nonetheless, visualising merges as Tanner graph gluings provides valuable intuition, especially for understanding the core challenges in developing a general theory of merges and splits. In order to keep the examples as simple as possible, we use surface codes, while emphasising that the framework we are about to present applies to any CSS codes, and in particular, qLDPC codes.

\paragraph{Many merges lead to the same code.}\label{para:many_splits}
Starting with different pairs of codes, we can get the same merged code by performing different merges. For instance, consider two pairs of surface codes $(C,D)$ and $(C',D')$. Patches $C$, $C'$ and $D'$ are surface codes of size $3 \times 3$ while $D$ is of size $4 \times 3$. Below are represented two different Z-merges performed on the two pairs. The merge occurs within the orange shaded regions of the two codes and, in this case, simply corresponds to a gluing.
\begin{equation}\label{eq:split_one_to_many}
  \vcenter{\hbox{\includegraphics[width=.90\linewidth]{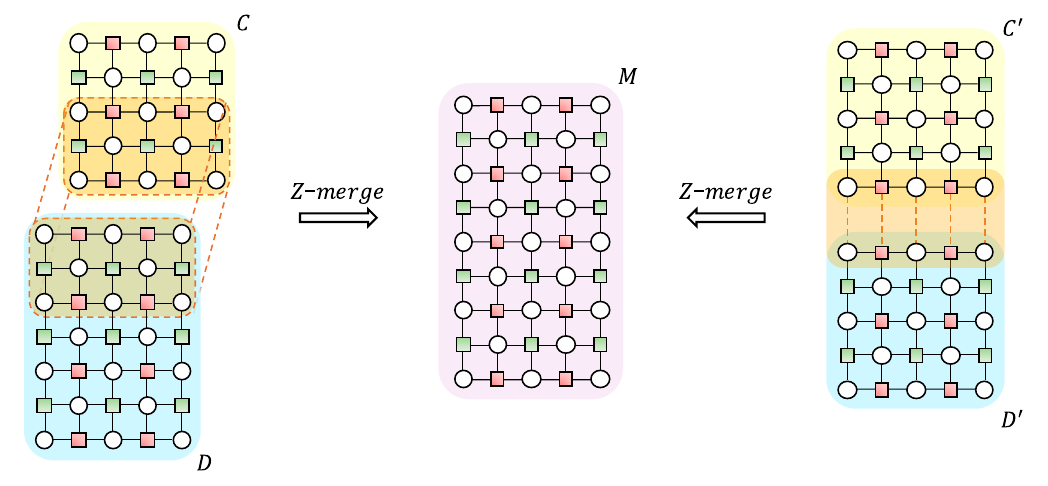}}}
\end{equation} 
Note that the merged code $M$ resulting from these two merges is the same. This leads to a difficulty in the definition of the split: whereas a merge only requires an orange merge-zone to be determined, a split requires a split-zone in addition to a precise specification of the resultant split codes. Whereas there are only a finite number of merges given two CSS codes, there are many splits given \emph{one} CSS code.

\paragraph{Internal merges are possible.}
It is possible for a code to be merged with itself. To illustrate, consider the orange shaded area on the surface code patch $C$.
\begin{equation}\label{eq:internal_merge}
  \vcenter{\hbox{\includegraphics[width=0.55\linewidth]{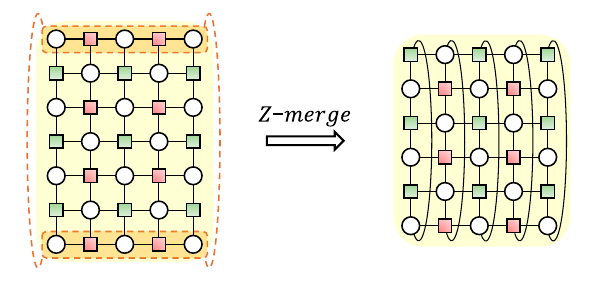}}}
\end{equation}
Gluing the qubits in this area pairwise results in a merged code that is topologically equivalent to a (discretised) cylinder. We could push this further by merging the two boundaries of the cylinder, yielding the famous toric code.

\paragraph{Not all gluings of Tanner graphs yield valid merges.}
So far, the merge operation appears to reduce to choosing a set of qubits and a set of checks to be merged, and then gluing them together. It is unfortunately more subtle. Consider the following ``merge'' where we glue the two qubits in the orange shaded area together.
\begin{equation} \label{eq:wrong_merge}
  \vcenter{\hbox{\includegraphics[width=.90\linewidth]{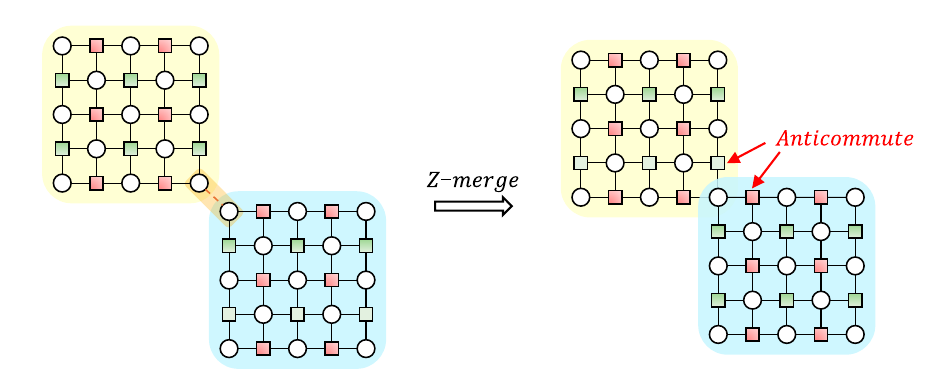}}}
\end{equation}
The resulting merged code is invalid. The Z-check and the X-check marked by the red arrows anti-commute as they overlap only on one qubit. This raises the question: what are the rules for selecting the orange shaded area?

\paragraph{Not all merges can be represented as graph
gluings.}\label{para:not_only_gluings} Tanner graphs implicitly forget the
linear structure of the stabiliser groups at hand.  Whilst the stabiliser
groups, and their linear structure, can always be recovered from the Tanner
graph, this means that not all natural operations at the level of stabiliser
groups have natural representations in terms of the Tanner graphs. Consider
the case of a CSS code with Pauli Z-operators \(Z_1\) and \(Z_2\),
where we want to perform a merge with the Z-operator \(Z_3\) of a second
code. The operator \(Z' \coloneqq Z_1   Z_2\)  is a perfectly valid Z-operator, and we can merge \(Z_3\) with \(Z'\). This merge will almost
certainly not respect the structure of the parity-check matrix written in the
\(\{Z_1,Z_2\}\) basis, and therefore the structure of the Tanner graph will
not be preserved. Nevertheless, this is a valid merge. Giving a more explicit example requires the use of the language
of subcodes which we develop in the following sections, so we defer it to
example~\ref{example:virtual_merge}.

\subsection{Arbitrary merges and splits of arbitrary CSS codes}

Cowtan and Burton's formulation of lattice surgery in terms of pushouts
of chain complexes satisfactorily addresses the difficulties we have just
discussed \cite{Cowtan_2024}. However, whilst their surgery procedure describes
valid merges and splits at the physical level, they were unable to compute the induced logical operation---determined by the
homology---of their procedure without imposing an irreducibility condition on the logical operators of the codes to be merged. The
logical operation implemented by more general surgery procedures remains unknown. We
introduce a subtle refinement of their construction, via the concept of
\emph{subcodes}, which makes it possible to use standard techniques from homological algebra to compute the homology. Since the formalisms are essentially equivalent in how they describe merges, our results also allows one to compute the homology for Cowtan and Burton's formulation, although we chose not to do so here.

\subsubsection{Subcodes}

A problem that arises when representing surgery of CSS codes in terms of gluing and cutting of Tanner graphs is that Tanner graphs forget the linear structure of the stabiliser group since they prioritise tracking the parity checks of the code. On the other hand, stabiliser groups (without a chosen generating set) lack the operational meaning of Tanner graphs, as described in remark~\ref{rem:chain_complex_operational}. Chain complexes track both of these data: the parity-checks are tracked by the boundary maps \(P_Z^T\) and \(P_X\), and the linear structure of stabilisers is tracked by the complexes \(C_2,C_1,C_0\) which are \(\F_2\)-vector spaces. The question that then arises is: what does it mean to merge, or ``glue'', two chain complexes?

Subcodes emerge as an answer to the question posed by the erroneous ``merge'' in \cref{eq:wrong_merge}. Being themselves CSS codes, they are the keystone of merges. As usual, we can define two types of subcodes using either a chain or a cochain complex.

\begin{definition} \label{def:subcode}
  A CSS code
  $(V_\bul,V^\bul)$ is called a \textbf{Z-subcode} of a CSS code $(E_\bul, E^\bul)$ if the inclusion map \(i_\bullet :
  V_\bul \hookrightarrow E_\bul\), where each \(i_n\) is a natural inclusion, defines a valid chain map.
  Analogously, $(V_\bul,V^\bul)$ is called an \textbf{X-subcode} of $(E_\bul, E^\bul)$ if the inclusion map \(i^\bullet : V^\bul \hookrightarrow E^\bul\), with each \(i^n\) an inclusion, defines a valid cochain map.
\end{definition}

\noindent Explicitly, $(V_\bul,V^\bul)$ is a Z-subcode of $(E_\bul, E^\bul)$ if:
\begin{itemize}
    \item $V_n$ is a subvector space of $E_n$ for all $n \in \{0,1,2\}$,
    \item if $x \in V_n$, then $\del_n^Ex \in V_{n-1}$ for  $n \in \{1,2\}$,
\end{itemize}
and the boundary maps $\del_n^V = \del_n^E\restriction_{V_n}$ are restrictions
of the ones of $E_\bul$. The explicit X-subcode description is similar swapping chain and cochain complexes.

The following example aims at building an intuition around the concept of X- and Z-subcode.

\begin{example} 
    We can formally describe the subcode chosen in \cref{eq:internal_merge}. Consider the rectangular surface code $(C_\bul, C^\bul)$, 
    \begin{equation} \label{eq:internal_merge_labeled}     
        \vcenter{\hbox{\includegraphics[width=.3\linewidth]{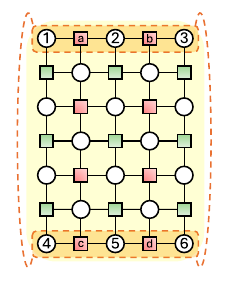}}}
    \end{equation}
    The orange shaded area represents a subcode. Formally, let $z_1,\ldots,z_6$ and $s_a,s_b,s_c, s_d$ be some canonical bases vectors of $C_1$ and $C_0$ corresponding to the qubits and X-checks on \cref{eq:internal_merge}, respectively. The Z-subcode $(V_\bul,V^\bul)$ of $(C_\bul, C^\bul)$ is described by the chain complex
    \begin{equation}
        \begin{tikzcd}
            V_\bul : \ 0 \arrow[r] & \Span\{z_1+z_4,z_2+z_5,z_3+z_6\} \arrow[r] & \Span\{s_a+s_c, s_b+s_d\}
        \end{tikzcd}
    \end{equation}
    One can easily check that $(V_\bul,V^\bul)$ is a valid Z-subcode: the first of the two constraints detailed below def.~\ref{def:subcode} is immediately satisfied, while the second requires to be verified for each basis vector of $V_1$ (for instance, $z_1+z_4 \in V_1$ and $\del_1^C(z_1+z_4) = s_a + s_b$ belongs to $V_0$).
\end{example}

\begin{example}
    Consider the 7-qubits Steane code $(C_\bul, C^\bul)$ pictured below:
    \begin{equation} \label{eq:steane_code}     
    \vcenter{\hbox{\includegraphics[width=.40\linewidth]{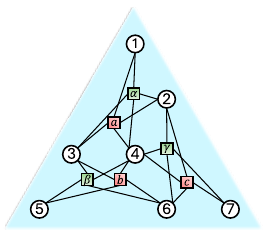}}}
    \end{equation}
    Let $\{s_\alpha,s_\beta,s_\gamma\}$,$\{z_1,z_2,z_3,z_4,z_5,z_6,z_7\}$ and $\{s_a,s_b,s_c\}$ be the canonical bases of $C_2$, $C_1$ and $C_0$, and $\{s^\alpha,s^\beta,s^\gamma\}$, $\{x^1,x^2,x^3,x^4,x^5,x^6,x^7\}$ and $\{s^a,s^b,s^c\}$ be the canonical bases of $C^2$, $C^1$ and $C^0$, respectively. These choices are made up to match the Tanner graph representation \cref{eq:steane_code}, i.e., the boundary maps are indeed the adjacency matrices of the Z-checks and the qubits, and of the qubits and the X-checks. We now give some examples of subcodes. \\
    The chain complex
    \begin{equation}\label{CC}\begin{tikzcd}
        \Span\{s_\alpha+s_\gamma\} \arrow[r] & \Span\{z_1+z_3, z_6+z_7\} \arrow[r] & \Span\{s_b\}
    \end{tikzcd}\end{equation}
    represents a Z-subcode of $(C_\bul, C^\bul)$, while the following cochain complex represents an X-subcode:
    \begin{equation}\begin{tikzcd}
        0 & \arrow[l] \Span\{x^1,x^4,x^5,x^7\} & \arrow[l] \Span\{s^a+s^b+s^c\}
    \end{tikzcd}.
    \end{equation}
    However, the following chain complex does not represent a Z-subcode of $(C_\bul, C^\bul)$
    \begin{equation}\begin{tikzcd}
        0 \arrow[r] & \Span\{z_1+z_2\} \arrow[r] & \Span\{s_a+s_c\}
    \end{tikzcd}.\end{equation}
    Indeed, $\del_1^C(z_1+z_2) = \del_1^C(z_1) + \del_1^C(z_2) = s_a+s_a+s_c=s_c \notin \Span\{s_a+s_c\}$---the second condition below def.~\ref{def:subcode} is not fulfilled.
\end{example}

\begin{remark} \label{rem:lin_comb}
    These examples show the importance of considering linear combinations and
    not just subsets of the basis vectors of the complexes. As such, developing
    this theory using only Tanner graphs would be very inconvenient. When designing more complicated merges where the Tanner representation is
    inadequate, we will abandon the Tanner graph representation in favour of
    the chain complex representation.
\end{remark}

As we are about to explain, subcodes make it possible to express merges as a
simple \emph{quotienting} operation for chain complexes. However, in order
to be able to describe a merge between two distinct CSS codes, we need to
explicit where to look for the corresponding subcodes. Given two CSS codes,
we need to give a single chain complex that represents the joint structure of
each code. This is analogous to how the tensor product \(\mathcal{H} \otimes
\mathcal{J}\) gives a \emph{single} Hilbert space describing the joint state
space of the Hilbert spaces \(\mathcal{H}\) and \(\mathcal{J}\):
\begin{definition} \label{def:direct_sum}
    The \textbf{direct sum} $((C \oplus D)_\bul,(C \oplus D)^\bul)$ of two
    CSS codes $(C_\bul,C^\bul)$ and $(D_\bul,D^\bul)$ is defined using the
    direct sums of the (co)chain complexes:
    \begin{equation}
        \begin{tikzcd}
            (C \oplus D)_\bul : \ C_2 \oplus D_2 \arrow[r, "\del_2^C \oplus \del_2^C"] & C_1 \oplus D_1 \arrow[r, "\del_1^C \oplus \del_1^C"] & C_0 \oplus D_0
        \end{tikzcd},
    \end{equation}
    \begin{equation}
        \begin{tikzcd}
            (C \oplus D)^\bul : \ C^2 \oplus D^2 \arrow[r, "\del^2_C \oplus \del^2_C"] & C^1 \oplus D^1 \arrow[r, "\del^1_C \oplus \del^1_C"] & C^0 \oplus D^0 
        \end{tikzcd}.
    \end{equation}
\end{definition}

\subsubsection{Merges}

Inspired by \cite{Cowtan_2024} and equipped with subcodes, we now define the building blocks of any CSS surgery protocol: merges and splits. These are of two types: X-type and Z-type. This distinction fully leverages the fact that a CSS code can be equivalently represented by either a chain complex or its cochain complex. 

\begin{definition}
    (Z-merge) A Z-merge of a CSS code $(E_\bul,E^\bul)$ is a surjective Z-preserving code map $p_\bul : E_\bul \rightarrow Q_\bul$ with $(Q_\bul,Q^\bul)$ being the \textbf{Z-merged code}. In other words, each of the three maps $p_2,p_1$ and $p_0$ needs to be surjective.
\end{definition}

While for any Z-merge we can analyse the logical operation induced by such a Z-preserving code map as presented in prop.~\ref{H_merge}, this fairly abstract definition does not provide a systematic way to construct a Z-merge. For this reason, we propose a second approach of the Z-merges, the \emph{quotient merges}, that leverages the notion of subcodes to provide a systematic construction of an epic Z-preserving code map.

\begin{definition} \label{def:quotientZmerge}
    (Quotient Z-merge) Consider a CSS code $(E_\bul,E^\bul)$, and a Z-subcode $(V_\bul, V^\bul)$ of $(E_\bul,E^\bul)$. A quotient Z-merge is a Z-merge $p_\bul : E_\bul \rightarrow (E/V)_\bul$ such that the Z-merged code $(E/V)_\bul$ is defined as
    \begin{equation}
        \begin{tikzcd}
            E_2 /V_2 \arrow[r, "\del_2^/"] & E_1/V_1 \arrow[r,"\del_1^/"] & E_0 / V_0
        \end{tikzcd}
    \end{equation}
    where for all $n \in \{0,1,2\}$ and $x \in E_n$,
    \begin{equation}
        \del_n^/[x] := [\del_n^Ex].
    \end{equation}
    The maps $p_n:E_n \rightarrow (E/V)_n$ are simply given by the respective
    projection onto the equivalence class: for any \(x \in E_n\), $p_n(x):=[x]_{V_n}$.
\end{definition}

A few remarks can be made about \cref{def:quotientZmerge}. First, $(V_\bul,V^\bul)$ needs to be a Z-subcode of $(E_\bul,E^\bul)$ for $(E/V)_\bul$ to be well defined. Second, according to prop.~\ref{hilb_representation_chain}, since $p_1$ is surjective, a merge cannot introduce any auxiliary physical qubits. This clarifies why lattice surgery merges cannot be directly represented as merges within our formalism. Third, for all $n \in \Z$, $\del_n^/$ is not defined as a matrix but as a linear map instead. To turn it into a matrix, we would have to choose a basis for each quotient space $E_n/V_n$. However, there is no natural choice of such bases, even though bases of $E_n$'s and $V_n$'s are provided. So, this choice has to be made arbitrarily. Although, the code dimension is not affected by this choice, some other parameters will. For instance, the code distance and the weight of the generators of the stabiliser group of $((E/V)_\bul, (E/V)^\bul)$ will be impacted. Also, the Z-merge $p_\bul$ will result in different matrices depending on the choices of basis (but which are all equivalent up to a multiplication on the left by an invertible matrix). Overall, these choices impact the physical implementation of $p_\bul$ as well as the QEC properties of $((E/V)_\bul, (E/V)^\bul)$. This is why this problem has to be dealt with on a case-by-case basis, depending on the fault-tolerant properties we want our circuit to have. 

The question that now arises is: can any merge be seen as a quotient merge for an appropriate subcode? The answer is negative. More precisely, \emph{every Z-merge differs from a quotient Z-merge by an isomorphic chain map} as it is proven in \cref{ap:merge-quotient}. According to prop.~\ref{hilb_representation_chain}, this implies that any Z-merge is a quotient Z-merge followed by an array of CNOTS (represented by the isomorphic chain map). While this is important from a fault-tolerant perspective, as two isomorphic chain complexes do not necessarily describe CSS codes with the same properties (code distance, weight of the stabilisers...), in the rest of this work, we mainly focus on quotient Z-merges for both the simplicity of their construction and the fact that they carry the non invertible structure of any Z-merge. 

Plugging $((C \oplus D)_\bul,(C \oplus D)^\bul)$ in prop.~\ref{def:quotientZmerge}, we get the definition of a \textbf{quotient Z-merge between $(C_\bul,C^\bul)$ and $(D_\bul,D^\bul)$ along $(V_\bul,V^\bul)$}. We now provide some examples to develop intuition on quotient merges.

\begin{example} \label{ex:merge_als}
    Consider the case of surface code welding.
    Such a merge can be visualised using Tanner graphs: 
    \begin{equation}
      \begin{tikzcd}
        \vcenter{\hbox{\includegraphics{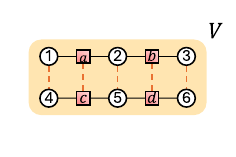}}} \arrow[hook, r] & \vcenter{\hbox{\includegraphics{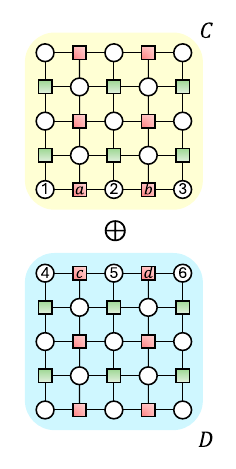}}} \arrow[two heads, r] & \vcenter{\hbox{\includegraphics{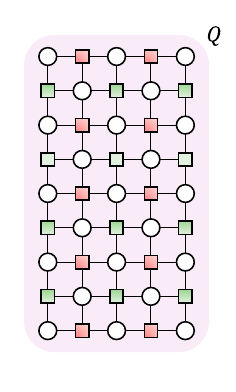}}}
      \end{tikzcd}
    \end{equation}
    Here, $(V_\bul,V^\bul)$ is a Z-subcode of $((C \oplus D)_\bul, (C \oplus   D)^\bul)$ described by
    \begin{equation}
        \begin{tikzcd}
            V_\bul : \ 0 \arrow[r] & \Span\{z_1+z_4,z_2+z_5,z_3+z_6\} \arrow[r] & \Span\{s_a+s_c, s_b+s_d\}
        \end{tikzcd}.
    \end{equation}
    So, in the Z-merged code $(Q_\bul, Q^\bul)$, as $[z_1]=[z_4]$ in $Q_1=(C_1 \oplus D_1)/\Span\{z_1+z_4,z_2+z_5,z_3+z_6\}$, the physical operators $z_1$ and $z_4$ are equals. This explains why we usually say that ``the qubits 1 and 4 have been merged together'': applying a Z pauli to the qubit 1 and merging will perform the same physical operation as applying a Z pauli to the qubit 4 and then merging.
\end{example}

\begin{example}\label{ex:merge_partial_boundary}
    We now consider the same surface code patches $(C_\bul, C^\bul)$ and $(D_\bul, D^\bul)$, but choose another Z-subcode $(V_\bul, V^\bul)$ as pictured by the orange Tanner graph below
    \begin{equation}\begin{tikzcd}
        \vcenter{\hbox{\includegraphics{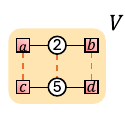}}} \arrow[hook, r] & \vcenter{\hbox{\includegraphics{figures/als_direct_sum.pdf}}} \arrow[two heads, r] & \vcenter{\hbox{\includegraphics{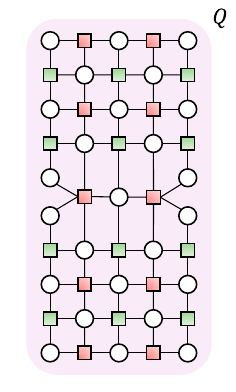}}}
    \end{tikzcd}\end{equation}
    $(V_\bul,V^\bul)$ is again a Z-subcode of $((C \oplus D)_\bul, (C \oplus   D)^\bul)$ described by
    \begin{equation}
        \begin{tikzcd}
            V_\bul : \ 0 \arrow[r] & \Span\{z_2+z_5\} \arrow[r] & \Span\{s_a+s_c, s_b+s_d\}
        \end{tikzcd}.
    \end{equation}
    We can verify that $(V_\bul,V^\bul)$ is a valid Z-subcode as $\del_1^{C \oplus D}(z_2+z_5)=s_a+s_c+s_b+s_d \in V_0$. The fact that the entire boundary is not merged has some interesting consequences for the logical operators of the Z-merged code $(Q_\bul, Q^\bul)$. We defer discussion of this to example~\ref{ex:homology_of_merge_partial_boundary}.
\end{example}

\begin{example}
\label{example:virtual_merge}
We are finally ready to give an example of a \emph{virtual} merge.
Our construction of merges in terms of chain complexes naturally captures the following
example, and many more which are impossible to represent at all in terms of
Tanner graph gluing. This makes our construction fundamentally different from code welding \cite{michnicki20123dquantumstabilizercodes} whilst generalising it. Consider the two CSS codes $(C_\bul,C^\bul)$ and $(D_\bul,D^\bul)$ defined by the chain complexes
\begin{equation}
    \left\{\begin{array}{cc}
        C_\bul :& \Span\{s_\alpha\} \xrightarrow{\begin{pmatrix}
            1 & 1
        \end{pmatrix}^T}  \Span\{z_1,z_2\} \longrightarrow  0  \\
        D_\bul :& 0 \longrightarrow  \Span\{z_3\} \xrightarrow{(1)}  \Span\{s_a\}
    \end{array}\right.
\end{equation}
with $z_1 = \begin{pmatrix}
    1 & 0
\end{pmatrix}^T$, $z_2 = \begin{pmatrix}
    0 & 1
\end{pmatrix}^T$ in $\F_2^2$, and $s_a=s_\alpha=z_3=(1) \in \F_2$. Now choose the Z-subcode $(V_\bul,V^\bul)$ of $((C\oplus D)_\bul,(C \oplus D)^\bul)$ defined by the chain complex
\begin{equation}
    V_\bul : 0 \longrightarrow  \Span\{(z_1+z_2,z_3)\} \xrightarrow{0_{\F_2^2} \oplus (1)}  \Span\{(0,s_a)\}
\end{equation}
where $(z_1+z_2,z_3) \in \F_2^2 \oplus \F_2$, $(0,s_a) \in \F_2 \oplus \F_2$ and $0_{\F_2^2} \oplus (1) = \begin{pmatrix}
    0&0&0\\
    0&0&0\\
    0&0&1
\end{pmatrix}$. Then, the corresponding quotient Z-merge $p_\bul : (C\oplus D)_\bul \rightarrow Q_\bul$ is 
\begin{equation}
    \begin{tikzcd}
        \Span\{s_\alpha\}\oplus 0 \arrow[r] \arrow[d, two heads, "p_2"] & \Span\{z_1,z_2\} \oplus \Span\{z_3\} \arrow[r] \arrow[d, two heads, "p_1"] & 0_{\F_2^2} \oplus\Span\{s_a\} \arrow[d, two heads, "p_0"] \\
        \Span\{s_\alpha\}\oplus 0 \arrow[r] & \frac{\Span\{z_1,z_2\} \oplus \Span\{z_3\}}{\Span\{(z_1+z_2,z_3)\}} \arrow[r] & 0
    \end{tikzcd}
\end{equation}
where $(Q_\bul, Q^\bul)$ is the quotient Z-merged code. To explicit $p_2$, $p_1$ and $p_0$ as matrices, we consider $\{[z_1],[z_2]\}$ as a basis of $Q_1$. Using prop.~\ref{hilb_representation_chain}, as $p_2$ is surjective, the Z-merge represents the physical operation
\begin{equation}
    \tikzfig{figures/ex_ZX0} \ = \ \tikzfig{figures/ex_ZX}.
\end{equation}
If we were to try to visualize the merging operation with Tanner graphs, it would be pictured like this
\begin{equation}
      \begin{tikzcd}
        \vcenter{\hbox{\includegraphics[scale=1.2]{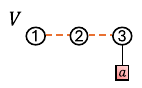}}} \arrow[r, hook] & \vcenter{\hbox{\includegraphics[scale=1.2]{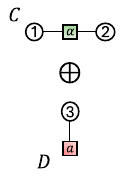}}} \arrow[r, two heads] & \vcenter{\hbox{\includegraphics[scale=1.2]{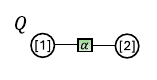}}}
      \end{tikzcd}
    \end{equation}
where the orange dashed lines in $(V_\bul,V^\bul)$ aim at picturing the Z-operator $(z_1+z_2,z_3)$. A major subtlety is that the qubit $3$ has been glued neither to qubit $1$ nor to qubit $2$. However, we did not simply discard qubit $3$ and X-check $a$. The nature of the operation to get $(Q_\bul,Q^\bul)$ from $((C\oplus D)_\bul,(C \oplus D)^\bul)$ can only be understood using chain complexes.
These new kinds of operations going beyond welding introduced in \cite{michnicki20123dquantumstabilizercodes} leave room for creating more sophisticated physical (as well as logical) operations.
\end{example}

X-merges are defined similarly to be surjective X-preserving code maps.

\begin{definition} \label{def:Xmerge}
    (X-merge) An X-merge of a CSS code $(E_\bul,E^\bul)$ is a surjective X-preserving code map $p^\bul : E^\bul \rightarrow Q^\bul$ with $(Q_\bul,Q^\bul)$ being the \textbf{X-merged code}.
\end{definition}

We can also define quotient X-merges using the dual of the prop.~\ref{def:quotientZmerge}.

\begin{definition} (Quotient X-merge)
    Let $(E_\bul, E^\bul)$ be a CSS code, and consider the X-subcode $(V_\bul, V^\bul)$ of $(E_\bul, E^\bul)$. The quotient X-merge is an X-merge $p^\bul: E^\bul \rightarrow (E/V)^\bul$ such that the X-merged code $((E/V)_\bul,(E/V)^\bul)$ is defined by the cochain complex 
    \begin{equation}
        \begin{tikzcd}
            E^2 /V^2 & \arrow[l, "\del^2_/"'] E^1/V^1 & \arrow[l,"\del^1_/"'] E^0 / V^0
        \end{tikzcd}
    \end{equation}
    where for all $n \in \{0,1,2\}$ and $x \in E^n$,
    \begin{equation}
        \del^n_/[x] := [\del^n_Ex] 
    \end{equation}
    The $p^n: E^n \rightarrow (E/V)^n$ are projections onto the equivalence class.
\end{definition}

The discussion made about the relation between Z-merges and quotient Z-merges similarly holds for X-merges and quotient X-merges.

\begin{remark}
    This construction is slightly different from the one given in \cite{Cowtan_2024}. It constrains $(V_\bul, V^\bul)$ to be a subcode of $((C \oplus D)_\bul, (C \oplus   D)^\bul)$.
    We prove in \cref{ap:eq_construction} that this construction is equivalent to the one in \cite{Cowtan_2024}, namely that all the merges constructed using general spans can be derived from a jointly monic span, i.e. using a subcode as the apex of the span.
\end{remark}

\subsubsection{Splits} \label{sec:split}

The definition of splits is very similar to the one of merges with the difference that the preserving code map must be injective.

\begin{definition}
    (Z/X-splits) A Z- (resp. X-) split of $(E_\bul,E^\bul)$ is an injective Z- (resp. X-) preserving code map $s_\bul : E_\bul \rightarrow T_\bul$ (resp. $s^\bul : E^\bul \rightarrow T^\bul$). In other words, each of the three maps $s_2,s_1$ and $s_0$ (resp. $s^2,s^1$ and $s^0$) must be injective.
\end{definition}

Splits are inherently more difficult to design than merges. For a given chain complex \( E_\bullet \), there are only finitely many chain complexes \( Q_\bullet \) and surjective chain maps \( p_\bullet : E_\bullet \to Q_\bullet \), due to dimensional constraints. In contrast, there are infinitely many possible pairs \( (T_\bullet, s_\bullet) \) where \( s_\bullet : E_\bullet \to T_\bullet \) is an injective chain map. This asymmetry can be seen from a basic dimensional argument: surjectivity of $p_\bul$ requires \( \dim Q_n \leq \dim E_n \) for $n \in \{0,1,2\}$, which bounds the number of target complexes, whereas injectivity of $s_\bul$ only requires \( \dim T_n \geq \dim E_n \), leaving infinitely many options. This echoes \cref{para:many_splits}.

So, even though a symmetric of def.~\ref{def:quotientZmerge} could be derived for splits, it would require to introduce the pending of subcodes for splits\footnote{When looking at the exact sequence in appendix~\ref{ap:merge-quotient}, we see that a subcode plays the role of the kernel of a merge. For splits, one would need to define the equivalent of subcodes for cokernels.}, that would likely be less intuitive than subcodes.
For this reason, most of the time, we prefer to restrict splits to be the dual of merges. In fact, from a Z-merge $p_\bul$ (resp. X-merge $p^\bul$), one can define the X-split $p^\bul$ (resp. Z-split $p_\bul$)~\footnote{This approach of splits is the same as the one presented in \cite{Cowtan_2024}.}.

\subsection{Homology of the merges and splits}

In the previous subsection, we established how the physical operations of merging and splitting can be described using preserving code maps. The
next step is to understand these transformations at the logical level. We
begin by considering the general case of an arbitrary \(Z\)-merge operation,
and show that results from homological algebra can be applied to make this
computation. This analysis has its limits: in the general case, the exact form of the logical
operation can only be obtained by specific analysis of the matrices involved
in the preserving code maps. However, in some specific cases, for example when the homology space
\(H_0(V_\bullet)\) is trivial, we can say a lot more about the physical
operation without requiring this matrix analysis. This setting will
be a key assumption of our implementation of the CNOT gates in \cref{sec:ft_cnot}.

\subsubsection{The general case}

Suppose that we have performed a Z-merge of $(E_\bul, E^\bul)$ along the Z-subcode
$(V_\bul, V^\bul)$ to get the Z-merged code $(Q_\bul, Q^\bul)$. We wish to
know how the logical operators of the Z-merged code $(Q_\bul, Q^\bul)$, that
is, $H_1(Q_\bul)$ and $H^1(Q_\bul)$, can be expressed in terms of those of
$(E_\bul, E^\bul)$ and the Z-subcode $(V_\bul, V^\bul)$. Unfortunately,
the relation between these spaces isn't given by a simple equality
or isomorphism. We will need more sophisticated tools from homological
algebra called \emph{exact sequences} (see prop.~\ref{exact_sequence}
for a refresher).

\begin{proposition} \label{H_merge}
    Let $(V_\bul, V^\bul)$ be a Z-subcode of $(E_\bul, E^\bul)$, and $(Q_\bul, Q^\bul)$ be the Z-merged code of $(E_\bul, E^\bul)$ along $(V_\bul, V^\bul)$. Consider the inclusion chain map $i_\bul : V_\bul \hookrightarrow E_\bul$ and the Z-merge $p_\bul$.
    The following short sequence is exact
    \begin{equation} \label{eq:short_exact_sequence}
        \begin{tikzcd}
            0_\bul \arrow[r] & V_\bul \arrow[r, "i_\bul"] & E_\bul \arrow[r, "p_\bul"] & Q_\bul \arrow[r] & 0_\bul
        \end{tikzcd}.
    \end{equation}
    It lifts to a long exact sequence on homology spaces
    \begin{equation} \label{eq:long_exact_sequence_homology}
        \begin{tikzcd}
            \ldots \arrow[r] & H_1(V_\bul) \arrow[r, "i_{1*}"] & H_1(E_\bul) \arrow[r, "p_{1*}"] & H_1(Q_\bul) \arrow[r, "\del_0"] & H_0(V_\bul) \arrow[r] & \ldots
        \end{tikzcd}.
    \end{equation}
\end{proposition}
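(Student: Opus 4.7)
The plan is to recognise this as a direct instance of the classical \emph{zig-zag lemma} from homological algebra: any short exact sequence of chain complexes induces a long exact sequence on homology. So the work reduces to verifying that the sequence in \cref{eq:short_exact_sequence} is indeed a short exact sequence of chain complexes, after which \cref{eq:long_exact_sequence_homology} is a standard consequence.

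First I would verify exactness of \cref{eq:short_exact_sequence} in each degree $n \in \{0,1,2\}$. By \cref{def:subcode}, each $i_n : V_n \to E_n$ is an inclusion, hence injective, so the sequence is exact at $V_\bul$. By \cref{def:quotientZmerge}, the merge $p_n : E_n \to E_n/V_n = Q_n$ is the canonical projection onto the equivalence class; this is surjective (giving exactness at $Q_\bul$) and has kernel exactly $V_n = \im(i_n)$ (giving exactness at $E_\bul$). The fact that $i_\bul$ and $p_\bul$ are chain maps follows from $V_\bul$ being a subcomplex and from the definition $\del_n^/[x] = [\del_n^E x]$, which makes the projections intertwine the boundaries.

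Next I would invoke the zig-zag lemma, which produces the connecting map $\del_0 : H_1(Q_\bul) \to H_0(V_\bul)$ by the standard diagram chase: given $[q] \in H_1(Q_\bul)$, choose any lift $e \in E_1$ with $p_1(e) = q$; then $p_0(\del_1^E e) = \del_1^Q p_1(e) = \del_1^Q q = 0$, so $\del_1^E e$ lies in $\ker(p_0) = \im(i_0)$ and can be identified with an element of $V_0$; set $\del_0([q]) := [\del_1^E e] \in H_0(V_\bul)$. The lemma then asserts that this map is well-defined and that the resulting long sequence
\begin{equation*}
    \cdots \longrightarrow H_n(V_\bul) \xrightarrow{i_{n*}} H_n(E_\bul) \xrightarrow{p_{n*}} H_n(Q_\bul) \xrightarrow{\del_{n-1}} H_{n-1}(V_\bul) \longrightarrow \cdots
\end{equation*}
is exact at every spot, which specialises to \cref{eq:long_exact_sequence_homology} in the degrees relevant here.

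I do not expect a genuine obstacle: the zig-zag lemma is entirely standard and the verification of exactness in each degree is a straightforward unpacking of \cref{def:subcode} and \cref{def:quotientZmerge}. The main thing to get right is the bookkeeping in degree $0$: because the chain complexes have length two, one must be careful that the sequence truncates correctly (the left-hand end at $H_0(V_\bul)$ is as far as one can go, since there are no $H_{-1}$ terms). I would therefore present the argument by first stating the short exact sequence and checking exactness in each degree, then simply citing a standard reference for the zig-zag lemma (e.g.\ Weibel, \emph{An Introduction to Homological Algebra}) rather than reproducing the diagram chase in full.
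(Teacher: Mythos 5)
Your proof is correct and follows essentially the same route as the paper's: verify that $i_n$ is injective, $p_n$ is surjective, and $\ker(p_n)=\im(i_n)$ in each degree, then invoke the standard long-exact-sequence (zig-zag) lemma for short exact sequences of chain complexes (the paper cites Hatcher where you cite Weibel). The only cosmetic difference is that you spell out the construction of the connecting map $\del_0$, which the paper deliberately leaves implicit since only its existence is used.
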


For the purpose of this work, we do not need an explicit definition of $\del_0$. We are only interested in the fact that such a map exists.

\begin{proof}
  By definition of the inclusion map $i_\bul$ and the Z-merge (quotient map)
  $p_\bul$, for all $n \in \{0,1,2\}$, $i_n$ and $p_n$ are injective and
  surjective, respectively. Furthermore, the quotient
  is constructed such that $\ker(p_n)=\im(i_n)$, so the sequence in
  \cref{eq:short_exact_sequence} is exact. The rest of the proof follows as
  a direct consequence of theorem~\ref{Hatcher} \cite{Hatcher:478079}.
\end{proof}

\begin{remark} (Irreducibility)
    Note that for the above proposition, no assumptions about irreducibility are required to describe the logical operation performed. This is one of the biggest advantages of picturing surgery using chain complexes instead of Tanner graphs. It echos directly to \cref{para:not_only_gluings} where the physical operation cannot be captured as a Tanner graph gluing.
\end{remark}

The question that remains relates to how this exact sequence on
the homology spaces can be used to understand the logical operation performed
by the merge. We are interested in two unknowns in the exact
sequence \cref{eq:long_exact_sequence_homology}: the logical map $p_{1*}$
and the Z-logical operators of the Z-merged code $H_1(Q_\bul)$.  We can deduce
many properties of $H_1(Q_\bul)$ and $p_{1*}$ based on our knowledge of
$H_1(V_\bul)$, $H_1(E_\bul)$ and $H_0(V_\bul)$. In particular, when either
$H_1(V_\bul)=0$ or $H_0(V_\bul)=0$, the logical operation and $H_1(Q_\bul)$
can be straightforwardly inferred. For now, we provide examples of how to
use the exact sequence \cref{eq:long_exact_sequence_homology} based on the
examples of merges given in the previous section. Subsection~\ref{sec:H_-1=0}
will then give a detailed analysis in the case when $H_0(V_\bul)=0$ (and by duality, when $H^2(V_\bul)=0$).

\begin{example} (Following example~\ref{ex:merge_als})
    Let us write $Z_C = [z_1+z_2+z_3]$ and $Z_D = [z_4+z_5+z_6]$ as the Z-logical operators of $(C_\bul, C^\bul)$ and $(D_\bul, D^\bul)$, respectively. As $\im(\del_1^{C \oplus D} \restriction_{V_1}) = V_0$, we have $H_0(V_\bul)=V_0/\im(\del_1^V)=0$. So, according to the exact sequence, $p_{1*}$ is surjective. 
    
    Moreover, $(V_\bul, V^\bul)$ contains a Z-logical operator, $Z_C+Z_D$, that is non-trivial when embedded in the Z-logical operators of $((C \oplus D)_\bul, (C \oplus D)^\bul)$, that is, it is not a stabiliser. In the language of homology, this means that $Z_C+Z_D \in H_1(V_\bul)$ and $i_{1*}(Z_C+Z_D) \neq [0]$. As a matter of facts, it is the only one in the image: $\im(i_{1*})=\Span \{Z_C + Z_D\}$\footnote{$Z_C$ and $Z_D$ are to be considered as vectors of $H_1((C \oplus D)_\bul)$.}. Therefore, as $\ker(p_{1*}) = \im(i_{1*})$, we deduce that $p_{1*}(Z_C+Z_D)=[Z_C] + [Z_D] = [0]$ in $H_1(Q_\bul)$. This is the usual relation of lattice surgery: $[Z_C] = [Z_D]$.
\end{example}

\begin{example} \label{ex:homology_of_merge_partial_boundary} (Following example~\ref{ex:merge_partial_boundary})
    In this example, the subcode $(V_\bul,V^\bul)$ does not have any Z-logical operator, which translates in $H_1(V_\bul) = 0$. It implies that $p_{1*}$ is injective, so $\dim H_1((C \oplus D)_\bul) \leq \dim H_1(Q_\bul)$. Moreover, $H_0(V_\bul) \neq 0$. It implies that $\dim H_1((C \oplus D)_\bul) < \dim H_1(Q_\bul)$, hence, a Z-logical operator has been created by the merge. 
    
    This new Z-logical operator can easily be visualised in the Tanner graph. One of its representatives is pictured by the green line below
    \begin{equation}            
        \vcenter{\hbox{\includegraphics{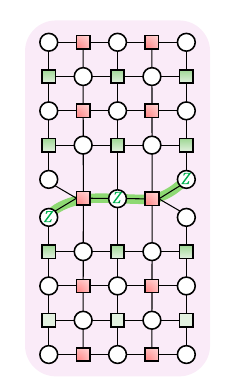}}}
    \end{equation} 
    So, a merge does not only quotient logical operators, it can also create some new ones.
\end{example}
    
\begin{example}
    We have seen with the two previous examples that in some cases, we are able to predict which logical operators appear or vanish. This allows us to give a basis of Z-logical operators of $(Q_\bul, Q^\bul)$, and in some simple cases, to relate this basis with the one of $((C \oplus D)_\bul, (C \oplus D)^\bul)$, thus describing the logical operation performed. However, it is not always as easy. For instance, consider the Z-merge where $H_0(V_\bul) \neq 0$ and $H_1(V_\bul) \neq 0$:
    \begin{equation}\begin{tikzcd}
        \vcenter{\hbox{\includegraphics{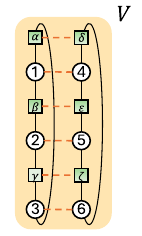}}} \arrow[hook, r] & \vcenter{\hbox{\includegraphics{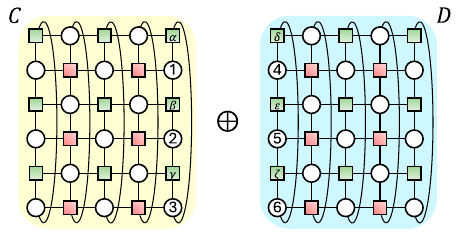}}} \\
        \arrow[two heads, r] & \vcenter{\hbox{\includegraphics{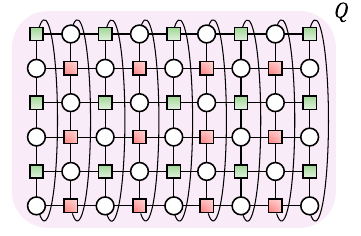}}}
    \end{tikzcd}\end{equation}
    The codes $(C_\bul, C^\bul)$ and $(D_\bul, D^\bul)$ are ``cylinder codes''. The result of the merge is a bigger cylinder code. As both $H_0(V_\bul) \neq 0$ and $H_1(V_\bul) \neq 0$, we cannot make any direct statement about the relation between $H_1((C \oplus D)_\bul)$ and $H_1(Q_\bul)$. Conducting a more in depth analysis of $p_{1*}$ is then necessary.
\end{example}

Although the exact sequence \cref{eq:long_exact_sequence_homology} does not suffice alone to determine $H_1(Q_\bul)$ and $p_{1*}$ in general, it still provides some valuable information on the logical operators of $(Q_\bul,Q^\bul)$. We can classify the Z-logical operators in $H_1(Q_\bul) = \im(p_{1*}) \oplus \ker(\del_0)^\perp$.
The ones in $\im(p_{1*})$ are inherited from $H_1(E_\bul)$, whereas the ones in $\ker(\del_0)^\perp$ are created by the merge. 

It can be easily pictured using def.~\ref{def:log_hilb_rep}, namely the representation of $p_{1*}$ on the logical Hilbert space. The Z-logical operators in $\im(i_{1*}) = \ker(p_{1*})$ vanish while some new X-logical operators in $\im(\del^0) = \ker(p^1_*)$ appear. In fact, for $[u] \in \im(i_{1*})$ and $[v] \in \im(\del^0)$, we have
\begin{align}
    \tikzfig{p_0star_left_true}  = \tikzfig{p_0star_right2} = \tikzfig{p_0star} \\
    \tikzfig{p_0star} = \tikzfig{p_0star_leftred} = \tikzfig{p_0star_rightred}.
\end{align}
However, the deletion or creation of X- or Z-logical operators does not necessarily imply the addition or removal of logical qubits. There are two specific cases where we can confidently assert that logical qubits have been added or removed, that follow straightforwardly from \cref{eq:long_exact_sequence_homology}:
\begin{corollary}
    \label{lem:surjective_logical}
    The logical operation $p_{1*}$ is surjective if and only if $H_0(V_\bul) = 0$, and then $\dim H_1((C \oplus D)_\bul) \geq \dim H_1(Q_\bul)$; the number of logical qubits may have decreased.
\end{corollary}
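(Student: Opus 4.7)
The plan is to read the corollary directly off the long exact sequence on homology that was already proved in \cref{H_merge}. The only segment of the sequence I will need is the three consecutive terms around $H_1(Q_\bul)$,
\[
    H_1(E_\bul) \xrightarrow{\,p_{1*}\,} H_1(Q_\bul) \xrightarrow{\,\del_0\,} H_0(V_\bul),
\]
with exactness at $H_1(Q_\bul)$ giving $\im(p_{1*}) = \ker(\del_0)$. Hence $p_{1*}$ is surjective if and only if $\del_0$ is identically zero, and the whole proof reduces to translating this condition into a statement about $H_0(V_\bul)$.

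For the \emph{if} direction, I would simply note that when $H_0(V_\bul)=0$ the codomain of $\del_0$ is trivial, so $\del_0 = 0$ automatically, $\ker(\del_0) = H_1(Q_\bul)$, and by exactness $\im(p_{1*}) = H_1(Q_\bul)$. This is the direction that does all the work in \cref{sec:ft_cnot}. For the converse, surjectivity of $p_{1*}$ forces $\del_0 = 0$; then exactness one step further along the sequence gives $\ker(i_{0*}) = \im(\del_0) = 0$, so $i_{0*} : H_0(V_\bul) \to H_0(E_\bul)$ is injective. In the settings of interest the Z-subcode $(V_\bul,V^\bul)$ is chosen so that $V_0 = 0$ (this is exactly the hypothesis needed in \cref{prop:associated_split} for the associated split to exist), which already forces $H_0(V_\bul) = V_0/\im(\del_1^V) = 0$, closing the equivalence.

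The dimension inequality $\dim H_1((C \oplus D)_\bul) \geq \dim H_1(Q_\bul)$ is then immediate: a surjection between finite-dimensional $\F_2$-vector spaces cannot strictly increase dimension, and $H_1(E_\bul) = H_1((C \oplus D)_\bul)$ because the merge is applied to the direct sum $(C \oplus D)_\bul$.

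There is no real obstacle: the corollary is a one-line consequence of the long exact sequence, and the only care required is bookkeeping---identifying the correct three-term segment, keeping track of the direction of $\del_0$ (which lands in, not emanates from, $H_0(V_\bul)$), and remembering that the codomain of $p_{1*}$ is $H_1(Q_\bul)$ rather than $H_1(V_\bul)$.
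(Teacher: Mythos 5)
Your argument for the forward direction ($H_0(V_\bul) = 0 \Rightarrow p_{1*}$ surjective) and for the dimension inequality is exactly right, and matches the intended reading of the long exact sequence from \cref{H_merge}.

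The converse, however, has a genuine gap. Surjectivity of $p_{1*}$ gives $\del_0 = 0$, and exactness at $H_0(V_\bul)$ then yields $\ker(i_{0*}) = \im(\del_0) = 0$, i.e.\ $i_{0*} : H_0(V_\bul) \to H_0(E_\bul)$ is \emph{injective}. But injectivity of $i_{0*}$ is strictly weaker than $H_0(V_\bul) = 0$, and the step you use to close the gap---invoking the choice $V_0 = 0$ from \cref{prop:associated_split}---is not a consequence of surjectivity of $p_{1*}$; it is an independent hypothesis imposed on the subcode, and once you assume it the ``only if'' direction becomes vacuous. In fact the converse fails from the long exact sequence alone: take $E_\bul : 0 \to \F_2 \xrightarrow{\;0\;} \F_2$ and $V_\bul = E_\bul$ with the identity inclusion, so $Q_\bul = 0_\bul$; then $p_{1*} : \F_2 \to 0$ is (vacuously) surjective, yet $H_0(V_\bul) = \F_2 \neq 0$. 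What the sequence honestly delivers is the equivalence ``$p_{1*}$ surjective $\Leftrightarrow i_{0*}$ injective''. Upgrading the latter to $H_0(V_\bul) = 0$ requires a further hypothesis, e.g.\ $H_0(E_\bul) = 0$ (no redundant $X$-checks in the ambient code), under which injectivity into a trivial space does force triviality; your write-up should record that missing hypothesis explicitly rather than absorb it silently.
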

\begin{corollary}
    \label{lem:injective_logical}
    The logical operation $p_{1*}$ is injective if and only if $H_1(V_\bul) = 0$, and then $\dim H_1((C \oplus D)_\bul) \leq \dim H_1(Q_\bul)$; the number of logical qubits may have increased.
\end{corollary}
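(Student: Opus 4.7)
The plan is to extract the corollary directly from the long exact sequence on homology of Proposition~\ref{H_merge}, specialised to $E_\bul = (C \oplus D)_\bul$ and the Z-merge $p_\bul : E_\bul \to Q_\bul$ along the Z-subcode $V_\bul$. The segment of interest is
\begin{equation*}
  \begin{tikzcd}
    H_2(Q_\bul) \arrow[r, "\del_1"] & H_1(V_\bul) \arrow[r, "i_{1*}"] & H_1(E_\bul) \arrow[r, "p_{1*}"] & H_1(Q_\bul),
  \end{tikzcd}
\end{equation*}
and the single piece of input actually needed is the exactness identity $\ker(p_{1*}) = \im(i_{1*})$ at $H_1(E_\bul)$.

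First I would dispatch the implication ($\Leftarrow$): if $H_1(V_\bul) = 0$, the map $i_{1*}$ is automatically zero, hence $\ker(p_{1*}) = \im(i_{1*}) = 0$ and $p_{1*}$ is injective. The dimension inequality $\dim H_1((C \oplus D)_\bul) \leq \dim H_1(Q_\bul)$ is then immediate, since any injective $\F_2$-linear map of finite-dimensional spaces has codomain of dimension at least that of its domain.

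For the converse ($\Rightarrow$), injectivity of $p_{1*}$ gives $\im(i_{1*}) = 0$, so $i_{1*}$ is the zero map. Exactness at $H_1(V_\bul)$ then yields $H_1(V_\bul) = \ker(i_{1*}) = \im(\del_1)$, and to close the argument one must force this image to vanish. I expect this last step to be the main obstacle: the bare exactness at $H_1(V_\bul)$ does not rule out a nontrivial contribution from $H_2(Q_\bul)$, and one would have to unpack the specific structure of the short exact sequence $0 \to V_\bul \to E_\bul \to Q_\bul \to 0$ in low degree (in particular the fact that $V_\bul$, $E_\bul$ and $Q_\bul$ are length-two complexes) to rule out such a contribution.

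The entire argument is formally dual to the one for Corollary~\ref{lem:surjective_logical}, which reads off the analogous $H_0$-statement from the right tail of the same long exact sequence. I would therefore present the two corollaries together, sharing the bookkeeping on the long exact sequence and making the $H_0 \leftrightarrow H_1$ duality on the subcode side explicit; this also isolates cleanly which half of each biconditional is actually used in the CNOT construction of Section~\ref{sec:ft_cnot}, namely the ($\Leftarrow$) direction in both cases.
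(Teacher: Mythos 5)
Your $(\Leftarrow)$ argument is correct and is exactly what the paper's appeal to the long exact sequence actually buys: $H_1(V_\bul)=0$ forces $i_{1*}=0$, exactness at $H_1(E_\bul)$ gives $\ker p_{1*}=\im i_{1*}=0$, and the dimension bound then follows for free. You are also right to be suspicious of the converse, and the obstruction you tentatively flag is not a presentational inconvenience but a genuine failure. Exactness at $H_1(V_\bul)$ gives $\ker i_{1*}=\im\bigl(\del_1\colon H_2(Q_\bul)\to H_1(V_\bul)\bigr)$, while injectivity of $p_{1*}$ only controls $\im i_{1*}$, not $\ker i_{1*}$. Since $H_2(Q_\bul)=\ker\del_2^Q$ records redundancies among the $Z$-checks of the merged code, nothing in the length-two structure of $V_\bul$, $E_\bul$, $Q_\bul$ forces the connecting map $\del_1$ to vanish, so the biconditional as stated is false.

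A minimal counterexample: let $C_\bul$ and $D_\bul$ each be the one-qubit code with a single $Z$-check, i.e. the chain complex $\F_2\xrightarrow{1}\F_2\to 0$ with physical $Z$-operators $z_C$ and $z_D$, and take the Z-subcode $V_\bul\colon 0\to\Span\{z_C+z_D\}\to 0$. Then $H_1(V_\bul)\cong\F_2\neq 0$, yet $H_1((C\oplus D)_\bul)=0$, so $p_{1*}$ is vacuously injective. The honest statement that the long exact sequence delivers is that $p_{1*}$ is injective if and only if $i_{1*}=0$; the condition $H_1(V_\bul)=0$ is sufficient but not necessary. (Corollary~\ref{lem:surjective_logical} has the mirror-image defect: $p_{1*}$ surjective is equivalent to $i_{0*}\colon H_0(V_\bul)\to H_0((C\oplus D)_\bul)$ being injective, which is strictly weaker than $H_0(V_\bul)=0$.) As you note at the end, only the sufficient direction is invoked in \cref{sec:ft_cnot}, so the CNOT protocol is unaffected; but the corollary should either be weakened to a one-way implication or have its hypothesis replaced by $i_{1*}=0$.
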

If neither of these conditions is met, we cannot determine the number of logical qubits after the merge through the mere existence of the exact sequence of \cref{eq:long_exact_sequence_homology}. By duality, all the results previously established hold similarly for X-merges.

\begin{proposition} \label{Hom_X_merge}
    Let $(V_\bul,V^\bul)$ be an X-subcode of $(E_\bul, E^\bul)$, and $(Q_\bul, Q^\bul)$ be the X-merged code of $(E_\bul, E^\bul)$ along $(V_\bul, V^\bul)$. Consider $i^\bul : V^\bul \hookrightarrow E^\bul$ the inclusion cochain map, and the X-merge $p^\bul$. Then, the following short exact sequence 
    \begin{equation}
        \begin{tikzcd}
            0^\bul \arrow[r] & V^\bul \arrow[r, "i^\bul"] & E^\bul \arrow[r, "p^\bul"] & Q^\bul \arrow[r] & 0^\bul
        \end{tikzcd}
    \end{equation}
    lifts to a long sequence on the cohomology spaces
    \begin{equation}\begin{tikzcd}
        \ldots \arrow[r] & H^1(V_\bul) \arrow[r, "i^1_*"] & H^1(E_\bul) \arrow[r, "p^1_*"] & H^1(Q_\bul) \arrow[r, "\del^2"] & H^2(V_\bul) \arrow[r] & \ldots
    \end{tikzcd}\end{equation} 
\end{proposition}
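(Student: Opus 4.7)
The plan is to mirror the proof of Proposition~\ref{H_merge} line-by-line, replacing the chain complex data by cochain complex data and exploiting the fact that an X-subcode together with an X-merge yields, in each degree, an inclusion followed by a quotient projection. Concretely, by Definition~\ref{def:subcode} the inclusion cochain map $i^\bul$ has each component $i^n : V^n \hookrightarrow E^n$ injective, and by Definition~\ref{def:Xmerge} together with the quotient X-merge construction, $p^\bul : E^\bul \to (E/V)^\bul$ has each component $p^n : E^n \twoheadrightarrow E^n/V^n$ surjective with kernel exactly $\im(i^n) = V^n$. The coboundary maps $\del^n_/$ on the quotient are well defined precisely because $(V_\bul, V^\bul)$ is an X-subcode, which is what makes $p^\bul$ a bona fide cochain map.

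First, I would verify degreewise that the sequence
\[
  0 \longrightarrow V^n \xrightarrow{\,i^n\,} E^n \xrightarrow{\,p^n\,} Q^n \longrightarrow 0
\]
is short exact: $i^n$ injective gives exactness at $V^n$; $p^n$ surjective gives exactness at $Q^n$; and $\ker(p^n) = V^n = \im(i^n)$ by the quotient construction gives exactness at $E^n$. Since this holds for $n = 0, 1, 2$, we obtain a short exact sequence of cochain complexes as stated.

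Next, I would apply the standard zig-zag / snake lemma for cochain complexes (the cohomological analogue of Theorem~\ref{Hatcher}) to extract the induced long exact sequence
\[
  \cdots \longrightarrow H^{n-1}(Q_\bul) \xrightarrow{\del^n} H^n(V_\bul) \xrightarrow{i^n_*} H^n(E_\bul) \xrightarrow{p^n_*} H^n(Q_\bul) \longrightarrow \cdots
\]
and specialise to the relevant portion around degree 1. The connecting homomorphism $\del^n$ is constructed by the usual diagram chase: given a cocycle class $[x] \in H^n(Q_\bul)$, lift $x$ to $E^n$, apply $\del^n_E$ to obtain an element of $\ker(p^{n+1}) = \im(i^{n+1}) \cong V^{n+1}$, and take its class in $H^{n+1}(V_\bul)$. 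Well-definedness and exactness at each spot then follow from the usual naturality arguments.

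I do not anticipate any real obstacle: the result is formally dual to Proposition~\ref{H_merge}, and the zig-zag lemma holds unchanged in $\Cochains$ once one observes that every construction used for chain complexes (kernels, images, quotients, diagram chasing) is self-dual under reversing arrows. The only care required is bookkeeping of indices, since the connecting map for cochain complexes raises degree by one rather than lowering it. This is why the long exact sequence in the statement has $\del^2 : H^1(Q_\bul) \to H^2(V_\bul)$ rather than a map decreasing degree as in the chain complex version.
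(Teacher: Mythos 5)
Your proof is correct and takes essentially the same approach as the paper: both establish degreewise short exactness of $0 \to V^\bul \to E^\bul \to Q^\bul \to 0$ and then invoke the zig-zag lemma. The only cosmetic difference is that you apply the cohomological form of the zig-zag lemma directly, whereas the paper phrases it as re-indexing the cochain complexes into chain complexes, invoking Proposition~\ref{H_merge} (really Theorem~\ref{Hatcher}), and translating back via $H_n \cong H^n$ — these are the same argument, and your explicit degreewise verification of exactness and description of the connecting map $\del^{n+1}: H^n(Q_\bul) \to H^{n+1}(V_\bul)$ is, if anything, slightly more careful than the paper's terse appeal to duality.
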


\begin{remark}
    (Application to code switching) Code switching is a technique for implementing a transversal universal gate set originally introduced by Paetznick and Reichardt in \cite{Paetznick_2013} and then demonstrated experimentally in \cite{pogorelov2024experimentalfaulttolerantcodeswitching}. Recall that transversal gates are natively fault-tolerant and that there is no code having a transversal universal gate set \cite{Eastin_2009}. However, using two codes with complementary transversal gates, if we can switch from one to the other fault-tolerantly, we can achieve a transversal universal gate set. The formalism here helps to design the switch. Consider the first code $(C_\bul, C^\bul)$. Adding some auxiliary qubits encoded with $(D_\bul, D^\bul)$, we can merge them along a subcode $(V_\bul, V^\bul)$ to get the second code $(Q_\bul, Q^\bul)$. Then, we can track the logical operation performed using prop.~\ref{H_merge}. Although this approach does not make easier the search for family of codes to switch between, it greatly facilitates the analysis of the logical operation induced by the switch. We provide in \cref{appendix:code_switching} a replication of a variant of the switch protocol between the Reed-Muller code $\llbracket 15,1,3 \rrbracket$ and the Steane code $\llbracket 7, 1, 3\rrbracket$ \cite{Anderson_2014} proposed in \cite{heußen2024efficientfaulttolerantcodeswitching}.
\end{remark}

\subsubsection{The case $H_0(V_\bul)=0$}\label{sec:H_-1=0}

As it will facilitate the derivation of the CNOT protocol presented in the next section, we expand more on the case $H_0(V_\bul)=0$. The exactness of the homology sequence implies that $\im(i_{1*}) = \ker(p_{1*})$ and $\im(p_{1*}) = H_1(Q_\bul)$, hence
\begin{equation}
    H_1(Q_\bul) \cong H_1(E_\bul)/ \im(i_{1*}).
\end{equation}
We assume that a basis $\mathcal{B} = \{u_1, \ldots, u_k\}$ of $H_1(E_\bul)$ is already fixed. The challenge is then to find a basis of $H_1(Q_\bul)$ in which the matrix $P_{1*}$ takes a simple form.

Let $\{v_1, \ldots, v_r\}$ be an echelonized and reduced basis of $\im(i_{1*})$. This means that the matrix in which the $i$-th row is the decomposition of $v_i$ in the basis $\mathcal{B}$ should take the following form:
\begin{equation}M=\left( \begin{array}{ccccccc}
    1 & 0 & * & 0 & * & \ldots & * \\
    \cline{1-1}
    0 & 1 & * & 0 & * & \ldots & * \\
    \cline{2-3}
    0 & 0 & 0 & 1 & * & \ldots & * \\
    \cline{4-5}
         & & & & \ddots & & \\
    0 & 0 & 0 & 0 & 0 & 1 & * 
\end{array} \right).\end{equation}
Each column containing a pivot is full of zeros except at the entry of the pivot.    
Let $I$ be the set of indices of the $r$ pivots. Then, it is straightforward that $\mathcal{C}' = \{[u_i] : i \in \llbracket 1,k \rrbracket \setminus I \}$ is a basis of $H_1(E_\bul)/ \im(i_{1*})$.
Finally, applying $p_{1*}$ to the representative of each equivalence class raises a similar basis $\mathcal{C} = \{[u_i] : i \in \llbracket 1,k \rrbracket \setminus I \}$ of $H_1(Q_\bul)$~\footnote{Applying $p_{1*}$ to $[u_i] \in \mathcal{C}'$ gives $[p_{1*}(u_i)] = [u_i] \in H_1(Q_\bul)$}. 

Henceforth, as $P_{1*}=Mat_\mathcal{C}(\mathcal{B})$ projects the basis vectors of $\mathcal{B}$ onto their equivalence class written in the basis $\mathcal{C}$, $P_{1*}$ resembles the identity except at each column $i \in I$ where the $1$ entries are given by the decomposition of the vectors $v_i$'s in $\mathcal{B}$.

To make this fairly abstract construction clearer, let us consider a concrete example. Consider $\mathcal{B} = \{u_1, \ldots, u_4\}$ and the echelonized and reduced basis $\{v_1 = u_1+u_3, v_2 = u_2+u_3+u_4\}$. The decomposition of these vectors in $\mathcal{B}$ gives the matrix
\begin{equation}M=\begin{pmatrix}
    1 & 0 & 1 & 0 \\
    0 & 1 & 1 & 1
\end{pmatrix}\end{equation}
The pivots are $u_1$ and $u_2$, so $I=\{1,2\}$. In other words, in $H_1(E_\bul)/\im i_{1*}$, $[u_1]=[u_3]$ as $[v_1]=0$, and $[u_2]=[u_3]+[u_4]$ as $[v_2]=0$. So, $[u_1]$ and $[u_2]$ can be obtained from $[u_3]$ and $[u_4]$. That is why we choose the basis $\mathcal{C} = \{[u_3], [u_4]\}$ of $H_1(Q_\bul)$ such that
\begin{equation}
    P_{1*} = \begin{pmatrix}
        1 & 1 & 1 & 0 \\
        0 & 1 & 0 & 1
    \end{pmatrix}.
\end{equation}
    
\section{Designing a logical CNOT with fault-tolerance guarantees} \label{sec:ft_cnot}

In this section, we leverage the theoretical tools previously introduced to construct a logical CNOT gate between arbitrary logical qubits in any CSS code. We believe that the main advantage of CSS surgery is to navigate between CSS codes while keeping track of the logical operations. In this way, we can exploit the advantages of each code we pass through to perform a desired computation. This approach is commonly known in the literature and has recently been shown to be successful \cite{pogorelov2024experimentalfaulttolerantcodeswitching, Davydova_2024}. We now propose a highly general protocol for implementing a logical CNOT which is rigid in the sense that we aim to return to the same code we started with. This final section serves to showcase how the subcode surgery formalism we have developed can be applied in practice.

The setup we consider is as follows: let $(C_\bul, C^\bul)$ be a CSS code for which logical qubits are defined with respect to the dual bases of logical operators $\mathcal{B}_C=\{ [z_i] \}_{1 \leq i \leq k}$ of $H_1(C_\bul)$ and $\mathcal{B}^C=\{ [x_j] \}_{1 \leq j \leq k}$ of $H^1(C_\bul)$ (i.e. $[z_i] \cdot [x_j] = \delta_{i,j}$, cf. \cref{seq:logop}). The goal is to find a physical operation that implements a logical CNOT between any pair of logical qubits. Without loss of generality, we will consider the first logical qubit to be the control, and the second to be the target. This problem turns out to be challenging for many reasons. In addition to finding a physical operation that implements a logical CNOT, we must also ensure that it does not interfere with any other logical qubits. In other words, among the logical qubits of $(C_\bul,C^\bul)$, \emph{only} the first (target) and second (control) ones should undergo a logical operation. Furthermore, we desire our implementation to have some guarantee of fault-tolerance.

To achieve our ends, we use CSS surgery to implement a decomposition of the CNOT gate---identical to the one used in the standard lattice surgery protocol \cref{eq:cnot_ls}
\begin{equation} \label{eq:cnot2}
  \tikzfig{cnot2}.
\end{equation}
There are three main challenges:
\begin{itemize}
    \item First, we need to \textbf{introduce an auxiliary logical qubit}. Our goal is to minimise the overhead in auxiliary physical qubits required to introduce the auxiliary logical qubit.
    \item Second, we need to \textbf{choose subcodes for both merges} which guarantee that the overall logical operation performed by the succession of merges and splits will result in \cref{eq:cnot2}.
    \item Finally, the merges and splits should be \textbf{fault-tolerant} operations.
\end{itemize} 

As the splits in \cref{eq:cnot2} should be the transpose operation of the merges, all the splits will be defined to be the dual preserving code map of the previous merge. 
Furthermore, by corollary~\ref{lem:surjective_logical}, and noting that the logical merges of eqs.~\eqref{eq:physical_surgery} and \eqref{eq:physical_surgery_X} (defined in \cref{eq:logical_surgery}) are surjective maps, we only need to consider Z- (resp. X-) subcodes $(V_\bul, V^\bul)$ (resp. $(W_\bul, W^\bul)$) satisfying $H_0(V_\bul)=0$ (resp. $H^2(W_\bul)=0$). This greatly simplifies the analysis of the logical operation performed by the merge following the methodology outlined in \cref{sec:H_-1=0}.

In this section, first, we introduce a naive non-fault-approach to implementing a logical CNOT, with the aim of building intuition about how to introduce the auxiliary logical qubit and what subcodes to choose to perform the desired merges shown in \cref{eq:cnot2}. Following this, we outline a general protocol that aims to minimise the number of physical qubits while ensuring guarantees of fault-tolerance. \\

\subsection{The non-fault-tolerant approach} \label{sec:naive_approach}

In the first instance, we ignore fault-tolerant design considerations. The main goal is to achieve the desired logical operation, namely, the CNOT. The protocol is pictured in \cref{fig:cnot2} a).

\begin{figure}
    \centering
    \tikzfig{surg_cnot}
    \caption{\textbf{Surgery logical $CNOT$ with different auxiliary codes.} 
    The general structures of the Tanner graphs are illustrated for three different approaches to introduce an auxiliary logical qubit, each intended to support the protocol of \cref{eq:cnot2}.  
    \textbf{a)} The auxiliary logical qubit corresponds directly to a physical qubit, resulting in a trivial encoding.  
    \textbf{b)} An auxiliary code is introduced that hosts a single logical qubit. Unlike in (a), the minimal-weight representatives of the auxiliary code’s X- and Z-logical operators are expected to have weight at least equal to the distance of the original code \((C_\bullet, C^\bullet)\), improving protection against errors.  
    \textbf{c)} To mitigate the physical qubit overhead incurred in b), the auxiliary logical qubit is embedded within the original code \((C_\bullet, C^\bullet)\). This can be achieved either by starting from a code that already contains an additional logical qubit, or by augmenting \((C_\bullet, C^\bullet)\) via CSS surgery to introduce a new logical qubit with minimal added physical resources.
    }
    \label{fig:cnot2}
\end{figure}

\paragraph{The auxiliary logical qubit.}
The very first step is to choose a way to introduce an auxiliary logical qubit. As a first approach, we consider this logical qubit to be encoded in a trivial auxiliary CSS code $(A_\bul,A^\bul)$ composed of one physical qubit \textit{a} without any stabilisers. This code encodes one logical qubit with logical operators $z_a \in H_1(A_\bul)=A_1$ and $x_a \in H^1(A_\bul)=A^1$ corresponding to the physical Z and X Paulis acting on qubit \textit{a}. It is instantiated in the $\ket{+}$ state according to \cref{eq:cnot2}.

\paragraph{The Z-merge/X-split.}
Now, we have to choose a Z-subcode $(V_\bul, V^\bul)$ to define the Z-merged code $(Q_\bul, Q^\bul)$ and the Z-merge/X-split $p_\bul$ and $p^\bul$. According to def.~\ref{def:subcode} of a Z-subcode, we need to pick three subspaces $V_2,V_1$ and $V_0$ of $(C \oplus A)_2,(C\oplus A)_1$ and $(C\oplus A)_0$, respectively, satisfying the subcode condition $\del_n^C(V_n) \subset V_{n-1}$. For simplicity, we fix $V_2=0$. This choice aims at leaving the group of Z-stabilisers unchanged and does not hinder the implementation of the desired logical operation for the Z-merge \cref{eq:cnot2}. To satisfy the constraint $H_0(V_\bul)=V_0/\im(\del_1^V)=0$, knowing that $\del_1^V=\del_1^C \restriction_{V_1}$, we have to choose $V_0=\del_1^C(V_1)$. So, once $V_1$ is chosen, $V_0$ is fixed~\footnote{This $V_0$ has one major drawback: the weight of the X-stabiliser generators of the Z-merged code might increase a lot compared to the ones of $(C_\bul, C^\bul)$ and $(A_\bul, A^\bul)$. This does not represent an issue in our protocol as we implement the Z-merge and the X-split in a row, without performing a round of error correction on the code $(Q_\bul, Q^\bul)$.}. It remains to choose $V_1$. This choice is dictated by the logical operation we are aiming for with the Z-merge.

According to \cref{sec:H_-1=0}, the logical operation of the Z-merge is a quotient map $p_{1*} : H_1((C \oplus A)_\bul) \rightarrow H_1(Q_\bul)$ where $H_1(Q_\bul) \cong H_1((C \oplus A)_\bul)/\im(i_{1*})$ and $i_{1*} : H_1(V_\bul) \rightarrow H_1((C \oplus A)_\bul)$ is an embedding. Hence, according to def.~\ref{def:log_hilb_rep}, to achieve the ZX-diagram of the Z-merge \cref{eq:cnot2}, we need to choose $V_1$ such that $\im(i_{1*})=\Span\{[z_1]+z_a\}$. This way, $p_{1*}$ is the identity on all Z-logical operators of $(C_\bul, C^\bul)$ except for $[z_1]$ which is identified with $z_a$. In fact, if we choose $\mathcal{B}_Q=\{[z_1]\sim z_a, [z_2], \ldots , [z_k]\}$ as a basis of $H_1(Q_\bul)$, $P_{1*}$ takes the form
\begin{equation}
    P_{1*} = Mat_{(\mathcal{B}_C \cup \{z_a\},\mathcal{B}_Q)}(p_{1*})= \left(
\begin{array}{cccc|c}
1 & 0 & \ldots & 0 & 1 \\
0 & 1 & \ldots & 0 & 0 \\
\vdots  &   & \ddots &   & \vdots \\
0 & 0 & \ldots & 1 & 0
\end{array}
\right).
\end{equation}
Considering the logical interpretation of $p_{1*}$ (def.~\ref{def:log_hilb_rep}), the logical operation induced by the Z-merge $p_\bul$ corresponds to the desired ZX-diagram \cref{eq:cnot2}.
There are \emph{many} choices of $V_1$ that lead to $\im(i_{1*})=\Span\{[z_1]+z_a\}$.
Let us make the most straightforward choice, $V_1=\Span\{z_1+z_a\}$. To summarise, we choose the Z-subcode $(V_\bul, V^\bul)$ given by
\begin{equation}
\begin{tikzcd}
    V_\bul : 0 \arrow[r] & \Span\{z_1+z_a\} \arrow[r] & 0
\end{tikzcd}.
\end{equation}
Because both $z_1$ and $z_a$ are representatives of Z-logical operators, we have $\del_1^C(V_1)=0$. It is easily verified that $z_1+z_a \in H_1(V_\bul)$ and that $i_{1*}(z_1+z_a)=[z_1]+z_a \neq 0$ in $H_1((C\oplus A)_\bul)$ as $z_1+z_a \notin \im(\del_2^C) \oplus\im(\del_2^A)$. In other words, $z_1+z_a$ is not a stabiliser of $((C\oplus A)_\bul,(C\oplus A)^\bul)$. Hence,  $\im(i_{1*})=\Span\{[z_1]+z_a\}$ as desired. 

\begin{remark} \label{rem:ft_code}
    We stress that \emph{the choice of $V_1$ is not unique}. Different choices lead to
    \emph{different physical implementations} whilst implementing a fixed
    logical operation. That is why from a fault-tolerance perspective,
    studying different choices of $V_1$ may be of interest, especially for
    limiting the spread of errors. This is discussed in \cref{sec:ft}.
\end{remark}

On one hand, according to props.~\ref{hilb_representation_chain} and \ref{hilb_representation_cochain}, because $p_2$ is surjective (as it is a quotient map), the physical operation corresponding to the sequential Z-merge $p_\bul$/X-split $p^\bul$ is described by the SZX-diagram:
\begin{equation} \label{eq:Z_merge-split}
    \tikzfig{Z_merge-split_SZX}.
\end{equation}
Note that for the specific choice $V_1=\Span\{z_1+z_a\}$, the final projection is not required as $V_0=0$ and so $p^0$ is surjective (cf. the second remark below prop.~\ref{hilb_representation_chain}).
On the other hand,
we achieve the desired logical operation:
\begin{equation}
    \tikzfig{Z_merge-split_SZX_logical}
    = \tikzfig{Z-spider_cnot}.
\end{equation}
We postpone the analysis of the physical operation to \cref{sec:ft} and \cref{sec:postselection}.

\paragraph{The X-merge/Z-split}
To complete the diagram \cref{eq:cnot2} with the X-spiders, we need an X-subcode $(W_\bul,W^\bul)$ to define the X-merged code $(R_\bul,R^\bul)$ and the X-merge/Z-split $q^\bul$ and $q_\bul$. All the choices made have similar motivations: we fix for simplicity $W^0=0$ and to guarantee that $H^2(W_\bul)=0$ we pick $W^2=\del^2_C(W^1)$. Thus, $H^1(W_\bul) \cong H^1((C \oplus A)_\bul)/\im(j^1_*)$ with $j^1_* : H^1(W_\bul) \rightarrow H^1((C \oplus A)_\bul)$ being the embedding map. As again we aim for $\im(j^1_*)=\Span\{[x_2]+x_a\}$, many choices of $W^1$ are valid but not equivalent from a fault-tolerance perspective. We pick the simplest $W^1$ such that
\begin{equation}
    \begin{tikzcd}
        W^\bul : 0 & \arrow[l] \Span\{x_2+x_a\} & \arrow[l] 0
    \end{tikzcd}.
\end{equation}
The X-merge $q^\bul$ results in the map $q^1_*$ at the logical level, which, given the basis $\mathcal{B}^R=\{[x_1], [x_2]\sim x_a, \ldots , [x_k]\}$ of $H^1(R_\bul)$, gives the matrix
\begin{equation}
    Q_*^1 = Mat_{(\mathcal{B}^C \cup \{x_a\},\mathcal{B}^R)}(q^0_*)= \left(
\begin{array}{cccc|c}
1 & 0 & \ldots & 0 & 0 \\
0 & 1 & \ldots & 0 & 1 \\
\vdots  &   & \ddots &   & \vdots \\
0 & 0 & \ldots & 1 & 0
\end{array}
\right)
\end{equation}

So, applying the physical operation described by the ZX-diagram
\begin{equation}\label{eq:X_merge-split}
    \tikzfig{X_merge-split_SZX},
\end{equation}
we achieve the desired logical operation:
\begin{equation}
    \tikzfig{X_merge-split_SZX_logical}=\tikzfig{X-spider_cnot}.
\end{equation}
To complete the CNOT, we measure the log-ical qubit of $(A_\bul, A^\bul)$ in the Z-basis. In \cref{eq:cnot2}, the measurement outcome is assumed to be positive, but in the event of a negative outcome, we can simply apply a logical X on the second logical qubit of $(C_\bul,C^\bul)$, namely $[x_2]$.

\subsection{Fault-tolerance of the physical implementation} \label{sec:ft}

A rigorous analysis of the fault-tolerance of the protocol turns out to be challenging. Essentially, the set of potential errors happening during the protocol can differ a lot depending on the physical implementation. The latter can change in two ways:
\begin{itemize}
    \item First, using ZX-calculus, one can change the physical implementation while realising the same operation. This way, we can choose the implementation best suited for a given hardware.
    \item Second, as pointed out in remark~\ref{rem:ft_code}, depending on the code $(C_\bul,C^\bul)$ (and $(A_\bul, A^\bul)$), one may choose another subcode $(V_\bul, V^\bul)$ improving in some way the physical implementation non-trivially (meaning that is not simply a former ZX-diagram rewritten differently).
\end{itemize}
That's why in this section we focus on presenting aspects of the protocol that guarantee or can be used to guarantee the fault-tolerance. These are three-fold: first we argue that no matter the specific physical implementation we choose among all the possible ones, it does not propagate pre-existing Pauli errors. Second, we discuss how to improve the introduction of the auxiliary logical qubit. Finally, we showcase how in specific cases, we can perform the merges/splits using only local operations in parallel, implying that internal Pauli errors do not propagate drastically.

\subsubsection{Pre-existing Pauli errors do not propagate}

Although it is difficult to analyse the propagation of Pauli errors occurring within the physical implementation of the merges and splits in full generality, the case of pre-existing Pauli errors is tractable.
\begin{proposition}
    Every Pauli error commutes with the Z-merge/X-split and with the X-merge/Z-split.
\end{proposition}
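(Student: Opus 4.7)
The plan is to use the SZX-calculus representations of the Z-merge/X-split in \eqref{eq:Z_merge-split} and of the X-merge/Z-split in \eqref{eq:X_merge-split}. These present both operations as compositions of Z-spiders, X-spiders, Hadamards, and matrix arrows (which unfold into CNOT networks). I read ``$P$ commutes with $M$'' in the Clifford sense: for every physical Pauli $P$ acting on the input qubits, there is a Pauli $P'$ on the output qubits such that $M P = P' M$ as linear maps.

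First I would establish the propagation property at the level of each SZX-calculus primitive. A $Z$-Pauli passes freely through a Z-spider by the spider-fusion rule, and an $X$-Pauli on one leg of a Z-spider is copied to all its legs; X-spiders satisfy the colour-swapped version, and Hadamards conjugate $X$ into $Z$ and vice versa. For a matrix arrow implementing $R_A : \ket{x} \mapsto \ket{A x}$, the identities $R_A X^{v} = X^{A v} R_A$ and $R_A Z^{A^{T} w} = Z^{w} R_A$ describe exactly the Clifford conjugation performed by the underlying CNOT network, and the left-arrow case is analogous up to Hadamard conjugation. All of these are standard facts about stabiliser generators.

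Second I would compose these local rules along the entire diagram. Starting from a Pauli error on the input wires, I push it through each generator in turn, replacing it at every step by a new Pauli according to the rules above. After traversing the whole merge--split, one obtains a Pauli $P'$ on the output, and the interior linear map is left unchanged. The X-merge/Z-split case is handled identically by the colour-duality of the SZX-calculus.

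The main obstacle is not conceptual but one of bookkeeping: because the split is non-unitary (essentially a projector, with the merge its transpose), one must check that $MP = P'M$ holds as an equality of linear maps, not merely up to a scalar or up to stabilisers of the intermediate code. This drops out cleanly once every primitive is treated carefully, since every SZX-generator occurring in \eqref{eq:Z_merge-split} and \eqref{eq:X_merge-split} is a stabiliser operation with an exact Pauli-conjugation rule.
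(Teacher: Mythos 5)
You have the right framework — the paper also argues by pushing the Pauli through the SZX primitives of \cref{eq:Z_merge-split} and \cref{eq:X_merge-split} — but your key assertion is wrong. You claim the commutation $MP = P'M$ ``drops out cleanly'' as an exact equality of linear maps because every primitive ``is a stabiliser operation with an exact Pauli-conjugation rule.'' That is false, and it is precisely the subtlety the paper's proof has to deal with.

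The merge/split diagram is non-unitary and, after the expansion in \cref{eq:physical_implementation}, contains a post-selected X-measurement (a single red spider with an open effect leg). Pushing an $X$ Pauli through the Z-merge/X-split does not produce a Pauli on the output wires; it flips the sign of that measurement outcome. The toy analogue is the projector $P = \tfrac{1}{2}(I + Z_1Z_2)$ for a joint $ZZ$-measurement: $Z$ Paulis commute with $P$ exactly, but $X_1$ satisfies $PX_1 = X_1 P^-$ with $P^- = \tfrac{1}{2}(I-Z_1Z_2)$, a \emph{different} linear map, and there is no Pauli $P'$ with $PX_1 = P'P$. The paper's proof says exactly this: $Z$ Paulis commute with \cref{eq:Z_merge-split} exactly, while an $X$ Pauli leaves behind a residual factor $\iota^c$ that ``flips the probabilities of the measurement outcomes,'' so the commutation only holds \emph{with post-selected measurement} (and dually, $Z$ errors flip the outcome for the X-merge/Z-split).

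Relatedly, your matrix-arrow rule $R_A Z^{A^Tw} = Z^w R_A$ only covers $Z$ Paulis in $\im(A^T)$; when $A = p_1$ is surjective but not injective — which is always the case for a nontrivial merge — there are $Z^v$ with $v\notin\im(p_1^T)$ that cannot be passed through $R_{p_1}$ as an exact Pauli conjugation. To repair the argument you must separate Paulis into those commuting with the underlying measured operator (which do propagate exactly) and those anticommuting with it (which flip the outcome), and then invoke the standing post-selection assumption, or equivalently the correction procedure of \cref{sec:postselection}, to conclude the latter are still harmless. This is what the paper does: it uses the surjectivity of $p_1$, supplied by exactness of \cref{eq:short_exact_sequence}, to complete $p_1$ to an invertible $\tilde p_1$ and then tracks the residual $\iota^c$ term that encodes the outcome flip.
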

This property inherited from the SZX-calculus is particularly interesting for fault-tolerance as it tells us that the protocol can be performed on errored qubits without making the error worse.

\begin{proof}
    Let us start with the SZX-diagram \cref{eq:Z_merge-split}. Every Z-Pauli commutes with this diagram and flips the probability outcomes of the final stabiliser measurements. In fact, for every $v \in \F_2^n$,
    \begin{equation}
        \tikzfig{Z_merge-split_commutation1} = \tikzfig{Z_merge-split_commutation2} = \tikzfig{Z_merge-split_commutation3}.
    \end{equation}
    The commutation for X-Pauli errors is less straightforward.
    As $p_\bul$ is a Z-merge, the matrix $p_1 \in \F_2^{m \times n}$ is surjective. So, adding rows to this matrix gives the bijection $\Tilde{p}_1 \in \F_2^{n \times n}$. In other words, we retrieve $p_1=\iota \Tilde{p}_1$ where $\iota = \text{diag}(1, \ldots, 1, 0, \ldots, 0)$ is the inclusion matrix containing $m$ ones on the first entries of the diagonal. So, for any $u \in \F_2^n$, let $u_0 \in \F_2^n$ such that $\Tilde{p}^1u_0=u$. We get
    \begin{equation}
        \tikzfig{X_merge-split_commutation1} = \tikzfig{X_merge-split_commutation2} = \tikzfig{X_merge-split_commutation3}
    \end{equation}
    where $\iota^c = id-\iota$. As a consequence, the X-error flips the probabilities of the measurement outcomes. Hence, with post-selected measurement, it does not have any effect and commutes with the diagram.

    The same reasoning can be applied to \cref{eq:X_merge-split} by swapping the colours.
\end{proof} 

\subsubsection{Auxiliary code distance and fault-tolerance}

We can reasonably assert that the protocol described in the previous section suffers from a major flaw: any physical error occurring on the auxiliary qubit outside of the merges and splits will effectively apply a \emph{logical} error: a Z- (resp. X-) error will propagate to a Z- (resp. X-) logical error on the first (resp. second) logical qubit, namely $[z_1]$ (resp. $[x_2]$).
The reason for this is that a physical Pauli error on the auxiliary qubit amounts to a logical error in the trivial code $(A_\bullet, A^\bullet)$. The consequence of such a logical error on the logical data of $(C_\bullet, C^\bullet)$ can immediately be assessed using \cref{eq:cnot2}. The following ZX-diagram describes how such physical error would propagate on the logical data.
\begin{equation} \label{eq:cnot3}
  \begin{aligned}
    &\tikzfig{cnot3} \\
    &\hspace{1cm}= \quad \tikzfig{cnot4} \quad = \quad \tikzfig{cnot5}
  \end{aligned}
\end{equation}

To alleviate this problem, the auxiliary code $(A_\bullet, A^\bullet)$ must be selected with a distance at least as large as that of $(C_\bul, C^\bul)$. This ensures that the probability of a logical error occurring in the auxiliary code is not greater than the likelihood of a logical error in $(C_\bul, C^\bul)$. 
That is why we present two alternative methods for introducing an auxiliary logical qubit, as illustrated in \cref{fig:cnot2} b) and c). 

The first approach (\cref{fig:cnot2}b) involves constructing a \textbf{larger auxiliary code} by using more physical qubits to increase the distance. While this straightforward method guarantees a sufficient code distance for \( (A_\bullet, A^\bullet) \), it may prove costly in terms of qubit overhead. This can be justified in two ways:
\begin{itemize}
    \item \textit{``Good'' codes with a single logical qubit do not exist.} Intuitively, error correction codes benefit from a scaling effect: as more logical (qu)bits are encoded, fewer physical (qu)bits are needed per logical (qu)bit to achieve a significant code distance. This is analogous to classical coding theory, where the best $[n,1,d]$ code for encoding a single logical bit remains the repetition code. In the quantum setting, finding a family of CSS codes with a single logical qubit, tunable distance, and LDPC properties that uses significantly fewer physical qubits than the surface code (i.e. $d^2+(d-1)^2$ for a distance $d$) is challenging---if not impossible---depending on the desired sparsity of the parity check matrix.
    \item \textit{The overall code $((C\oplus A)_\bullet, (C \oplus A)^\bullet)$ cannot be optimal.} Error correction relies on a trade-off between the rate ($\frac{k}{n}$), and the distance per physical qubit ($\frac{d}{n}$), making difficult to define the optimality of a code. However, we argue in appendix~\ref{ap:singleton_bound} that regardless of the choice of the auxiliary code $(A_\bullet, A^\bullet)$, the overall code $((C\oplus A)_\bullet, (C \oplus A)^\bullet)$ cannot be an MDS (Maximal Distance Separable) code \cite{rains1997nonbinaryquantumcodes}, meaning it cannot achieve optimality in the sense of the Singleton bound.
\end{itemize}

We propose a final additional method to reduce the size of the auxiliary code that involves \textbf{embedding the auxiliary code \( (A_\bullet, A^\bullet) \) directly into \( (C_\bullet, C^\bullet) \)}, as depicted in \cref{fig:cnot2}c. This can be achieved in two ways: by initially selecting a code \( (C_\bullet, C^\bullet) \) with an additional logical qubit or by augmenting the existing code through a CSS surgery operation to introduce a new logical qubit.  In practice, the viability of this approach depends on the code \( (C_\bullet, C^\bullet) \) considered, making this method fairly abstract. However, the idea can always be considered. In the next example, for simplicity, we consider \( (C_\bullet, C^\bullet) \) to be a toric code.

\begin{example}
    A third logical qubit can be introduced in the toric code by creating a hole. This process can be interpreted as a surgery operation, where the subcode corresponds to the sub-Tanner graph formed by the removed qubits and checks.
\end{example}

The main challenge with this approach is ensuring that the introduction of an auxiliary logical qubit does not (significantly) degrade the code distance. This opens an interesting research direction: the development of \textbf{modular CSS codes}—CSS codes designed to accommodate additional logical qubits via CSS surgery while preserving distance, all with minimal physical qubit overhead.

Last but not least, it is important to remark that all the analysis conducted in \cref{sec:naive_approach} is still valid even though $z_a$ and $x_a$ are now vectors of larger size than one.

\subsubsection{Choosing $V_1$ and $W^1$ to guarantee only local interactions} \label{sec:physical_implementation}

In this section, we elaborate on remark~\ref{rem:ft_code}.
The main way to prevent the spread of internal errors is to limit as much as possible the number of physical qubits which interact. If an error occurs on a single qubit that interacts with many other physical qubits through, say CNOTs, the single qubit error is likely to turn into a high weight error. We argue that with appropriate choices of $V_1$ in section~\ref{sec:naive_approach}, the Z-merge/X-split can be chosen to consist only of parallel local interactions. By the Z/X-duality of (co)chain complexes, an analogous conclusion can, of course, be reached for $W^1$ and the X-merge/Z-split.

To make the statement more explicit, let us first examine the straightforward choice $V_1=\Span\{z_1+z_a\}$ made in section~\ref{sec:naive_approach}. Although this choice always achieves the desired logical operation, it typically induces a physical implementation that turns single internal Pauli errors into very height weight Pauli errors. We start by expliciting \cref{eq:Z_merge-split}.

Recall that $p_1: C_1 \oplus A_1 \rightarrow (C_1 \oplus A_1)/span\{z_1+z_a\}$ is a projection map. We first need to turn it into a matrix by defining a basis of $(C_1 \oplus A_1)/span\{z_1+z_a\}$. 

In the general case (illustrated in \cref{fig:cnot2}b and c), the vector \( z_a \) does not act on a single physical qubit but instead has support on multiple qubits.
Let \( \mathcal{C} = \{ \delta_1, \ldots, \delta_{n_C + n_A} \} \) be the canonical basis of \( C_1 \oplus A_1 \cong \mathbb{F}_2^{n_C + n_A} \), where each \( \delta_i \) is the standard basis vector with a 1 in the \( i \)-th position and 0 elsewhere.
Without loss of generality (up to a permutation of the physical qubits), we assume that:
\begin{itemize}
    \item \( \delta_1 \cdot z_a = 1 \), i.e., the first entry of \( z_a \) is non-zero,
    \item \( \delta_1 \cdot z_1 = 0 \), i.e., the first entry of \( z_1 \) is zero.
\end{itemize}
This implies that the vector \( z_1 + z_a + \delta_1 \) has a zero in its first entry and therefore lies in the span of \( \{ \delta_2, \ldots, \delta_{n_C + n_A} \} \). Consequently, the class \( [\delta_1] \in (C_1 \oplus A_1)/\mathrm{span}\{ z_1 + z_a \} \) is equal to \( [z_1 + z_a + \delta_1] \), which belongs to the span of \(\left\{ [\delta_2], \ldots, [\delta_{n_C + n_A}] \right\}\).
Hence, we can choose the set \( \mathcal{C}' = \{ [\delta_2], \ldots, [\delta_{n_C + n_A}] \} \) as a basis for the quotient space
\((C_1 \oplus A_1)/\mathrm{span}\{ z_1 + z_a \} \cong \mathbb{F}_2^{n_C + n_A - 1}\).
With respect to the bases \( \mathcal{C} \) and \( \mathcal{C}' \), the projection map \( p_1 \) can be written in matrix form as
\begin{equation}\label{eq:p_1_general}
p_1 =
\begin{pmatrix}
\overline{z_1 + z_a + \delta_1} \ \Big| \
I_{n_C + n_A - 1}
\end{pmatrix}.
\end{equation}
Here, \( \overline{z_1 + z_a + \delta_1} \) denotes the vector \( z_1 + z_a + \delta_1 \) with its first coordinate removed, so it has length \( n_C + n_A - 1 \).

So, expanding the SZX-diagram corresponding to the physical interpretation of the Z-merge $p_\bul$, we get
\begin{equation}
    \tikzfig{SZX_p_0} = \tikzfig{Z-merge_ZX}
\end{equation}
where $\supp z_1 + z_a + \delta_1$ is the set of physical qubits in the support of $z_1 + z_a + \delta_1$, and $|z_1|$ and $|z_a|$ are the weights of $z_1$ and $z_a$, respectively. Hence, \cref{eq:Z_merge-split} turns into
\begin{equation}\label{eq:physical_implementation}
    \tikzfig{Z_merge-split_SZX}=\tikzfig{Z-merge-split_ZX}.
\end{equation}
As pointed out below \cref{eq:Z_merge-split}, because $V_0=0$, the final projection is unnecessary---it reduces to the identity as $\del^\Omega$ is an empty matrix.
The main issue with this physical implementation is that a single Z-Pauli error on the last qubit leads to the logical error $[z_1]+[z_a]$ as shown below:
\begin{equation}
    \tikzfig{Z-merge-split_ZX1}=\tikzfig{Z-merge-split_ZX2}.
\end{equation}
This can be avoided by decomposing $z_1+z_a$ as the sum of many vectors of smaller weight $z_1+z_a=\sum_jv_j$ and defining $V_1=\Span\{v_j\}_j$. The only constraint is that $im(i_{1*})$ should remain equal to $\Span\{[z_1]+[z_a]\}$ to achieve the desired logical operation. Because it depends on the specific code $((C\oplus A)_\bul,(C\oplus A)^\bul)$ (or $(C_\bul,C^\bul)$ if $(A_\bul,A^\bul)$ is embedded in the original code) under consideration, we present how to apply this strategy with a small example. Consider the two logical operator representatives $\overline{Z}_1$ and $\overline{Z}_a$ on respectively 3 and 2 qubits.

\begin{equation}
    \tikzfig{ex_V_1}
\end{equation}
To stress the importance that the chosen decomposition should respect $im(i_{1*})=\Span\{[z_1]+[z_a]\}$, we assume that $\overline{Z}_1=Z_1Z_3Z_4$ is not irreducible and that only $\overline{Z}_3=Z_1Z_3$ lies within its support.

In order to obtain a more fault-tolerant implementation, the rules of the game are simple: using the canonical bases vectors $\{\delta_1,\delta_2,\delta_3,\delta_4,\delta_5\}$, find a decomposition such that only $[z_1]+[z_a] \in \im(i_{1*})$. In this toy example where we assume that no stabilisers is contained in the support of the Z-logical operators, it boils down to choosing $V_1$ such that among the logical operator representatives, only $z_1+z_a \in V_1$. The decomposition $V_1=\Span\{\delta_1+\delta_2,\delta_3+\delta_4,\delta_5\}$ meets the requirements. 

Similarly to what has been done from \cref{eq:p_1_general} to \cref{eq:physical_implementation}, we deduce that the associated Z-merge $p_\bul$/X-split $p^\bul$ is implemented physically via
\begin{equation} \label{eq:ft_physical_implementation}
    \tikzfig{Z_merge-split_SZX}=\tikzfig{physical_implementation_ZX_variant}.
\end{equation}
Note that the final X-stabiliser measurement involves all the physical qubits of the code $((C\oplus A)_\bul,(C\oplus A)^\bul)$. Whereas in the previous physical implementation, an internal error on the last physical qubit (here qubit 5) would have spread into a logical error, it can now only propagate via the final stabiliser measurements. So, in the case of a qLDPC code, it will propagate to a limited amount of physical qubits.

This toy example shows how we can improve the implementation of the merges/splits to make it fault-tolerant to some extent. The key aspect here is to analyse the representatives of the logical operators to be merged in order to further decompose $V_1$. Although we did not find a direct connection between the dimension of $V_1$ and the fault-tolerance of the resulting implementation, this toy example shows that it is interesting to consider raising the dimension of $V_1$ on a case by case basis.

\subsection{A note on postselection and measurement errors}
\label{sec:postselection}

Throughout this paper, we have assumed that the measurements are post-selected on the ideal (\(+1\)) outcome. This is somewhat unavoidable because, in general, the measurement errors one obtains without postselection are difficult to take care of. Nevertheless, in the case of the physical implementation of the CNOT described in \cref{eq:physical_implementation}, it is fairly straightforward to take care of these measurement errors.
Recall that the physical implementation is
\begin{equation*}
    \tikzfig{Z_merge-split_SZX}=\tikzfig{Z-merge-split_ZX}.
\end{equation*}

This operation involves only one X-measurement represented by the single red spider at the bottom of the diagram. In the event of a negative measurement outcome, this red spider takes a $\pi$ phase. We can commute this phase with the encoders to see it as a supplementary X-logical operation applied before and after the Z-merge/X-split on the first logical qubit as illustrated on the following diagrams
\begin{equation}\label{eq:measurement_correction}
  \tikzfig{fault-tolerance}.
\end{equation}
Let us unwrap the transformations. As $x_1$ and $z_1$ are representatives of Pauli logical operators acting on the same logical qubit, according to prop.~\ref{prop:dual_bases} they have odd overlap. As a consequence, $\supp(x_1)\cap \supp(z_1) \neq \emptyset$. Moreover, $\supp(z_1) \subset \supp(z_1+z_a+\delta_1)$, implying that $\supp(x_1)\cap\supp(z_1+z_a+\delta_1) \neq \emptyset$. Choose a qubit in the overlap of these two supports to make the $\pi$ phase commute before and after the physical implementation of the Z-merge/X-split. From there, we can easily commute these $\pi$ red spiders with the encoders as shown \cref{eq:measurement_correction}.

A similar reasoning applies for the X-merge/Z-split, making possible to take into account negative measurement outcomes. As the resulting logical operation is a CNOT, we can safely correct all these logical Pauli errors, similarly to what has been described in \cref{eq:cnot3}. However, whether a similar strategy can handle measurement outcomes across different physical implementations---like those discussed in section~\ref{sec:physical_implementation}---remains an open question.

\section*{Conclusion}

Building on the work initiated in \cite{Cowtan_2024}, and through our reformulation of CSS surgery in terms of \emph{quotients by subcodes}, we have been able to compute the logical operations induced by very general surgery operations. This allows us to describe a methodology for implementing CNOTs with fault-tolerance guarantees between arbitrary CSS codes. 
We are also able to describe (some) code-switching through the resulting formalism, such as code-switching between the 3D colour code and the Steane code \cite{Anderson_2014, heußen2024efficientfaulttolerantcodeswitching}.

We briefly mention some research directions for future work. We assumed throughout the paper that all measurements are postselected on the ideal outcome. This restriction is obviously non-realistic and further work needs to be done to provide guarantees that negative measurement outcomes do not make the computation fail.
As described in \cref{sec:postselection}, this is possible in the case of the first CNOT physical implementation we have described \cref{eq:physical_implementation}. However, this implementation cannot be considered fault-tolerant as internal single qubit Pauli errors propagate to logical errors. How to generalise the method applied in \cref{sec:postselection} to fault-tolerant implementations like the one presented in \cref{eq:ft_physical_implementation} remains an open question.
It would be interesting to explore how the results of this work can be related to the complementary work of \cite{de_beaudrap_pauli_2020}, which describes when logical errors obtained from merges in surgery procedures can be corrected.

A second direction concerns the computation in full generality of the logical operation, i.e., of the (co)homology of the X- and Z-merges (\cref{eq:long_exact_sequence_homology}).  We rely on the additional technical assumption that the homology of the subcode $(V_\bul,V^\bul)$ satisfies either $H_0(V_\bul)=0$ or $H_1(V_\bul)=0$. We have not found a simple way to express the logical operation without this technical condition. This implies that we must carefully construct the subcode $(V_\bul,V^\bul)$ such that it satisfies this condition.

Third, a perennial problem in QEC is that measurement outcomes are themselves not reliable and can be subject to errors. A way to approach this problem is by introducing meta-checks, that consist in plugging a classical code on the syndrome measurements. This is captured straightforwardly in the homological picture of CSS codes by extending the length of the chain complex. 
It should be feasible to incorporate metachecks in the formalism to potentially enable CSS surgery for single-shot codes \cite{Quintavalle_2021,hillmann2024singleshotmeasurementbasedquantumerror}.

While our work focusses on qubits, the categorical nature of the theory allows it to be extended to other homological codes, including those involving qudits or more exotic quantum systems such as rotors and bosonic codes  \cite{cowtan_qudit_2022, vuillot2023homologicalquantumrotorcodes, xu2024lettingtigercagebosonic}. In this direction, the main difficulty is to adapt prop.~\ref{hilb_representation_chain} and theorem~\ref{theorem:soundness}, namely, to find a consistent physical implementation of the (co)chain maps. In a similar vein, this formalism may also provide insights into the study of CSS Floquet or space-time codes \cite{Hastings_2021,delfosse2023spacetimecodescliffordcircuits}.

A downside of the theory developed in this work is that it is somewhat restricted in terms of the operations it can describe: it cannot even describe general Clifford unitaries. The category $\Chains$ is commonly used in quantum error correction because there is currently no equivalent way to represent stabiliser codes as elegantly as CSS codes using chain complexes. Further investigation is needed to construct a category that captures stabiliser codes together with a functor leading to the logical space. Promising work in this direction, using symplectic algebra, has already begun \cite{booth2024graphicalsymplecticalgebra}.

\section*{Acknowledgements}

C.P. thanks Adithya Sireesh, Armanda O. Quintavalle and Simon Burton for insightful discussions in the early stages of this project. C.P. also thanks Anthony Leverrier and Christophe Vuillot for valuable advice, for suggesting the use of SZX-calculus 
and finding how to adapt the CNOT protocol with negative measurement outcomes
, Arthur Pesah for his assistance in designing small examples of surgery protocols and Alexander Cowtan for helpful discussions on the interpretation of the long exact sequence \cref{eq:long_exact_sequence_homology} and for finding the correct physical interpretation of preserving code maps \cref{eq:hilb_representation_chain}. C.P. also acknowledges Esha Swaroop for a helpful discussion on prior works in code surgery. Finally, we thank Șerban Cercelescu for pointing out that the choice of basis in a quotient merged code affects the code distance. We also thank Zhiyang He and Dominic J. Williamson for their valuable insights regarding prior work on code surgery.

C.P. acknowledges funding from the Plan France 2030 through the project ANR-22-PETQ-0006.

 J.R. is funded by by EPSRC Grants
EP/T001062/1 and EP/X026167/1.

A final thanks goes to the entire Quantum Software Lab team in Edinburgh for their support and kindness, as well as to the organizers of the Fault-Tolerant Quantum Computing 2024 workshop and the Centro de Ciencias de Benasque, where many discussions took place.

\printbibliography

\appendix

\section{Logical operations of lattice surgery}
\label{app:lattice_surgery}

We aim at proving that the physical operation presented \cref{fig:lattice_surgery} implements the logical operations presented \cref{eq:physical_surgery}. As the reasoning is very similar in the four cases (X- or Z-merge and X- or Z-split), we will focus our attention on the Z-merge.

Consider the Z-merge physical operation represented in thumbnail 2 \cref{fig:lattice_surgery}, where we measure new Z-stabilisers. We want to show that this physical operation commutes with the encoder in the following manner:
\begin{equation}
    \tikzfig{physical_Z-merge}.
\end{equation}
Consider the quantum channel $\Phi_p'$ representing the merge operation. It is defined by its Kraus operators $\{K_p'\}_{p \in \F_2^{n_a}}$ where $n_a$ is the number of auxiliary qubits in between the two surface patches \cref{fig:lattice_surgery}. Each Kraus operator $K_p'$ is a projection onto the Z-Pauli eigenspace of the corresponding combination of measurement outcomes $p$. Depending on the measurement outcomes $p$, one can directly interpret the resulting state as being a (larger) surface code---up to a logical Pauli correction. More details on this correction can be found in \cite{de_Beaudrap_2020}. Here, we will consider the correction applied when needed such that $\Phi_p$ defined by $\{K_p\}_p$ where $K_p=P^{correct}_pK_p'$ does already account for this correction. Thus, the state resulting from $\Phi_p$ is in any cases a (larger) surface code. We make the \emph{choice} to interpret it that way by defining the encoder $E$. 
Hence, the induced logical operation is represented by the quantum channel $\Phi_l$ defined by its Kraus operators $\{K_l\}_l$ where $K_l=E^\dagger K_p(E'\tens E')$. All the information can be recast in terms of a commuting diagram:
\begin{equation}
    \begin{tikzcd}
        \tikzfig{two_surface} \arrow[r, "\Phi_p"] & \tikzfig{big_surface} \\
        {(\C^2)}^{\tens 2} \arrow[u,"E' \tens E'"] \arrow[r, "\Phi_l"'] & \C^2 \arrow[u, "E"']
    \end{tikzcd}
\end{equation}
where the upper row objects are representations of the surface code Tanner graphs~\footnote{Formally, they should be ${(\C^2)}^{\tens n}$ and ${(\C^2)}^{\tens n'}$, but we prefer this graphical representation that stress on the Tanner graph representation of the merge.}. The question is now how to effectively deduce the $K_l$'s? The most efficient approach is to relate the logical operators before and after the merge. For simplicity, we are only presenting the method when no logical Pauli correction is required. A similar reasoning can be done with the corrections \cite{de_Beaudrap_2020}.

Consider $K_l=E^\dagger K_p(E'\tens E')$ such that no correction is required (for instance, we got only positive measurement outcomes $p=0_{\F_2^{n_a}}$). Then, by looking at the Tanner graph \cref{fig:lattice_surgery}, we can assert that the X-logical operators of the two original surface patches are mapped to the only X-logical operator in the resulting larger surface code---i.e. $K_p\overline{X}_1\overline{X}_2K_p^\dagger=\overline{X}$ as $\overline{X}_1\overline{X}_2$ and $K_p$ commute. Similarly, for Z-logical operators, we have $K_p\overline{Z}_1K_p^\dagger=\overline{Z}=K_p\overline{Z}_2K_p^\dagger$. Using the fact that $\overline{X}_i=E'X_iE'^\dagger$, $\overline{Z}_i=E'Z_iE'^\dagger$ and $\overline{X}=EXE^\dagger$, $\overline{Z}=EZE^\dagger$, we get the relations:
\begin{align}
    XK_lX_1X_2&=K_l \\
    ZK_lZ_1&=K_l \\
    ZK_lZ_2&=K_l.
\end{align}
Using the Choi-Jamiołkowski isomorphism $\mathcal{J}:\mathcal{L}(\mathcal{H}) \rightarrow \mathcal{H}^*\tens \mathcal{H}$, this system of equations can be turned into
\begin{align}
    (X \tens X_1X_2) \mathcal{J}(K_l)&=\mathcal{J}(K_l) \\
    (Z\tens Z_1)\mathcal{J}(K_l)&=\mathcal{J}(K_l) \\
    (Z\tens Z_2)\mathcal{J}(K_l)&=\mathcal{J}(K_l)
\end{align}
Hence, $\mathcal{J}(K_l)$ is a codeword of the CSS code defined by the stabilisers $\langle (X \tens X_1X_2), (Z\tens Z_1), (Z\tens Z_2) \rangle$. This code is maximal meaning there exists a single codeword. Using \cite{kissinger2022phasefree} (or simply by solving the system), we get 
\begin{equation}
    \mathcal{J}(K_l)=\tikzfig{Z-spider_JK_l}.
\end{equation}
Thus, 
\begin{equation}
    K_l=\tikzfig{Z-spider_K_l}.
\end{equation}

\section{The category $\Chains$ and homology} \label{hom&cat}

In this section, we introduce all the relevant mathematical notions used in this work. More details on these concepts can be found in \cite{Hatcher:478079, leinster2016basiccategorytheory}.

\subsection{Homological algebra}

The purpose of this subsection is to provide an overview of the category $\Chains$ and its dual $\Cochains$, and second to present the homology and cohomology functors. To achieve our ends, we need first to define the category $\MatF$:
\begin{itemize}
    \item each object is a vector space $V$ equipped with a specified basis $\Tilde{V}$ such that $V = \F_2 \Tilde{V}$, 
    \item maps are matrices over $\F_2$ and the composition is the usual matrix multiplication.
\end{itemize}
We will also utilise the known category $\Vect$, where objects are vector spaces and maps are linear maps with the usual composition.
We can now introduce chain complexes over $\MatF$.

\begin{definition}
    An object of $\Chains$, called a chain complex $C_\bul$, is a sequence of objects---called ``complexes''---and maps of $\MatF$ indexed by $\Z$
    \begin{equation}
    \begin{tikzcd}
        \ldots \arrow[r] & C_{n+1} \arrow[r, "\del_{n+1}"] & C_n \arrow[r, "\del_n"] & C_{n-1} \arrow[r] & \ldots
    \end{tikzcd}
    \end{equation}
    such that $\del_n \circ \del_{n+1} = 0$ for all $n \in \Z$.

    \noindent Furthermore, we define the cochain complex $C^\bul$ of $C_\bul$ to be
    \begin{equation}
    \begin{tikzcd}
        \ldots & C^{n+1} \arrow[l] & C^n \arrow[l, "\del^{n+1}"'] & C^{n-1} \arrow[l, "\del^n"'] & \ldots \arrow[l]
    \end{tikzcd}
    \end{equation}
    where $C^n \cong C_n$ is the dual of $C_n$ and $\del^n = \del_n^T$ for any $n \in \Z$. These are objects of $\Cochains$.
\end{definition}

The point of defining chain complexes is to be able to study them through the lens of homology and cohomology.

\begin{definition}
    Let $C_\bul$ be a chain complex over $\MatF$, we define for all $n \in \Z$ 
    \begin{equation}
    \begin{tikzcd}
        H_n(C_\bul) := \ker(\del_n) / \im(\del_{n+1}) & H^n(C_\bul) := \ker(\del_{n+1}^T) / \im(\del_n^T)
    \end{tikzcd}
    \end{equation}
    to be the $n$-th homology and cohomology spaces respectively.
\end{definition}

There is a duality between the homology and the cohomology spaces that is described by the two following propositions.

\begin{proposition} \label{H_n isom H^n}
    Let $C_\bul$ be a chain complex over $\MatF$. For all $n \in \Z$,
    \begin{equation}
    H_n(C_\bul) \cong H^n(C_\bul)
    \end{equation}
\end{proposition}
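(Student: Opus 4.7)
The plan is to reduce the statement $H_n(C_\bul) \cong H^n(C_\bul)$ to a dimension count. Every object of $\MatF$ comes with a finite distinguished basis, so each $C_n$ is a finite-dimensional $\F_2$-vector space, and hence so are the subquotients $H_n(C_\bul)$ and $H^n(C_\bul)$. Since any two finite-dimensional $\F_2$-vector spaces of the same dimension are (non-canonically) isomorphic, it suffices to verify that the two dimensions agree.

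First, I would apply the rank-nullity theorem to $\del_n : C_n \to C_{n-1}$ to get $\dim \ker(\del_n) = \dim C_n - \mathrm{rank}(\del_n)$, and combine it with $\dim \im(\del_{n+1}) = \mathrm{rank}(\del_{n+1})$. Using the chain condition $\im(\del_{n+1}) \subseteq \ker(\del_n)$, this yields
\[
  \dim H_n(C_\bul) \;=\; \dim C_n - \mathrm{rank}(\del_n) - \mathrm{rank}(\del_{n+1}).
\]
The same computation on the cochain side, using $\del^{n+1} = \del_{n+1}^T$ and $\del^n = \del_n^T$, gives
\[
  \dim H^n(C_\bul) \;=\; \dim C_n - \mathrm{rank}(\del_{n+1}^T) - \mathrm{rank}(\del_n^T).
\]
The proof then concludes using the standard linear algebra identity $\mathrm{rank}(A) = \mathrm{rank}(A^T)$ applied to both $\del_n$ and $\del_{n+1}$.

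A more conceptual alternative would be to exhibit an explicit isomorphism via the bilinear pairing $\langle z, x \rangle := z \cdot x$ on $C_n \times C_n$, which is well-defined precisely because objects of $\MatF$ carry distinguished bases. This pairing descends to a map $H_n(C_\bul) \times H^n(C_\bul) \to \F_2$: if $\del_n z = 0$ and $\del_{n+1}^T x = 0$, then $\langle z + \del_{n+1} s,\, x + \del_n^T t\rangle = \langle z, x\rangle$, using the adjointness identities $\langle \del_{n+1} s, x\rangle = \langle s, \del_{n+1}^T x\rangle$ and $\langle z, \del_n^T t\rangle = \langle \del_n z, t\rangle$. Showing non-degeneracy of this pairing would then yield $H^n(C_\bul) \cong H_n(C_\bul)^{\ast}$ and hence the desired isomorphism.

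The main obstacle in this second, more natural-looking route is the non-degeneracy step, which essentially unfolds back into the same rank identity $\mathrm{rank}(A) = \mathrm{rank}(A^T)$. Since the dimension count is shorter and delivers the same information, that is the path I would take, mentioning the pairing only as a sanity check that the isomorphism in the proposition does admit a basis-free description.
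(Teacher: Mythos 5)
Your dimension-count proof is correct, and it takes a genuinely different route from the paper. The paper argues via orthogonality relations: it rewrites $\ker \del_{n+1}^T = (\im\del_{n+1})^\perp$ and $\im\del_n^T = (\ker\del_n)^\perp$ and then attempts to build an explicit isomorphism $(\im\del_{n+1})^\perp / (\ker\del_n)^\perp \to \ker\del_n/\im\del_{n+1}$ by ``orthogonal projection.'' Your version dispenses with constructing any map at all: rank--nullity plus $\mathrm{rank}(A) = \mathrm{rank}(A^T)$ shows the two subquotients have equal dimension, and finite-dimensional $\F_2$-spaces of equal dimension are isomorphic. That is shorter and, importantly, sidesteps a delicate point in the paper's argument: over $\F_2$ the standard pairing is not positive-definite, so $W \oplus W^\perp = \F_2^n$ can fail (e.g.\ $W = \Span\{(1,1)\}$ in $\F_2^2$ has $W^\perp = W$), and ``the orthogonal projection onto $\ker\del_n$'' is not automatically well-defined. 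The paper's approach does aim to give a more explicit, chain-level isomorphism, which has some value for downstream bookkeeping, but your abstract existence argument establishes the proposition as stated just as well. Your second, pairing-based sketch is in fact closer in spirit to what the paper develops immediately afterward in Proposition~\ref{canonical_bases}, where the descended bilinear pairing $H^n \times H_n \to \F_2$ is used to produce dual bases; so your remark that the pairing ``unfolds back into the same rank identity'' is apt, and your choice to lead with the dimension count is reasonable.
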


\begin{proof}
    First, using the two equalities $\ker \del_{n+1}^T = {\im \del_{n+1}}^\perp$ and $\im \del_n^T = {\ker \del_n}^\perp$, we get
    \begin{equation}
        \ker \del_{n+1}^T / \im \del_n^T = {\im \del_{n+1}}^\perp / {\ker \del_n}^\perp
    \end{equation}
    Then, we consider $p : {\im \del_{n+1}}^\perp \to \ker \del_n$ the orthogonal projection onto $\ker \del_n$. As $\ker p = {\ker \del_n}^\perp$, we get the injective map $\Tilde{p} : {\im \del_{n+1}}^\perp / {\ker \del_n}^\perp \to \ker \del_n$. The last step is to quotient out the codomain by $\im \del_{n+1}$. In fact, using the chain complex property $\im \del_{n+1} \subseteq \ker \del_n$, we get the decomposition into direct sum $\ker \del_n = \im \del_{n+1} \oplus {\im \del_{n+1}}^\perp \cap \ker \del_n$. Hence, the isomorphism $\Tilde{p} : {\im \del_{n+1}}^\perp / {\ker \del_n}^\perp \to \ker \del_n / \im \del_{n+1}$.
\end{proof}

\begin{proposition} \label{canonical_bases}
    The duality pairing
    \begin{align*}
        \cdot : C^n \times C_n & \to \F_2 \\
        x^T,z & \mapsto x^Tz
    \end{align*}
    lifts to $H^n(C_\bul) \times H_n(C_\bul)$ in the following way 
    \begin{align*}
        \cdot : H^n(C_\bul) \times H_n(C_\bul) & \to \F_2 \\
        [x],[z] & \mapsto x \cdot z
    \end{align*}

    Thus, the duality pairing of $C^n \times C_n $ lifts to $H^n(C_\bul) \times H_n(C_\bul)$, meaning that considering a basis $\{[z_j]\}_{1 \leq j \leq k}$ of $H_n(C_\bul)$, there exist a unique basis $\{[x_i]\}_{1 \leq i \leq k}$ of $H^n(C_\bul)$ such that $[x_i]\cdot [z_j] = \delta_{i,j}$.
\end{proposition}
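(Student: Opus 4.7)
The plan is to split the proof into two parts: first checking that the pairing descends unambiguously from $C^n \times C_n$ to $H^n(C_\bul) \times H_n(C_\bul)$, and then establishing that the descended pairing is non-degenerate, from which the existence and uniqueness of dual bases are immediate consequences.

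For well-definedness, I would pick representatives $x \in \ker(\del_{n+1}^T)$ of $[x]$ and $z \in \ker(\del_n)$ of $[z]$, and verify invariance under a change of representative. Replacing $x$ by $x + \del_n^T y$ changes $x^T z$ by $y^T \del_n z = 0$ since $z \in \ker(\del_n)$, and replacing $z$ by $z + \del_{n+1} w$ changes $x^T z$ by $(\del_{n+1}^T x)^T w = 0$ since $x \in \ker(\del_{n+1}^T)$. Bilinearity is inherited directly from the bilinearity of the pairing on $C^n \times C_n$.

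The heart of the argument is non-degeneracy. Suppose $[x] \in H^n(C_\bul)$ satisfies $[x] \cdot [z] = 0$ for all $[z] \in H_n(C_\bul)$. By the well-definedness step this means $x^T z = 0$ for every $z \in \ker(\del_n)$, i.e. $x \in (\ker \del_n)^\perp$. Over the finite-dimensional space $C_n$ equipped with the standard inner product, we have the identification $(\ker \del_n)^\perp = \im(\del_n^T)$, so $x \in \im(\del_n^T)$ and hence $[x] = 0$. The symmetric argument, using $(\im \del_{n+1})^\perp = \ker(\del_{n+1}^T)$, shows that the pairing is non-degenerate in the second variable as well. This is the main technical step, and it is where the specific structure of chain complexes over $\F_2$-vector spaces equipped with a chosen basis (i.e. an inner product) is essential.

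Finally, to extract the dual basis statement: given a basis $\{[z_j]\}_{1 \le j \le k}$ of $H_n(C_\bul)$, consider the linear map $\Phi : H^n(C_\bul) \to \F_2^k$ defined by $\Phi([x]) = ([x] \cdot [z_j])_{1 \le j \le k}$. Non-degeneracy gives $\ker \Phi = 0$, so $\Phi$ is injective; combined with $\dim H^n(C_\bul) = \dim H_n(C_\bul) = k$ from prop.~\ref{H_n isom H^n}, $\Phi$ is an isomorphism. The dual basis $\{[x_i]\}_{1 \le i \le k}$ is then uniquely determined as the preimage under $\Phi$ of the standard basis of $\F_2^k$, i.e. the unique family satisfying $[x_i] \cdot [z_j] = \delta_{i,j}$. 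The symmetric construction starting from a basis of $H^n(C_\bul)$ yields the dual basis of $H_n(C_\bul)$.
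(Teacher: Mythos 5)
Your proof is correct, but it takes a genuinely different route from the paper's. The paper's argument is constructive: after establishing well-definedness exactly as you do, it builds the dual basis explicitly, stacking representatives $L_Z$ of the $[z_j]$ alongside generating matrices of $\im\del_{n+1}$ and $(\ker\del_n)^\perp$ into an invertible matrix, then reading the dual basis off the first $k$ rows of its inverse. That approach exhibits a concrete recipe for computing $\{[x_i]\}$ from $\{[z_j]\}$, which is arguably useful downstream in the paper since these bases feed into the physical implementation. Your argument is instead abstract: you prove non-degeneracy of the descended pairing directly from $(\ker\del_n)^\perp = \im(\del_n^T)$ and the analogous identity in the other variable, then conclude via the linear isomorphism $\Phi: H^n(C_\bul) \to \F_2^k$ (injective by non-degeneracy, bijective by the dimension match from prop.~\ref{H_n isom H^n}). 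This is cleaner and more clearly isolates the structural fact --- that the pairing is a perfect pairing --- from the bookkeeping, and it immediately gives both existence and uniqueness in a single stroke. Both proofs ultimately trade on the same linear-algebraic identity relating kernels, images, and orthogonal complements over $\F_2$, so neither is more general, but your version makes the conceptual content (non-degeneracy) explicit where the paper leaves it implicit in the invertibility of the stacked matrix.
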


\begin{proof}
    We shall prove that the lift of the scalar product is well defined, i.e., does not depend on the representative. This is comes as a consequence of $\im(\del_{n+1}) \perp \ker(\del^{n+1})$, $\im(\del^n) \perp \ker(\del_n)$ and $\im(\del_{n+1}) \perp \im(\del^n)$. If $x,z \in \ker \del_n \times \ker \del^{n+1}$ and $s,t \in \im \del_{n+1} \times \im \del^n$, then $(x+s) \cdot (z+t) = x \cdot z$.

    Consider now a basis $\{[z_j]\}_{1 \leq j \leq k}$ of $H_n(C_\bul)$. We want to construct the dual basis $\{[x_i]\}_{1 \leq i \leq k}$ of $H^n(C_\bul)$ such that $[x_i]\cdot [z_j] = \delta_{i,j}$.
    The construction is based on the decomposition $\F_2^n \cong H_n(C_\bul) \oplus \im \del_{n+1} \oplus {\ker \del_n}^\perp$.
    Define the matrix  
    \begin{equation}
    L_Z=\left( \begin{array}{c|c|c}
        z_1 & \ldots & z_k
    \end{array} \right)
    \end{equation}
    where the $j$-th column is a representative of $[z_j]$. Consider a generating matrix $\Tilde{\del}_{n+1}$ of $\im \del_{n+1}$ (that can be obtained by removing columns of $\del_{n+1}$ until the matrix becomes injective), and a generating matrix $\Tilde{\del}^n$ of ${\ker \del_n}^\perp$ (that again can be obtained by removing columns of $\del^n$ until the matrix becomes injective).
    Now define the \emph{invertible} matrix
    \begin{equation}
        \left(\begin{array}{c|c|c}
            L_Z & \Tilde{\del}_{n+1} & \Tilde{\del}^n  
        \end{array}\right)
    \end{equation}
    and compute the left inverse
    \begin{equation}
        \left(\begin{array}{c}
            L_X \\
            \hline* \\ \hline * \\
        \end{array}\right)\left(\begin{array}{c|c|c}
            L_Z & \Tilde{\del}_{n+1} & \Tilde{\del}^n  
        \end{array}\right)=I.
    \end{equation}
    We can verify that the $k$ first rows $\{x_i\}_{1 \leq i \leq k}$ of the left inverse---i.e. the rows of $L_X$---are representatives of a basis of $H^n(C_\bul)$. For each row $x_i$ of $L_X$, $x_1\Tilde{\del}_{n+1}=0$, implying that $x_1 \in \im {\del_{n+1}}^\perp=\ker \del^{n+1}$. Finally, $\{[x_i]\}_{1 \leq i \leq k} \in H^n(C_\bul)$ are linearly independent as $L_X\Tilde{\del}_{n+1}=0$.
\end{proof}

So far, we have focused our attention on the objects of the category $\mathtt{(co)}\Chains$---namely, the chain complexes. We now turn our attention to maps of $\mathtt{(co)}\Chains$. These are called chain maps.

\begin{definition}
    A chain map $f_\bul : C_\bul \rightarrow D_\bul$ consists in a collection of matrices $\{f_i : C_i \rightarrow D_i \}_{i\in \Z}$ over $\F_2$ such that each square of the following diagram commutes\footnote{The $n$-th square commutes iff $f_{n-1} \circ \del_n^C  = \del_n^D \circ f_n$. All the squares commute iff this relation holds for all $n \in \Z$.}.
    \begin{equation}\begin{tikzcd}\cdots \arrow[r] & C_{n+1}\arrow[r, "\del^C_{n+1}"]\arrow[d, "f_{n+1}"] & C_{n}\arrow[r, "\del^C_{n}"]\arrow[d, "f_{n}"] & C_{n-1}\arrow[r]\arrow[d,"f_{n-1}"] & \cdots\\
    \cdots \arrow[r] & D_{n+1}\arrow[r, "\del^D_{n+1}"] & D_{n}\arrow[r, "\del^D_{n}"] & D_{n-1}\arrow[r] & \cdots\end{tikzcd}.
    \end{equation}
    
    Similarly, a cochain map $g^\bul : C^\bul \rightarrow D^\bul$ consists in a collection of matrices $\{g^i : C^i \rightarrow D^i \}_{i\in \Z}$ over $\F_2$ such that all the squares commute
    \begin{equation}
    \begin{tikzcd}
        \cdots & \arrow[l] \arrow[d,"g^{n+1}"] C^{n+1} & \arrow[l, "\del_C^{n+1}"']\arrow[d, "g^n"] C^n & \arrow[l, "\del_C^{n}"']\arrow[d, "g^{n-1}"] C^{n-1} & \arrow[l] \cdots\\
        \cdots & \arrow[l] D^{n+1} & \arrow[l, "\del_D^{n+1}"'] D^n & \arrow[l, "\del_D^{n-1}"'] D^{n} & \arrow[l] \cdots
    \end{tikzcd}.
    \end{equation}
    Finally, we define the composition of chain maps by pairwise matrix multiplication, and similarly for the composition of cochain maps.
\end{definition}

Note that from any chain map $f_\bul : C_\bul \rightarrow D_\bul$, we can define a cochain map $f^\bul : D^\bul \rightarrow C^\bul$ where $f^n=f_n^T$ for all $n \in \Z$. 

The axiomatisation of $\{H_n : \Chains \rightarrow \Vect\}_{n \in \Z}$ and $\{H^n : \Cochains \rightarrow \Vect\}_{n \in \Z}$ are functorial. In fact, every chain map lifts to a linear map between the homology spaces.

\begin{theorem} \label{th:lift-chain_maps}
    Let $f_\bul : C_\bul \rightarrow D_\bul$ be a chain map. Then, $f_{n*} := H_n(f_\bul) : H_n(C_\bul) \rightarrow H_n(D_\bul)$ is defined for every $[x] \in H_n(C_\bul)$ to be $f_{n*}([x]) = [f_n(x)] \in H_n(D_\bul)$. \\
    Considering a second chain map $g_\bul : D_\bul \rightarrow E_\bul$ The composition is defined as $(g_n \circ f_n)_* = g_{n*} \circ f_{n*}$.
\end{theorem}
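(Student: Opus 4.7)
The plan is to verify that the proposed formula $f_{n*}([x]) = [f_n(x)]$ does indeed give a well-defined linear map between the homology spaces, and then check functoriality directly from the definition. The whole argument amounts to unwinding the commuting squares of the chain map definition in two places, so no real obstacle is expected: the main technical point is simply confirming that $f_n$ sends cycles to cycles and boundaries to boundaries, which is exactly what the commutativity of the chain map squares encodes.

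First, I would check that $f_{n*}$ is well-defined on equivalence classes. Given $[x] \in H_n(C_\bul) = \ker(\del_n^C)/\im(\del_{n+1}^C)$, I need to verify two things: (a) any representative $x$ satisfies $f_n(x) \in \ker(\del_n^D)$, and (b) the class $[f_n(x)]$ does not depend on the choice of representative. For (a), using commutativity of the square $f_{n-1}\del_n^C = \del_n^D f_n$, we get $\del_n^D f_n(x) = f_{n-1}\del_n^C(x) = f_{n-1}(0) = 0$, so $f_n(x)$ is a cycle. For (b), suppose $x' = x + \del_{n+1}^C(y)$ for some $y \in C_{n+1}$; then using commutativity of the next square, $f_n(x') = f_n(x) + f_n\del_{n+1}^C(y) = f_n(x) + \del_{n+1}^D f_{n+1}(y)$, so $f_n(x') - f_n(x) \in \im(\del_{n+1}^D)$ and hence $[f_n(x')] = [f_n(x)]$ in $H_n(D_\bul)$.

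Linearity of $f_{n*}$ is inherited from linearity of $f_n$ together with the fact that the quotient map $\ker(\del_n^D) \to H_n(D_\bul)$ is linear: concretely, $f_{n*}([x] + [x']) = [f_n(x+x')] = [f_n(x) + f_n(x')] = [f_n(x)] + [f_n(x')] = f_{n*}([x]) + f_{n*}([x'])$, and similarly for scalar multiplication.

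Finally, for the functoriality statement, I would observe that the composition $g_\bul \circ f_\bul$ is itself a chain map (each square commutes because both of the component squares do), so $(g \circ f)_{n*}$ is well-defined by the first part of the proof. Then direct computation gives $(g_n \circ f_n)_*([x]) = [(g_n \circ f_n)(x)] = [g_n(f_n(x))] = g_{n*}([f_n(x)]) = g_{n*}(f_{n*}([x]))$, and noting $(\mathrm{id}_{C_\bul})_{n*} = \mathrm{id}_{H_n(C_\bul)}$ is immediate, this establishes that $H_n$ is a functor from $\Chains$ to $\Vect$. The dual statement for cochain maps and cohomology follows by exactly the same argument with arrows reversed.
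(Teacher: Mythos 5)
Your proof is correct and follows the same route as the paper's: the paper's proof is just a condensed version ("a small diagram chasing using the commutativity of the squares guarantees that $f_n(\ker(\del_n^C)) \subseteq \ker(\del_n^D)$ and $f_n(\im(\del_{n+1}^C)) \subseteq \im(\del_{n+1}^D)$"), and you have simply written that chase out explicitly and added the routine checks of linearity and composition. No divergence in approach, only in level of detail.
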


\begin{proof}
    We have to prove that $f_{n*}$ is well defined, i.e., that $f_{n*}([x])$ does not dependend on the representative $x \in \ker(\del_n^C)$, and that $f_n(x) \in \ker(\del_n^D)$. A small diagram chasing using the commutativity of the squares guarantees that $f_n(\ker(\del_n^C)) \subseteq \ker(\del_n^D)$ and $f_n(\im(\del_{n+1}^C)) \subseteq \im(\del_{n+1}^D)$.
\end{proof}

The same goes for the cohomology functor with cochain maps. Finally, we introduce the notion of direct sums of chain complexes.

\begin{definition}
    Let $C_\bul$ and $D_\bul$ be two chain complexes over $\MatF$. The direct sum $(C \oplus D)_\bul$ of $C_\bul$ and $D_\bul$ is the chain complex
    \begin{equation}
    \begin{tikzcd}
        \ldots \arrow[r] & C_{n+1} \oplus D_{n+1} \arrow[r, "\del_{n+1}"] & C_n \oplus D_n \arrow[r, "\del_n"] & C_{n-1} \oplus D_{n-1} \arrow[r] & \ldots
    \end{tikzcd}
    \end{equation}
    where for all $n \in \Z$, $\del_n := \del_n^C \oplus \del_n^D$, meaning that
    for $c+d \in C_n \oplus D_n$, $\del_n (c+d) := \del_n^Cc + \del_n^Dd$.
\end{definition}

\subsection{Short exact sequences and homology}

\begin{definition} \label{exact_sequence}
    Consider a sequence of maps
    \begin{equation}
        \begin{tikzcd}
            \ldots \arrow[r] & V_{n+1} \arrow[r, "\phi_{n+1}"] & V_n \arrow[r, "\phi_n"] & V_{n-1} \arrow[r] & \ldots
        \end{tikzcd}
    \end{equation}
    The sequence is exact if $\im(\phi_{n+1})=\ker(\phi_n)$ for all $n \in \Z$.
\end{definition}

\begin{theorem} \label{Hatcher}
    Let
    \begin{equation}\begin{tikzcd}
    0_\bul \arrow[r] & A_\bul \arrow[r, "\iota_\bul"] & B_\bul \arrow[r, "\pi_\bul"] & C_\bul \arrow[r] & 0_\bul
    \end{tikzcd}\end{equation}
    be a short exact sequence of chain complexes~\footnote{This means that replacing $\bul$ with any $n \in \Z$, the sequence is exact.}. This exact sequence lifts to a long exact sequence on the homology spaces
    \begin{equation}\begin{tikzcd}
        \ldots \arrow[r] & H_n(A_\bul) \arrow[r, "\iota_{n*}"] & H_n(B_\bul) \arrow[r, "\pi_{n*}"] & H_n(C_\bul) \arrow[r, "\del_n"] & H_{n-1}(A_\bul) \arrow[r] & \ldots
    \end{tikzcd}\end{equation}
    for some sequence of homomorphism $(\del_n)_n$. It does also lift to a long exact sequence on cohomology spaces

    \begin{equation}\begin{tikzcd}
        \ldots & \arrow[l] H^n(A_\bul) & \arrow[l, "\iota_*^n"'] H^n(B_\bul) & \arrow[l, "\pi_*^n"'] H^n(C_\bul) & \arrow[l, "\del^n"'] H^{n-1}(A_\bul) & \arrow[l] \ldots
    \end{tikzcd}\end{equation}
\end{theorem}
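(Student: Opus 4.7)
The plan is to prove the classical zig-zag lemma by explicitly constructing the connecting homomorphism $\del_n$ via a diagram chase, then verifying exactness at each of the three types of position in the long sequence. The cohomology statement will follow immediately by dualising.

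First I would construct $\del_n : H_n(C_\bul) \to H_{n-1}(A_\bul)$. Given a class $[c] \in H_n(C_\bul)$ with representative $c \in \ker(\del_n^C)$, surjectivity of $\pi_n$ yields $b \in B_n$ with $\pi_n(b) = c$. Then $\pi_{n-1}(\del_n^B b) = \del_n^C \pi_n(b) = 0$, so by exactness of the short sequence at degree $n-1$, there exists a unique $a \in A_{n-1}$ with $\iota_{n-1}(a) = \del_n^B b$; uniqueness uses injectivity of $\iota_{n-1}$. A further chase $\iota_{n-2}(\del_{n-1}^A a) = \del_{n-1}^B \iota_{n-1}(a) = \del_{n-1}^B \del_n^B b = 0$ together with injectivity of $\iota_{n-2}$ shows $a \in \ker(\del_{n-1}^A)$. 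Set $\del_n[c] := [a]$.

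Second, I would check well-definedness and linearity. Replacing $b$ by another lift $b + \iota_n(a')$ of $c$ changes $a$ by $\del_n^A(a')$, leaving $[a]$ unchanged; replacing $c$ by $c + \del_{n+1}^C(c')$ and lifting $c'$ to $b' \in B_{n+1}$ shows that $b + \del_{n+1}^B(b')$ is an admissible new lift and yields the same $a$. Linearity is immediate since all choices can be made linearly.

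Third, I would verify exactness at the three positions. At $H_n(B_\bul)$: the composite $\pi_{n*}\iota_{n*}$ vanishes since $\pi_n \iota_n = 0$, and if $\pi_{n*}[b]=0$ then $\pi_n(b) = \del_{n+1}^C(c')$; lifting $c'$ to $b' \in B_{n+1}$ gives $b - \del_{n+1}^B(b') \in \ker(\pi_n) = \im(\iota_n)$, and the preimage in $A_n$ is a cycle representing a class mapping to $[b]$. At $H_n(C_\bul)$: using $b$ itself as lift when $b$ is already a cycle shows $\del_n \pi_{n*} = 0$; conversely, $\del_n[c]=0$ means $a = \del_n^A(a'')$ for some $a''$, and then $b - \iota_n(a'')$ is a cycle in $B_n$ with image $c$. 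At $H_{n-1}(A_\bul)$: $\iota_{(n-1)*}\del_n[c] = [\del_n^B b] = 0$; conversely, $\iota_{n-1}(a) = \del_n^B(b)$ for some $b$ makes $c := \pi_n(b)$ a cycle (since $\pi_{n-1}\iota_{n-1}=0$) whose connecting image is $[a]$.

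Finally, the cohomology long exact sequence is obtained by applying the same argument to the short exact sequence of cochain complexes obtained by transposing; equivalently, by the isomorphism $H^n \cong H_n$ of Proposition \ref{H_n isom H^n}. The main obstacle is keeping the diagram chases readable: every step involves choosing a preimage under a surjection and then checking independence of the choice. None of the steps are individually hard, but mistakes are easy to make, which is why this result (the zig-zag / snake lemma) is almost always cited rather than reproved, as done here via the reference to Hatcher.
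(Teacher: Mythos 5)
Your proof is correct, and it is genuinely more detailed than what the paper does. The paper's entire proof is a two-sentence citation: the homology long exact sequence is cited from Hatcher (p.~116), and the cohomology version is then inferred ``using prop.~\ref{H_n isom H^n}.'' You instead reproduce the zig-zag lemma in full: you construct $\del_n$ by the standard lift--boundary--pull-back chase, verify that it is well-defined under the two relevant changes of representative, and check exactness at each of the three positions. All of these steps are correct as written. What this buys you is self-containedness and an explicit formula for the connecting homomorphism, which the paper never states; what it costs is length, which is exactly why the paper (and most sources) cite rather than reprove this. One small improvement you make over the paper's proof of the cohomology half: you offer the cleaner route of transposing the short exact sequence of chain complexes into a short exact sequence of cochain complexes (note that transposition swaps the roles of $\iota$ and $\pi$, since the transpose of an injective $\F_2$-matrix is surjective and vice versa, which is what makes the arrows in the cohomology sequence run in the displayed direction) and rerunning the same diagram chase. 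The paper's appeal to the non-natural vector-space isomorphism $H^n \cong H_n$ alone does not by itself carry over the exactness of the sequence of maps; the transposition argument, which you also give, does.
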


\begin{proof}
    The long exact sequence on homology can be found in \cite{Hatcher:478079} at p.116. The second on cohomology spaces is obtained using prop.~\ref{H_n isom H^n}.
\end{proof}

We can derive from this theorem a similar version with short exact sequences in $\Cochains$. To stay concise, we do not present it here.

\subsection{Pushouts, pullbacks and exact sequences}

Here, we introduce the key ingredient for proving the equivalence between our construction and that of \cite{Cowtan_2024}. The proof relies on establishing a connection between pushouts, pullbacks, and exact sequences. Notably, throughout the following discussion, we consider pushouts and pullbacks without explicitly verifying their existence. This is justified by the fact that the category 
$\Chains$ (as well as $\Cochains$) is abelian \cite{lane1978categoriesfortheworkingmathematician}.

\begin{theorem} \label{theorem:stacks}
Consider the following commuting square in the category $\Chains$;
\begin{equation}\begin{tikzcd}
V_\bul \arrow[r, "g_\bul"]\arrow[d, "f_\bul"'] & D_\bul \arrow[d,"l_\bul"]\\
C_\bul \arrow[r,"k_\bul"'] & Q_\bul 
\end{tikzcd}\end{equation}
This diagram is a pullback square iff the corresponding sequence
\begin{equation}\begin{tikzcd}
    0_\bul \arrow[r] & V_\bul \arrow[r, "f_\bul \oplus g_\bul"] & (C\oplus D)_\bul \arrow[r, "k_\bul - l_\bul"] & Q_\bul
\end{tikzcd}\end{equation}
is exact. Similarly, it is a pushout square iff the sequence
\begin{equation}\begin{tikzcd}
    V_\bul \arrow[r, "f_\bul \oplus g_\bul"] & (C\oplus D)_\bul \arrow[r, "k_\bul - l_\bul"] & Q_\bul \arrow[r] & 0_\bul
\end{tikzcd}\end{equation}
is exact. Hence, the diagram is both a pullback and a pushout square iff the sequence
\begin{equation}\begin{tikzcd}
    0_\bul \arrow[r] & V_\bul \arrow[r, "f_\bul \oplus g_\bul"] & (C\oplus D)_\bul \arrow[r, "k_\bul - l_\bul"] & Q_\bul \arrow[r] & 0_\bul
\end{tikzcd}\end{equation}
is exact.
\end{theorem}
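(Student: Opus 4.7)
The plan is to invoke the abelian category structure of $\Chains$, which is inherited pointwise from $\MatF$ (equivalently the category of finite-dimensional $\F_2$-vector spaces): kernels, cokernels, images, and direct sums are all computed degree-by-degree, and the chain-complex constraints ($\partial \circ \partial = 0$ and commutation with chain maps) are preserved by these operations. I would then appeal to the standard construction of binary (co)limits in an abelian category from (co)kernels of difference maps: the pullback of $C_\bul \xrightarrow{k_\bul} Q_\bul \xleftarrow{l_\bul} D_\bul$ is $\ker(k_\bul - l_\bul) \hookrightarrow (C \oplus D)_\bul$ equipped with the two canonical projections, and dually the pushout of $C_\bul \xleftarrow{f_\bul} V_\bul \xrightarrow{g_\bul} D_\bul$ is $(C \oplus D)_\bul \twoheadrightarrow \mathrm{coker}(f_\bul - g_\bul)$ equipped with the two canonical injections. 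Over $\F_2$ the signs are inessential ($-1 = 1$), but I would retain them for conceptual clarity.

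For the pullback direction, I would first observe that commutativity of the square is precisely the identity $(k_\bul - l_\bul) \circ (f_\bul \oplus g_\bul) = 0$, so the universal property of the kernel produces a canonical comparison $V_\bul \to \ker(k_\bul - l_\bul)$. Uniqueness of universal objects then gives that the square is a pullback if and only if this comparison is an isomorphism in $\Chains$. The final step is to recognise that this condition is equivalent to $f_\bul \oplus g_\bul$ being monic and its image coinciding with $\ker(k_\bul - l_\bul)$, i.e.\ exactness of
\begin{equation*}
    0_\bul \longrightarrow V_\bul \xrightarrow{f_\bul \oplus g_\bul} (C\oplus D)_\bul \xrightarrow{k_\bul - l_\bul} Q_\bul
\end{equation*}
at both $V_\bul$ and $(C \oplus D)_\bul$.

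The pushout direction I would obtain by formal duality: the square is a pushout iff the canonical comparison $\mathrm{coker}(f_\bul - g_\bul) \to Q_\bul$ is an isomorphism, which unpacks to $k_\bul - l_\bul$ being epic (exactness at $Q_\bul$) together with $\ker(k_\bul - l_\bul) = \im(f_\bul - g_\bul)$ (exactness at $(C \oplus D)_\bul$). The third claim is then the conjunction of the first two. The one delicate point I anticipate is checking that pullbacks and pushouts computed in the truncated subcategory of length-$2$ chain complexes agree with those of the ambient category of unbounded complexes; but this is immediate since every construction above is pointwise and trivially vanishes outside degrees $0,1,2$. I do not expect any serious obstacle beyond keeping the sign conventions straight and making sure the degree-wise computations genuinely assemble into chain maps, both of which are routine.
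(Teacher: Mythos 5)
Your proof is correct, and in fact it supplies exactly the argument that the paper omits: the paper's ``proof'' of this theorem is a single line citing Lemma~12.5.2 of the Stacks Project, so there is no paper argument to compare against beyond the reference. What you have written is the standard abelian-category proof of that lemma, specialised to $\Chains$: one observes that $\Chains$ over $\MatF$ is abelian with everything computed degree-wise, realises the pullback as $\ker(k_\bul - l_\bul) \hookrightarrow (C\oplus D)_\bul$ and the pushout as $(C\oplus D)_\bul \twoheadrightarrow \mathrm{coker}(f_\bul \oplus g_\bul)$, notes that commutativity of the square is the vanishing of the composite $(k_\bul - l_\bul)\circ(f_\bul \oplus g_\bul)$, and then unwinds ``comparison map is an iso'' into exactness at the two relevant positions. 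All of these steps are sound, and your aside about the truncated length-$2$ subcategory being closed under the relevant (co)limits is a correct observation, though not strictly needed since the theorem as stated already lives in the ambient $\Chains$.

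The one stylistic wrinkle is your use of $f_\bul - g_\bul$ for the map into $C_\bul \oplus D_\bul$ when computing the cokernel: $f_\bul$ and $g_\bul$ have different codomains, so the subtraction is not literal; the intended map is $v \mapsto (f(v), -g(v))$, which over $\F_2$ coincides with the $f_\bul \oplus g_\bul$ appearing in the statement. Since you explicitly invoke $-1=1$ this does not cause an error, but if you keep the signs ``for conceptual clarity'' as announced, it would be cleaner to write the map as $(f_\bul, -g_\bul)^T$ or similar rather than as a difference of maps with mismatched targets.
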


\begin{proof}
    We refer to Lemma 12.5.2 \cite{stacks-project}.
\end{proof}

\section{Soundness of the interpretation} \label{ap:interpretation}

In this section, we present all the proofs regarding the interpretation of the (co)chain maps and of their lifts to the (co)homology. The proofs are presented for chain complexes/maps, but can be adapted dually for cochain complexes/maps.

\begin{proof}[Proof of Proposition~\ref{hilb_representation_chain}] 
    We aim to prove that this representation is consistent with the interpretation of the chain complexes we gave in \cref{chain_complex}. In other words, starting from a state in the codespace of $(C_\bul,C^\bul)$---namely, stabilised by all the stabilisers of $(C_\bul,C^\bul)$---and applying the interpretation of $f_\bul$ gives a state in the codespace of $(D_\bul,D^\bul)$.
    It essentially comes from the properties of SZX-calculus \cite{https://doi.org/10.4230/lipics.mfcs.2019.55}.

    \textit{Z-stabilisers: }In SZX-calculus, a stabiliser Z-operator of $(D_\bul,D^\bul)$---i.e. a vector $\del_2^Dy \in \im(\del_2^D)$---is represented by a thick Z-spider labelled $\pi \del_2^Dy$. According to the rules of SZX-calculus, we can make this Z-operator commute with the physical operation. Considering $D_2=\im(f_2)\oplus \Omega$, there are two cases; either $y \in \im(f_2)$ such that $y=f_2x$ and using the commutativity of the left square in \cref{eq:chain_map}, we get
    \begin{equation}
        \tikzfig{SZX_stab_to_stab_f_03} = \tikzfig{SZX_stab_to_stab_f_02} = \tikzfig{SZX_stab_to_stab_f_01}.
    \end{equation}
    Or $y \in \Omega$, then $\del_2^Dy=\del_\Omega y$ and we get
    \begin{equation}
        \tikzfig{SZX_stab_to_stab_f_032} = \tikzfig{SZX_stab_to_stab_f_012}=\tikzfig{SZX_f_0}.
    \end{equation}
    
    \textit{X-stabilisers: }Similarly, in SZX-calculus, a stabiliser X-operator $\del^1_Dy \in \im(\del^1_D)$ is a thick X-spider labelled $\pi \del^1_Dy$. It turns into a thick X-spider labelled $\pi f^1 \del^1_D y \in \im(\del^1_C)$  
    \begin{equation}
        \tikzfig{SZX_syn_to_syn_f_03} = \tikzfig{SZX_syn_to_syn_f_02} = \tikzfig{SZX_syn_to_syn_f_01}
    \end{equation} 
    thus guaranteeing again that any state stabilised by the X-stabilisers of $(C_\bul,C^\bul)$ will result in a state stabilised by the X-stabilisers of $(D_\bul,D^\bul)$.
\end{proof}

\begin{remark}
    The interpretation can be formalised as a functor $\Xi$ from the sub-category of $\Chains$ composed of length 2 chain complexes/maps to a sub-category of $\FHilb$ where objects are CSS codespaces and maps are phase-free ZX-diagrams. It is not hard to prove that $\Xi(f_\bul \circ g_\bul)=\Xi(f_\bul) \circ \Xi(g_\bul)$ and that $\Xi(id_\bul)=\tikzfig{SZX_wire}.$
\end{remark}

We now turn our attention to the proof of theorem~\ref{theorem:soundness} which states that the interpretation of the lifts of chain maps coincides with the logical operation induced by a Z-preserving code map.

\begin{proof}[Proof of Theorem~\ref{theorem:soundness}:] 
    In order to prove the equality, we need to explicit the link between the encoders $E_C$ and $E_D$ and the bases chosen to express $F_{1*}$.
    This can be done using prop.~\ref{prop:encoder-basis}; consider the unique choice of basis map $\varepsilon_C:\F_2^{k_C} \oplus \F_2^{k_C} \rightarrow H^1(C_\bul) \oplus H_1(C_\bul)$ (resp. $\varepsilon_D:\F_2^{k_D} \oplus \F_2^{k_D} \rightarrow H^1(D_\bul) \oplus H_1(D_\bul)$) corresponding to $E_C$ (resp. $E_D$). Because the encoders are type preserving (cf. remark~\ref{rem:type_preserving}) we can define $\varepsilon^X_C$ and $\varepsilon^Z_C$ to be
    \begin{align}
        \varepsilon_C^X : \ \F_2^{k_C} &\rightarrow H^1(C_\bul) & \varepsilon_C^Z : \ \F_2^{k_C} &\rightarrow H_1(C_\bul) \\
        u &\mapsto \varepsilon_C(u|0_{k_C})\restriction^{H^1(C_\bul)} & v &\mapsto \varepsilon_C(0_{k_C}|v)\restriction^{H_1(C_\bul)}
    \end{align}
    where there is a restriction on the codomain. We have the equivalent for $\varepsilon_D$. These restrictions can equivalently be regarded as 
    \begin{equation}
        \varepsilon=\left( \begin{array}{c|c}
            \varepsilon^X & 0 \\
            \hline 0 & \varepsilon^Z
        \end{array} \right)
    \end{equation}
    similarly to what is done for the split of the parity check matrix of a CSS code into $P_X$ and $P_Z$ \cref{eq:CSS_parity_check}. Now we can rewrite $F_{1*}$ as $F_{1*}={\varepsilon_D^Z}^{-1} \circ f_{1*} \circ \varepsilon_C^Z$.

    The strategy of the proof is to apply the Choi-Jamiołkowski isomorphism to both diagrams to get two quantum states $\ket{f}$ and $\ket{F}$ over $k_C+k_D$ qubits, and then show that both are stabilised by $k_C+k_D$ independent stabilisers, implying that these two states need to be the same. To find the stabilisers, we take advantage of the fact that Z-logical operators of $(C_\bul,C_\bul)$ (resp. X-logical operators of $(D_\bul,D_\bul)$) are mapped to Z-logical operators of $(D_\bul,D_\bul)$ (resp. to X-logial operators of $(C_\bul,C_\bul)$) via the physical interpretation of the chain map $f_\bul$.
    
    \sloppy \textit{Constructing Z-stabilisers:} 
    It is easy to see that the state $\ket{F}$ is stabilised by the group $\left\{Z^{\delta_i}\tens Z^{F_{1*}\delta_i} : i \in \llbracket1,k_C\rrbracket\right\}$ using the rules of SZX-calculus:
    \begin{equation}
        \tikzfig{proof_bent_FZ1}=\tikzfig{proof_bent_FZ2}
    \end{equation}
    
    Now consider $z$ to be a representative of $\varepsilon_C^Z(\delta_i) \in H_1(C_\bul)$. Then, by definition of $\varepsilon_C$ in prop.~\ref{prop:encoder-basis}
    \begin{align}
        \tikzfig{interpretation_proof_bent1}&=\tikzfig{interpretation_proof_bent2}=\tikzfig{interpretation_proof_bent3}.
    \end{align}
    Because $z \in \ker(\del_1^C)$ and that $f_\bul$ is a chain map, $f_1z \in \ker(\del_1^D)$ meaning it is a Z-logical operator of the code $(D_\bul,D_\bul)$. So, we can make it commute with the encoder as follows
    \begin{align}
        \tikzfig{interpretation_proof_bent3}&=\tikzfig{interpretation_proof_bent4}.
    \end{align}
    By definition of $f_{1*}$, we get ${\varepsilon_D^Z}^{-1}[f_1z]={\varepsilon_D^Z}^{-1}f_{1*}[z]={\varepsilon_D^Z}^{-1}f_{1*}\varepsilon_C^Z(\delta_i)=F_{1*}\delta_i$. Hence, both states are stabilised by the group $\left\{Z^{\delta_i}\tens Z^{F_{1*}\delta_i} : i \in \llbracket1,k_C\rrbracket\right\}$ containing $k_C$ independent Pauli operators.

    \textit{Constructing X-stabilisers:} 
    Similarly, on one hand, the state $\ket{F}$ is stabilised by the group $\left\{X^{F_*^1\delta_j}\tens Z^{\delta_j} : j \in \llbracket1,k_D\rrbracket\right\}$:
    \begin{equation}
        \tikzfig{proof_bent_FX1}=\tikzfig{proof_bent_FX2}.
    \end{equation}
    
    On the other hand, let $x$ be a representative of $\varepsilon_D^X(\delta_j) \in H^1(D_\bul)$, then
    \begin{equation}
        \tikzfig{interpretation_proof_bent4X}=\tikzfig{interpretation_proof_bent3X}=\tikzfig{interpretation_proof_bent2X}
    \end{equation}
    where $\del_\Omega^Tx=0$ because $x \in \ker \del^2_D$. Moreover, as $f^\bul$ is a cochain map, we get $f^1x \in \ker\del^2_C$. So, the X-Pauli commutes with the encoder to give:
    \begin{equation}
        \tikzfig{interpretation_proof_bent2X}=\tikzfig{interpretation_proof_bent1X}.
    \end{equation}
    Similarly, ${\varepsilon_C^X}^{-1}[f^1x]={\varepsilon_C^X}^{-1}f^1_*\varepsilon_D^X(\delta_j)$. Concluding that it corresponds to $F^1_*\delta_j$ is less straightforward than in the previous case, and requires to exploit the fact that $\varepsilon_C$ as well as $\varepsilon_D$ preserve the symplectic structure, i.e., the commutation relations. In the case of type-preserving encoder, it boils down to the equation ${\varepsilon^Z}^T\varepsilon^X=I_k$. Indeed, for all $u,v \in \F_2^k$,
    \begin{equation}
        \langle\varepsilon^Zu, \varepsilon^Xv\rangle=\langle u,v\rangle=\langle u,{\varepsilon^Z}^T\varepsilon^Xv\rangle
    \end{equation}
    where the scalar product is computed using prop.~\ref{canonical_bases}. Going back to the X-Pauli, we get ${\varepsilon_C^X}^{-1}f^1_*\varepsilon_D^X(\delta_j)=\left({\varepsilon^Z_C}^T\varepsilon^X_C\right){\varepsilon_C^X}^{-1}f^1_*\varepsilon_D^X\left({\varepsilon^Z_D}^T\varepsilon^X_D\right)^{-1}(\delta_j)={\varepsilon_C^Z}^Tf_*^T\left({\varepsilon_D^Z}^{-1}\right)^T(\delta_j)=F^1_*\delta_j$. Hence, both states are stabilised by the group $\left\{X^{F_*^1\delta_j}\tens Z^{\delta_j} : j \in \llbracket1,k_D\rrbracket\right\}$ containing $k_D$ independent Pauli operators.
    
\end{proof}

\section{A counter example} \label{ap:counterexample}

Consider the two codes $(C_\bul,C^\bul)$ and $(D_\bul,D^\bul)$ pictured by their Tanner graphs
\begin{equation}
  (C_\bul,C^\bul) : \vcenter{\hbox{\includegraphics[width=0.1\linewidth]{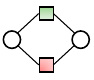}}} \quad \quad (D_\bul,D^\bul) : \vcenter{\hbox{\includegraphics[width=0.1\linewidth]{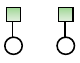}}}.
\end{equation}
Consider the chain map $f_\bul : C_\bul \rightarrow D_\bul$ defined on the following diagram

\begin{equation}\label{eq:counterexample}
    \begin{tikzcd}
        C_\bul \arrow[d,"f_\bul"] \\
        D_\bul
    \end{tikzcd}
    \ = \
    \begin{tikzcd}[ampersand replacement=\&]
        \F_2 
        \arrow[r, "{\scriptsize\left(\begin{smallmatrix}1 \\ 1\end{smallmatrix}\right)}"] 
        \arrow[d, "{\scriptsize\left(\begin{smallmatrix}1 \\ 1\end{smallmatrix}\right)}"'] 
        \& \F_2^2 
        \arrow[r, "{\scriptsize\left(\begin{smallmatrix}1 & 1\end{smallmatrix}\right)}"] 
        \arrow[d, "I_2"] 
        \& \F_2 
        \arrow[d, "0"] \\
        \F_2^2 
        \arrow[r, "I_2"] 
        \& \F_2^2 
        \arrow[r, "{\scriptsize\left(\begin{smallmatrix}0 & 0\end{smallmatrix}\right)}"] 
        \& 0
    \end{tikzcd}.
\end{equation}
It is clear that $f_2$ is not surjective. So if we omit the projection in the interpretation \cref{hilb_representation_chain},
because $f_1=I_2$, the physical implementation is:
\begin{equation}
    \tikzfig{SZX_id}=\tikzfig{SZX_wire}.
\end{equation}
Therefore, according to \cref{eq:counterexample}, applying the identity to a state stabilised by $(C_\bul,C^\bul)$ gives a state in $(D_\bul,D^\bul)$. In other words, each state in the codespace of $(C_\bul,C^\bul)$ should be in the codespace of $(D_\bul,D^\bul)$. This is obviously not the case as $\frac{1}{\sqrt{2}}(\ket{00}+\ket{11})$ belongs to the codespace of $(C_\bul,C^\bul)$ but not to that of $(D_\bul,D^\bul)$ as $Z_1\frac{1}{\sqrt{2}}(\ket{00}+\ket{11}) \neq \frac{1}{\sqrt{2}}(\ket{00}+\ket{11})$.

\section{Equivalent constructions} \label{ap:eq_construction}

The objective of this section is to prove the equivalence between the construction of the merges and splits in \cite{Cowtan_2024} and our construction. We start of by recalling the construction of \cite{Cowtan_2024}. We focus on the Z-merge as the proof of the equivalence for the X-merge follows with the same arguments.

\begin{definition}(Z-merge/X-split in \cite{Cowtan_2024}) \label{def:merge_split_cowtan}
    Let's consider two CSS codes $(C_\bul, C^\bul)$ and $(D_\bul, D^\bul)$. Let $V_\bul$ be a chain complex such that we have two chain maps $f_\bul$ and $g_\bul$ defined as
    \begin{equation}
        \begin{tikzcd}
            V_\bul \arrow[r, "f_\bul"] \arrow[d, "g_\bul"'] & C_\bul \\
            D_\bul & 
        \end{tikzcd}.
    \end{equation}
    Then, the \textbf{Z-merged code} $(Q_\bul,Q^\bul)$ is the result of the pushout
    \begin{equation}
        \begin{tikzcd}
            V_\bul \arrow[r, "g_\bul"]\arrow[d, "f_\bul"'] & D_\bul \arrow[d,"l_\bul"]\\
            C_\bul \arrow[r,"k_\bul"'] & Q_\bul\arrow[ul, phantom, "\usebox\pushout", very near start]
        \end{tikzcd}.
    \end{equation}
    The Z-preserving code map $(k - l)_\bul : (C \oplus D)_\bul \rightarrow Q_\bul$ is called a \textbf{Z-merge between $(C_\bul, C^\bul)$ and $(D_\bul, D^\bul)$ along $(V_\bul, V^\bul)$}. 
\end{definition}

The dual of the above definition provides the definition of an X-merge.
In the original construction of merges \cite{Cowtan_2024}, the apex of the span $V_\bul$ in def.~\ref{def:merge_split_cowtan} did not need to be a Z-subcode. Similarly, the chain maps $f_\bul$ and $g_\bul$ did not require to be projections. The following proposition proves that if the apex $V_\bul$ of the span is not a Z-subcode, then there exists another span meeting this requirement and producing the same Z-merged code and Z-merge.

\begin{proposition} \label{subcodes}
    Let $(C_\bul, C^\bul)$ and $(D_\bul, D^\bul)$ be two CSS codes. Consider the span $C_\bul \xleftarrow{f_\bul'} V_\bul' \xrightarrow{g_\bul'} D_\bul$ and the pushout 
    \begin{equation}
    \begin{tikzcd}
        V_\bul' \arrow[r, "g_\bul'"]\arrow[d, "f_\bul'"'] & D_\bul \arrow[d,"l_\bul"]\\
        C_\bul \arrow[r,"k_\bul"'] & Q_\bul\arrow[ul, phantom, "\usebox\pushout", very near start]
    \end{tikzcd}
    \end{equation}
    There exists a Z-subcode of $((C \oplus D)_\bul, (C \oplus   D)^\bul)$ such that the code $(Q_\bul, Q^\bul)$ is a Z-merged code.
\end{proposition}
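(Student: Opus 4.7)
The plan is to replace the apex $V_\bul'$ by its image inside $(C \oplus D)_\bul$ under the span, which will be a bona fide Z-subcode producing the same Z-merged code. The key observation is Theorem~\ref{theorem:stacks}, which translates the pushout square into the exact sequence
\begin{equation}
    \begin{tikzcd}
        V_\bul' \arrow[r, "f_\bul' \oplus g_\bul'"] & (C \oplus D)_\bul \arrow[r, "k_\bul - l_\bul"] & Q_\bul \arrow[r] & 0_\bul
    \end{tikzcd}.
\end{equation}
In particular, $(k - l)_\bul$ is surjective and $\ker((k-l)_\bul) = \im(f_\bul' \oplus g_\bul')$. This suggests the candidate subcode.

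First I would define $V_\bul$ degreewise by $V_n := \im(f_n' \oplus g_n') \subseteq (C \oplus D)_n$. Since $f_\bul' \oplus g_\bul'$ is a chain map, the boundary maps $\del_n^{C \oplus D}$ restrict to $V_n$ and land in $V_{n-1}$, so $V_\bul$ inherits a chain complex structure. Consequently, the inclusion $i_\bul : V_\bul \hookrightarrow (C \oplus D)_\bul$ is a chain map whose components are all inclusions, which is exactly the data defining a Z-subcode of $((C \oplus D)_\bul, (C \oplus D)^\bul)$ (cf.\ the explicit description following Definition~\ref{def:subcode}, and the remark after Proposition~\ref{H_merge} clarifying that the inclusion need not itself be a Z-preserving code map).

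Next I would invoke the first isomorphism theorem degreewise: since $(k-l)_n$ is surjective with kernel $V_n$, there is a canonical isomorphism $\varphi_n : (C \oplus D)_n / V_n \xrightarrow{\sim} Q_n$ such that $\varphi_n \circ p_n = (k-l)_n$, where $p_\bul$ is the quotient Z-merge along $V_\bul$ from Definition~\ref{def:quotientZmerge}. A routine diagram chase, using that $(k-l)_\bul$ is a chain map and that the boundary maps $\del_n^/$ on the quotient are induced from $\del_n^{C \oplus D}$, shows that the $\varphi_n$ assemble into a chain isomorphism $\varphi_\bul : ((C \oplus D)/V)_\bul \xrightarrow{\sim} Q_\bul$ intertwining $p_\bul$ and $(k-l)_\bul$. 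Hence $(Q_\bul, Q^\bul)$ is (up to canonical chain isomorphism) the Z-merged code of $((C \oplus D)_\bul, (C \oplus D)^\bul)$ along the Z-subcode $(V_\bul, V^\bul)$, as required.

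The only potential subtlety is ensuring that the construction is compatible on both the chain and cochain sides, i.e.\ that $V_\bul$ also determines a sensible $V^\bul$ so that $(V_\bul, V^\bul)$ is genuinely a CSS (sub)code. This is automatic here: the cochain complex $V^\bul$ is just the dual of $V_\bul$ in the sense of Section~\ref{sec:math_prelem}, and the transpose of $i_\bul$ is still a componentwise inclusion between cochain complexes, so the X-type structure is inherited without additional work. No irreducibility or positivity hypotheses are needed, which matches the spirit of the subcode framework.
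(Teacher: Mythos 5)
Your proposal is correct and takes essentially the same approach as the paper's Appendix~\ref{ap:eq_construction}: both replace the arbitrary apex $V_\bul'$ by the canonical subcomplex of $(C\oplus D)_\bul$ equal to $\ker(k_\bul - l_\bul) = \im(f_\bul'\oplus g_\bul')$ (the two descriptions coincide by exactness), and both derive the exact sequence from Theorem~\ref{theorem:stacks}. The only superficial difference is in packaging: the paper constructs $V_\bul$ as the pullback of $k_\bul$ and $l_\bul$ over $Q_\bul$ and then argues the resulting square is simultaneously a pullback and a pushout (hence uniquely determined), whereas you take the image of the span directly and conclude via the degreewise first isomorphism theorem plus a routine diagram chase that the induced $\varphi_\bul$ is a chain isomorphism — which is a slightly more elementary formulation of the same argument. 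Your remark about the inclusion $i_\bul$ not needing to be a Z-preserving code map correctly picks up a subtlety the paper flags only informally after Proposition~\ref{H_merge}.
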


\begin{proof}
    The claim is that there exists another span $C_\bul \xleftarrow{f_\bul} V_\bul \xrightarrow{g_\bul} D_\bul$---where $(V_\bul, V^\bul)$ is a Z-subcode of $((C \oplus D)_\bul, (C \oplus   D)^\bul)$ and $f_\bul$ and $g_\bul$ are projections onto $C_\bul$ and $D_\bul$, respectively –- leading to the same triple $(Q_\bul, k_\bul, l_\bul)$.
    Forgetting about $(V_\bul', f_\bul', g_\bul')$, we construct a new span by considering the following pullback
    \begin{equation} \label{eq:square}
        \begin{tikzcd}
            V_\bul \arrow[r, "g_\bul"]\arrow[d, "f_\bul"']  \arrow[rd, phantom, "\usebox\pullback", very near start] & D_\bul \arrow[d,"l_\bul"]\\
            C_\bul \arrow[r,"k_\bul"'] & Q_\bul
        \end{tikzcd}.
    \end{equation}
    As a result of a pullback \cite{Cowtan_2024}, $(V_\bul, V^\bul)$ is a Z-subcode of $((C \oplus D)_\bul, (C \oplus   D)^\bul)$, and for $n \in \N$ and $c+d \in C_n \oplus D_n$,
    \begin{equation}f_n(c+d) = c \quad ;\quad g_n(c+d) = d\end{equation}
    Furthermore, the pullback square above is also a pushout. In fact, according to theorem ~\ref{theorem:stacks}, the first pushout gives the exact sequence
    \begin{equation}\begin{tikzcd}
        V_\bul' \arrow[r, "f_\bul' \oplus g_\bul'"] & (C\oplus D)_\bul \arrow[r, "k_\bul - l_\bul"] & Q_\bul \arrow[r] & 0_\bul
    \end{tikzcd}.\end{equation}
    As a pushout is unique up to isomorphism, we have $\left((C\oplus D)/V\right)_\bul \cong Q_\bul$. Moreover, the pullback ensures that the following sequence is also exact
    \begin{equation}\begin{tikzcd}
        0 \arrow[r] & V_\bul \arrow[r, "f_\bul \oplus g_\bul"] & (C\oplus D)_\bul \arrow[r, "k_\bul - l_\bul"] & Q_\bul 
    \end{tikzcd}.\end{equation} 
    Hence, the square in \cref{eq:square} is both a pushout and a pullback.
\end{proof}

The same reasoning goes for an X-merge with X-subcodes. 

\section{Relation between merges and quotient merges}\label{ap:merge-quotient}

This appendix aims to establish a connection between merges and quotient merges. Specifically, we show that any merge can be decomposed into a quotient merge followed by an isomorphism, as formalised in the following proposition.

\begin{proposition}
    Consider a Z-merge $p_\bul:E_\bul \rightarrow Q_\bul$. There exists a Z-subcode $(V_\bul,V^\bul)$ such that $p_\bul=\sigma_\bul \circ \Tilde{p}_\bul$, where $\Tilde{p}_\bul:E_\bul \rightarrow (E/V)_\bul$ is a quotient Z-merge and $\sigma_\bul : (E/V)_\bul \rightarrow Q_\bul$ is an isomorphic chain map, i.e., each map $\sigma_n$ for $n \in \{0,1,2\}$ is an isomorphism.
\end{proposition}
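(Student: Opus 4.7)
The plan is to use the standard first isomorphism theorem, applied degreewise to the chain complex $E_\bullet$. The natural candidate for the subcode is $V_n \coloneqq \ker(p_n)$ at each degree $n \in \Z$, and for $\sigma_\bullet$ the map induced by $p_\bullet$ on the quotient.

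First, I would verify that $V_\bullet$ is indeed a Z-subcode of $(E_\bullet, E^\bullet)$ in the sense of def.~\ref{def:subcode}. The $V_n$ are by construction subspaces of $E_n$ (as kernels of linear maps), so the remaining subcode condition to check is $\partial_n^E(V_n) \subseteq V_{n-1}$. This follows from the fact that $p_\bullet$ is a chain map: if $x \in V_n = \ker(p_n)$, then
\begin{equation}
  p_{n-1}(\partial_n^E x) = \partial_n^Q(p_n(x)) = \partial_n^Q(0) = 0,
\end{equation}
so $\partial_n^E x \in \ker(p_{n-1}) = V_{n-1}$. Thus $V_\bullet$ inherits boundary maps from $E_\bullet$ and the inclusion $i_\bullet : V_\bullet \hookrightarrow E_\bullet$ is a chain map of length two whose components are inclusions; in particular $i_2$ is an inclusion, hence $V_\bullet$ is a Z-subcode, and def.~\ref{def:quotientZmerge} produces a quotient Z-merge $\tilde{p}_\bullet : E_\bullet \to (E/V)_\bullet$.

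Next I would build $\sigma_\bullet$ via the universal property of the quotient in $\MatF$. For each $n$, define $\sigma_n : E_n/V_n \to Q_n$ by $\sigma_n([x]) \coloneqq p_n(x)$. This is well-defined because $V_n = \ker(p_n)$, is injective for the same reason, and is surjective because $p_n$ is surjective (as $p_\bullet$ is a Z-merge, each $p_n$ is surjective by definition). Hence each $\sigma_n$ is an isomorphism. To see that $\sigma_\bullet$ is a chain map, note that for $[x] \in E_n/V_n$,
\begin{equation}
  \sigma_{n-1}(\partial_n^{/}[x]) = \sigma_{n-1}([\partial_n^E x]) = p_{n-1}(\partial_n^E x) = \partial_n^Q(p_n(x)) = \partial_n^Q(\sigma_n([x])),
\end{equation}
using the defining relation $\partial_n^{/}[x] = [\partial_n^E x]$ of def.~\ref{def:quotientZmerge} and the chain map property of $p_\bullet$.

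Finally, the factorisation $p_\bullet = \sigma_\bullet \circ \tilde{p}_\bullet$ is immediate from the definitions: $\sigma_n(\tilde{p}_n(x)) = \sigma_n([x]) = p_n(x)$. There is no real obstacle here beyond checking the bookkeeping above; the only subtle point is noticing that surjectivity of each $p_n$ (not merely $p_2$, which would give a Z-preserving code map) is what makes each $\sigma_n$ a bijection and hence $\sigma_\bullet$ an isomorphic chain map, as claimed.
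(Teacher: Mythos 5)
Your proof is correct and follows essentially the same route as the paper's: take $V_n = \ker(p_n)$ degreewise, observe that the chain-map property of $p_\bullet$ makes this a Z-subcode, factor $p_n$ through the quotient via the first isomorphism theorem to get each $\sigma_n$, and check that $\sigma_\bullet$ is a chain map. You merely spell out the diagram chase that the paper leaves implicit, and you also highlight — usefully, and in line with the paper's own definition of a Z-merge requiring all $p_n$ surjective — exactly where full surjectivity enters.
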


\begin{proof}
    It is straightforward to verify that $(E/V)_\bul$ is in fact a valid chain complex and that the boundaries are well defined, since $(V_\bul,V^\bul)$ is a Z-subcode of $(E_\bul,E^\bul)$. We give the construction of $V_\bul$ and $\sigma_\bul$. The rest of the proof is essentially a diagram chasing.
    
    Choose $V_\bul=\ker p_\bul (= \ker \Tilde{p}_\bul)$. Concretely, $V_\bul$ is defined for all $n \in \{0,1,2\}$ as $V_n=\ker p_n$ and $\del_n^V=\del_n^E \restriction_{V_n}$. We can verify straightforwardly that $V_\bul$ is a well defined Z-subcode by playing diagram chasing on 
    \begin{equation}
        \begin{tikzcd}
            \ker p_2 \arrow[d, hook] & \ker p_1 \arrow[d, hook] & \ker p_0 \arrow[d, hook] \\
            E_2 \arrow[d, "p_2"] \arrow[r,"\del_2^E"] & E_1 \arrow[d, "p_1"] \arrow[r,"\del_1^E"] & E_0 \arrow[d, "p_0"] \\
            Q_2 \arrow[r,"\del_2^Q"] & Q_1 \arrow[r,"\del_1^Q"] & Q_0 
        \end{tikzcd}.
    \end{equation}
    So, we can construct the quotient Z-merge $\Tilde{p}_\bul:E_\bul \rightarrow (E/V)_\bul$. The second step is to construct $\sigma_\bul$. To do so, as for each $n \in \{0,1,2\}$, $V_n=\ker p_n$, we use the decomposition
    \begin{equation}
        \begin{tikzcd}
            E_n \arrow[r,"\Tilde{p}_n"] \arrow[dr, "p_n"'] & E_n/V_n \arrow[d,"\sigma_n"] \\
             & Q_n 
        \end{tikzcd}
    \end{equation}
    where $\sigma_n$ is the isomorphism defined as $\sigma_n([x]):=p_n(x)$. Finally, as $\Tilde{p}_\bul$ is a chain map as well as $p_\bul$, it is straightforward to verify that $\sigma_\bul$ is a valid chain map.

    All the constructions we made can be recast in a short exact sequence:
    \begin{equation}
      \begin{tikzcd}
         0_\bul \arrow[r] & \ker p_\bul \arrow[r, hook, "i_\bul"] & E_\bul \arrow[dr,"p_\bul"'] \arrow[r, "\Tilde{p}_\bul"] & (E/V)_\bul \arrow[r] & 0_\bul \\
         & & & Q_\bul \arrow[u,"\sigma_\bul"'] &
      \end{tikzcd}
    \end{equation}
    where $i_\bul$ is simply an inclusion.
\end{proof}

\section{Steane Reed-Muller code switching} \label{appendix:code_switching}

The code switching described here is formally identical to that in \cite{heußen2024efficientfaulttolerantcodeswitching}, but reinterpreted through our framework, which simplifies tracking the logical operation.
The analysis is very similar to the one that has been done in \cref{sec:naive_approach}. We encourage the reader to first read this section before reading this appendix.

The set up is the following: consider the (7-qubits) Steane code $(C_\bul, C^\bul)$. We are aiming to turn this code into the 15-qubits Reed-Muller code in which the $T$ gate is transversal. To do so, we propose to use CSS surgery. The switch from the Steane code to the Reed-Muller code goes in two steps: 
\begin{itemize}
    \item First instantiate the auxiliary code $(A_\bul, A^\bul)$ to be a 15-qubits Reed-Muller code in the logical $\ket{+}$ state.
    \item Perform a Z-merge (which simply consists in a gluing in that case) to identify a face of the auxiliary 15-qubits Reed-Muller code $(A_\bul, A^\bul)$ and of the Steane code $(C_\bul, C^\bul)$, as pictured \cref{fig:steane_reed-muller}.
\end{itemize}
The switch back is achieved by performing the corresponding X-split and measuring the logical auxiliary qubit encoded in the 15-qubit Reed-Muller code $(A_\bul, A^\bul)$ in the X-basis, with a potential correction applied if necessary. The step-by-step logical operations implemented by the switch---both ultimately reducing to the identity---are illustrated in \cref{fig:logop_code_switch}.

\begin{figure}
    \centering
    \includegraphics[width=1\linewidth]{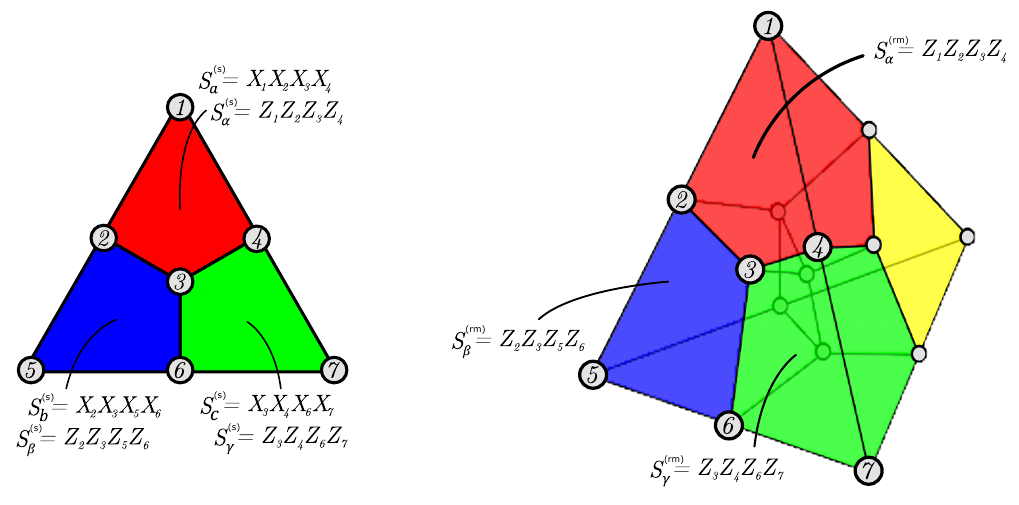}
    \caption{\textbf{Steane and 15-qubits Reed-Muller Tanner graph representations.} 
    In the Steane code (to the left), each face corresponds to two checks---an X- and a Z-type---acting on the neighbouring qubits. It is similar for the 15-qubits Reed-Muller code (to the right) where each coloured face corresponds a Z-check acting on the 4 neighbouring qubits, while each coloured cell corresponds to an X-check acting on the 8 neighbouring qubits (these are not written on the figure).}
    \label{fig:steane_reed-muller}
\end{figure}

\begin{figure}
    \centering
    \includegraphics[width=0.15\linewidth]{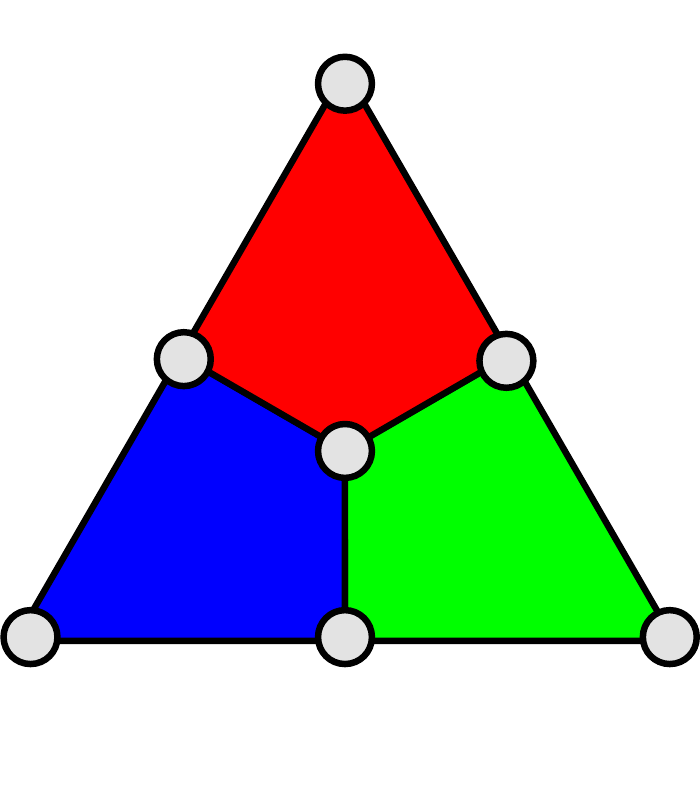}
    \quad
    \tikzfig{figures/code-switch-1}
    \quad
    \includegraphics[width=0.15\linewidth]{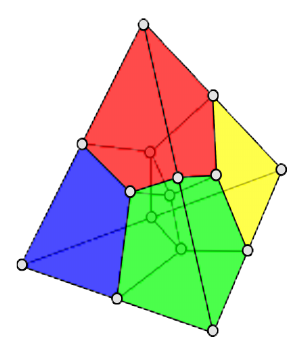}
    \quad
    \tikzfig{figures/code-switch-2}
    \quad
    \includegraphics[width=0.15\linewidth]{figures/steane_code_colour.pdf}
    \caption{
    \textbf{Switch between the Steane code and the 15-qubits Reed-Muller code.}
    The ZX-diagram on the left-hand side illustrates the step-by-step logical operations involved in transitioning to the 15-qubit Reed-Muller code. As expected in code switching, this logical operation ultimately reduces to the identity. The logical operation for switching back is simply the transpose of the previous one, which also results in the identity.
    }
    \label{fig:logop_code_switch}
\end{figure}

We now focus our attention on the Z-merge. Again, the main difficulty consists in choosing a Z-subcode such that the associated Z-merge induces the desired logical operation \cref{fig:logop_code_switch} and such that the Z-merged code is, in fact, a 15-qubits Reed-Muller code. The Z-subcode $(V_\bul,V^\bul)$ is defined to be the pairwise qubits and pairwise checks (connected to the same qubits) of the Steane code and of a face of the Reed-Muller code:
\begin{equation} \label{eq:subcode_switch}
    V_\bul \ : \ \Span \left\{s_x^{(s)}+s_x^{(rm)}\right\}_{x \in \{\alpha,\beta,\gamma\}} \longrightarrow \Span \left\{z_i^{(s)}+z_i^{(rm)}\right\}_{1\leq i \leq 7} \longrightarrow \Span \left\{s_y^{(s)}+s_y^{(rm)}\right\}_{y \in \{a,b,c\}}
\end{equation}
where the labelling of the qubits and checks is given \cref{fig:steane_reed-muller}, except for $s_a^{(rm)}$, $s_b^{(rm)}$ and $s_c^{(rm)}$ that correspond to the red, blue and green cell X-checks, respectively. 

The Z-subcode is isomorphic to a Steane code. Consequently, since the X-parity matrix of the Steane code is surjective, we get $H_0(V_\bul)=0$. This allows us to leverage \cref{sec:H_-1=0} for analyzing the logical operation.
As the vectors in the spans \cref{eq:subcode_switch} consist in sums of canonical vectors pairwise, quotienting out $(C \oplus A)_n$ by $V_n$ for $n \in \{0,1,2\}$ results in identifying every Z-operator and check pairwise. Thus, the Z-merge in this case reduces to a simple gluing operation, i.e., a code welding \cite{michnicki20123dquantumstabilizercodes}.
Reasoning with the Tanner graphs, it is clear that the Z-merged code $(Q_\bul,Q^\bul)$ is isomorphic to a 15-qubit Reed-Muller code, as intended.

Both the Steane code and the 15-qubits Reed-Muller code contain a single logical qubit, and therefore a single Z-logical operator that we call $[z_s] \in H_1(C_\bul)$ and $[z_{rm}] \in H_1(A_\bul)$, respectively. Note that $[z_s]+[z_{rm}] \in H_1(V_\bul)$ is a non-trivial Z-logical operator of $(V_\bul, V^\bul)$ once embedded in the code $((C\oplus A)_\bul,(C\oplus A)^\bul)$ via the inclusion map $i_{1*}$. Hence, $\im i_{1*}=\Span \{[z_s]+[z_{rm}]\}$. So, according to \cref{sec:H_-1=0}, 
\begin{equation}
    H_1(Q_\bul)\cong H_1((C \oplus A)_\bul)/\im i_{1*}=\Span \{[z_s],[z_{rm}]\} / \Span \{[z_s]+[z_{rm}]\}.
\end{equation}
As a consequence, the projection $p_{1*} : H_1((C \oplus A)_\bul) \rightarrow H_1(Q_\bul)$ is always represented by $P_{1*}=\begin{pmatrix}
    1 & 1
\end{pmatrix}$, regardless of the choice of bases. Hence, the logical operation
\begin{equation}
    \tikzfig{p_0star}=\tikzfig{Z-spider_K_l} \ .
\end{equation}

\section{Non optimality of split codes regarding the Singleton bound} \label{ap:singleton_bound}

Recall that the claim made in \cref{sec:ft} is that regardless of the choices of two CSS codes $(C_\bul,C^\bul)$ and $(A_\bul,A^\bul)$, the code $((C\oplus A)_\bul,(C\oplus A)^\bul)$ cannot be an MDS code, i.e., cannot saturate the quantum Singleton bound (in the qubit case) that applies to any stabiliser code $[[n,k,d]]$ \cite{rains1997nonbinaryquantumcodes}:
\begin{equation} \label{eq:singleton_bound}
    n-k \geq 2(d-1).
\end{equation}
We denote $[[n_A,k_A,d_A]]$, $[[n_C,k_C,d_C]]$ and $[[n_{C\oplus A},k_{C\oplus A},d_{C\oplus A}]]$ the code parameters of $(A_\bul,A^\bul)$, $(C_\bul,C^\bul)$ and $((C\oplus A)_\bul,(C\oplus A)^\bul)$, respectively. According to def.~\ref{def:direct_sum}, it is straightforward to see that these quantities are linked by the equations
\begin{equation}
    n_{C\oplus A}=n_C + n_A \quad ; \quad k_{C\oplus A}=k_C + k_A \quad ; \quad d_{C\oplus A}=\min (d_C,d_A).
\end{equation}
Applying the Singleton bound to the codes $(C_\bul,C^\bul)$ and $(A_\bul,A^\bul)$ gives:
\begin{equation}
    \left\{\begin{array}{c}
        n_C-k_C \geq 2(d_C-1) \\
        n_A-k_A \geq 2(d_A-1)
    \end{array}\right. \quad \Rightarrow \quad n_{C\oplus A}-k_{C\oplus A} \geq 2(d_C+d_A-1).
\end{equation}
Hence, because $d_C+d_A > \min (d_C,d_A)$, we have the strict inequality $n_{C\oplus A}-k_{C\oplus A} >2(d_{C\oplus A}-1)$. Thus, the code $((C\oplus A)_\bul,(C\oplus A)^\bul)$ can never claim to saturate the Singleton bound \cref{eq:singleton_bound}.

\end{document}